   \newcommand{\DCGVer}[1]{}
   \newcommand{\NotDCGVer}[1]{#1}
   \newcommand{\DCGVer}[1]{#1}
   \newcommand{\NotDCGVer}[1]{}
\newcommand{\InHsienRadical}[1]{}%
\newcommand{\NotInHsienRadical}[1]{#1}%
  \newcommand{\InHsienRadical}[1]{#1}%
  \newcommand{\NotInHsienRadical}[1]{}%
   \renewcommand{\qed}{\hfill$\square$}
   \numberwithin{figure}{section}%
   \numberwithin{table}{section}%
   \numberwithin{equation}{section}%
   \theoremstyle{plain}
   \newtheorem{theorem}{Theorem}[section]
   \newtheorem{lemma}[theorem]{Lemma}%
   \newtheorem{corollary}[theorem]{Corollary}
   \newtheorem{observation}[theorem]{Observation} %
   \newcommand{\myqedsymbol}{\rule{2mm}{2mm}}
\newcommand{\naive}{na\"ive\xspace}%
\newcommand{\si}[1]{#1}
\newcommand{\query}{\mathsf{x}}
\newcommand{\pnt}{\mathsf{p}}
\newcommand{\pntA}{\mathsf{q}}
\newcommand{\BclX}[1]{\mathsf{B}\pth{#1}}
\newcommand{\dist}[2]{\left\| #1 - #2 \right\| }
\newcommand{\PntSet}{\mathsf{P}}
\newcommand{\proxySet}[3]{{\Phi}_{#1}(#2, #3)}
\newcommand{\candid}{\mathsf{C}}
\newcommand{\HLinkShort}[2]{\hyperref[#2]{#1\ref*{#2}}}
\newcommand{\HLink}[2]{\hyperref[#2]{#1~\ref*{#2}}}
\newcommand{\HLinkSuffix}[3]{\hyperref[#2]{#1\ref*{#2}{#3}}}
\newcommand{\HLinkPage}[2]{\hyperref[#2]{#1~\ref*{#2}%
      $_\text{p\pageref{#2}}$}}
\newcommand{\figlab}[1]{\label{fig:#1}}
\newcommand{\figref}[1]{\HLink{Figure}{fig:#1}}
\newcommand{\apndlab}[1]{\label{apnd:#1}}
\newcommand{\apndref}[1]{\HLink{Appendix}{apnd:#1}}
\providecommand{\deflab}[1]{\label{def:#1}}
\newcommand{\defref}[1]{\HLink{Definition}{def:#1}}
\newcommand{\lemlab}[1]{\label{lemma:#1}}
\newcommand{\lemref}[1]{\HLink{Lemma}{lemma:#1}}%
\newcommand{\seclab}[1]{\label{sec:#1}}
\newcommand{\secref}[1]{\HLink{Section}{sec:#1}}
\newcommand{\corlab}[1]{\label{cor:#1}}
\newcommand{\corref}[1]{\HLink{Corollary}{cor:#1}}%
\newcommand{\thmlab}[1]{{\label{theo:#1}}}
\newcommand{\thmref}[1]{\HLink{Theorem}{theo:#1}}
\providecommand{\eqlab}[1]{}%
\renewcommand{\eqlab}[1]{\label{equation:#1}}
\newcommand{\Eqref}[1]{\HLinkSuffix{Eq.\ (}{equation:#1}{)}}
\newcommand{\etal}{\textit{et~al.}\xspace}
\newcommand{\atmostK}[3]{\mathsf{N}_{#2}\pth{#3, #1}}
\newcommand{\Set}[2]{\left\{ #1 \,\middle|\, #2 \right\}}
\newcommand{\brc}[1]{\left\{ {#1} \right\}}
\newcommand{\cardin}[1]{\left| {#1} \right|}
\newcommand{\pth}[1]{\mleft({#1}\mright)}
\newcommand{\pbrcx}[1]{\left[ {#1} \right]}
\newcommand{\ExChar}{\mathbf{E}}
\newcommand{\Ex}[2][\!]{{\ExChar}#1\pbrcx{#2}}
\newcommand{\ExOver}[2]{{{\ExChar}_{#1}}\!\pbrcx{#2}}
\newcommand{\ExCond}[3][\!]{{\ExChar} #1%
   \left[ #2 \;\middle|\; #3 \right]}
\newcommand{\ExOverCond}[3]{%
   {\ExChar}_{#1}\!\left[ #2 \;\middle\vert\; #3 \right]}
\newcommand{\Ex}[2][\!]{\mathop{\ExChar}#1\pbrcx{#2}}
\newcommand{\ExOver}[2]{{{\ExChar}_{#1}}\!\pbrcx{#2}}
\newcommand{\ExCond}[3][\!]{\mathop{\ExChar} #1%
   \left[ #2 \;\middle|\; #3 \right]}
\newcommand{\ExOverCond}[3]{%
   {\ExChar}_{#1}\!%
   \left[ #2 \;\middle\vert\; #3 \right]}
\newcommand{\Prob}[1]{{\mathbf{Pr}}\!\pbrcx{#1}}
\newcommand{\sep}[1]{\,\big.\left|\, {#1} \right.}
\newcommand{\permut}[1]{\left\langle {#1} \right\rangle}%
\newcommand{\PermutSet}{\smash{\mathrm{\hat\Pi}}}%
\newcommand{\Permut}{\tau}%
\newcommand{\PermutE}{\tau}%
\newcommand{\PermutSet}{\Xi}%
\newcommand{\Permut}{{\tau}}%
\newcommand{\PermutE}{\tau}%
\newcommand{\SeqSetY}[2]{\pth{#1}_{#2}}%
\newcommand{\object}{\tau}
\newcommand{\SiteSet}{\mathsf{S}}
\newcommand{\element}{\mathsf{s}}
\newcommand{\site}{\mathsf{s}}
\DeclareFontFamily{OT1}{pzc}{}
\DeclareFontShape{OT1}{pzc}{m}{it}{<-> s * [1.10] pzcmi7t}{}
\DeclareMathAlphabet{\mathpzc}{OT1}{pzc}{m}{it}
	\newcommand{\stair}{\mathsf{St}}
\newcommand{\stair}{\!\raisebox{-0.4ex}{%
      {{\includegraphics[height=2.3ex]{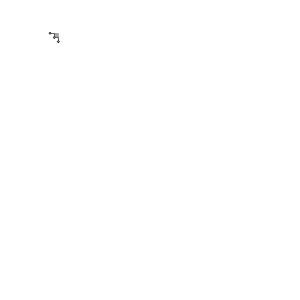}}}%
   }%
   \hspace*{-1.6pt}%
}}
\newcommand{\volX}[1]{\mathrm{vol}\pth{#1}}
\newcommand{\VolUArea}[2]{F_{#1}\pth{#2}}
\newcommand{\ElemSet}{\mathsf{S}}
\newcommand{\RSample}{\mathsf{R}}
\newcommand{\stairX}[1]{\stair\pth{#1}}%
\newcommand{\OPntSet}{\mathsf{Q}}%
\newcommand{\Property}{\mathcal{P}} %
\newcommand{\PSetX}[1]{\Property\pth{#1}}%
\newcommand{\valsY}{\mathsf{Y}}
\newcommand{\dInt}{\mathrm{d}}
\newcommand{\PntSetA}{\mathsf{X}}
\newcommand{\DefSet}[1]{D\pth{#1}}
\newcommand{\KillSet}[1]{K\pth{#1}}
\renewcommand{\th}{\si{th}\xspace}
\newcommand{\eps}{\varepsilon}%
\renewcommand{\Re}{\mathbb{R}}
\definecolor{blue25}{rgb}{0,0,0.55}%
\newcommand{\emphic}[2]{%
   \textcolor{blue25}{%
      \textbf{\emph{#1}}}%
   \index{#2}}
	\definecolor{DarkRed}{rgb}{0.50,0.00,0.00}
	\def\emphi#1{\textcolor{DarkRed}{{\emph{#1}}}}
	\newcommand{\emphi}[1]{\emphic{#1}{#1}}
\newcommand{\atgen}{\symbol{'100}}%
   \newcommand{\SarielThanks}[1]{%
      \thanks{%
         Department of Computer Science, %
         University of Illinois; %
         201 N.\ Goodwin Avenue, %
         Urbana, IL 61801, USA; %
         {\tt \si{sariel}\atgen{}\si{uiuc.edu}}; %
         {\tt \url{sarielhp.org}}.%
         #1%
      }%
   }
\newcommand{\HsienThanks}[1]{%
   \thanks{%
      Department of Computer Science, %
      University of Illinois; %
      {\tt \si{hchang17}\atgen{}\si{illinois}.\si{edu}}; %
      {\tt \url{illinois.edu/\string~hchang17}}.%
      #1%
   }%
}
\newcommand{\BenThanks}[1]{%
   \thanks{%
      Department of Computer Science, %
      University of Texas at Dallas; %
      800 W. Campbell Rd., MS EC-31,
      Richardson TX 75080, USA; %
      {\tt \si{benjamin.raichel}\atgen{}\si{utdallas}.\si{edu}}; %
      {\tt \url{\si{utdallas.edu/\string~\si{bar150630}}}}%
      . %
      #1%
   }
}
\newcommand{\ObjX}[1]{\mathcal{T}\pth{#1}}
\newcommand{\remove}[1]{}
\newcommand{\attribB}{\beta}
\newcommand{\Distribution}{\mathcal{D}}
\newcommand{\polylog}{\operatorname{polylog}}
   \newcommand{\Event}{E}
   \newcommand{\EventA}{\mathcal{E}}
   \newcommand{\EventB}{\mathcal{F}}
   \newcommand{\Event}{E}
   \newcommand{\EventA}{E}
   \newcommand{\EventB}{F}
\newcommand{\env}{\mathrm{env}}%
\newcommand{\kdistSet}[3]{\mathsf{d}_#1\pth{#2, #3}}%
\newcommand{\plane}{h}
\newcommand{\HPlanes}{\mathsf{H}}%
\newcommand{\bclX}[1]{\mathsf{b}_{#1}}%
\newcommand{\Verts}{\mathsf{V}}%
\newcommand{\VLvl}[2]{\Verts_{#1}\pth{#2}}
\newcommand{\vertex}{v}%
\newcommand{\vertexA}{u}%
\newcommand{\Edges}{\mathsf{E}}%
\newcommand{\edge}{e}%
\newcommand{\planeLevel}[2]{\Edges_{#1}\pth{#2}}
\newcommand{\storage}{space\xspace}
	\newcommand{\newK}{k'}
	\newcommand{\newK}{\Bbbk}
\providecommand{\Property}{\Pi}
\providecommand{\SetX}{\mathsf{X}}
\providecommand{\kdistSet}[3]{\mathsf{d}_{#1}(#2, #3)}%
\providecommand{\Owhp}{ {O_{\text{whp}}} }
\providecommand{\pv}{\mathrm{pv}}
\providecommand{\VolUArea}[2]{F_{#1}\!\Paren{#2}}
\providecommand{\env}{\mathrm{env}}
\providecommand{\envSet}[3]{{\env}_{#1}(#2, #3)}%
\providecommand{\Edges}{\mathsf{E}}
\providecommand{\planeLevel}[2]{\Edges_{#1}({#2})}
\providecommand{\DefSet}{\mathrm{Def}}
\providecommand{\KillSet}{\mathrm{Stop}}
\providecommand{\Event}{E}
	   \def\SarielDefStyle{TRUE}
      \theoremstyle{remark}
\newtheorem{remark}[theorem]{Remark}}
\newtheorem{defn}[theorem]{Definition}}
\newtheorem{example}[theorem]{Example}}
      \theoremstyle{plain}%
      \newtheorem{defn}[theorem]{Definition}
      \newtheorem*{remark:unnumbered}{Remark}%
      \newtheorem{remark}[theorem]{Remark}%
   \spnewtheorem{defn}[theorem]{Definition}{\itshape}{\rmfamily}
   \spnewtheorem{observation}[theorem]{Observation}{\itshape}{\rmfamily}
   \theoremstyle{nonumberplain}
   \newtheorem{proof}{Proof:}
\newcommand{\myparagraph}[1]{\paragraph{#1}}
\newcommand{\sfx}{\sigma}%
\newcommand{\sfx}{\mathbf{\sigma}}%
\newcommand{\SuffixY}[2]{{#1}_{#2}^n}
	\newcommand*\patchAmsMathEnvironmentForLineno[1]{%
	 \expandafter\let\csname old#1\expandafter\endcsname\csname #1\endcsname
	 \expandafter\let\csname oldend#1\expandafter\endcsname\csname end#1\endcsname
	 \renewenvironment{#1}%
	 {\linenomath\csname old#1\endcsname}%
	 {\csname oldend#1\endcsname\endlinenomath}}% 
	\newcommand*\patchBothAmsMathEnvironmentsForLineno[1]{%
	 \patchAmsMathEnvironmentForLineno{#1}%
	 \patchAmsMathEnvironmentForLineno{#1*}}%
\begin{document}

\title{From Proximity to Utility: %
   \NotDCGVer{\\}%
   A Voronoi Partition of Pareto Optima%
   \NotDCGVer{%
      \footnote{%
         Work on this paper was partially supported by NSF AF award
         CCF-1421231, and % Started June 2014
         CCF-1217462.  % Started June 2012
         A preliminary version of the paper appeared in the 31st
         International Symposium on Computational Geometry (SoCG 2015)
         \cite{chr-fpuvp-15}.  The paper is also available on the
         arXiv \cite{chr-fpuvp-14}. %
      }%
   }%
}

\author{%
   Hsien-Chih Chang%
   \NotDCGVer{%
      \HsienThanks{}%
   }
   \and%
   Sariel Har-Peled%
   \NotDCGVer{%
      \SarielThanks{}%
   }
   \and%
   Benjamin Raichel%
   \NotDCGVer{%
      \BenThanks{}%
   } }

\DCGVer{%
   \institute{%
      H.-C.\ Chang%
      \and S.\ Har-Peled%
      \at%
      Department of Computer Science; University of Illinois; 201 N.\
      Goodwin Avenue; Urbana, IL, 61801, USA.%\\
      \and%
      B.\ Raichel%
      \at Department of Computer Science; %
      University of Texas at Dallas; %
      800 W.\ Campbell Rd., MS EC-31; Richardson, TX 75080; USA. %
   }%
}

\date{\today}

\remove{%
   \author[1]{Hsien-Chih Chang}%
   \author[1]{Sariel Har-Peled}%
   \author[1]{Benjamin Raichel}%
   \affil[1]{%
      Department of Computer Science, %
      University of Illinois\\ %
      201 N. Goodwin Avenue, %
      Urbana, IL, 61801, USA.\\ %
      \texttt{{hchang17}@\si{illinois}.\si{edu}}, %
      \texttt{{sariel}@\si{illinois}.\si{edu}}, %
      \texttt{\si{raichel}2@\si{illinois}.\si{edu}} %
   }%
}

\remove{%
   \authorrunning{H.-C.\ Chang, S.\ Har-Peled and B.\ Raichel}
   
   \Copyright{Hsien-Chih Chang, Sariel Har-Peled and Benjamin
      Raichel}%

   \subjclass{F.2.2, I.1.2, I.3.5}%
   %
   % mandatory: Please choose ACM 1998 classifications from
   % http://www.acm.org/about/class/ccs98-html . E.g., cite
   % as "F.1.1 Models of Computation".
   \keywords{Voronoi diagrams, expected complexity, backward analysis,
      Pareto optima, candidate diagram, Clarkson-Shor technique.}

   \serieslogo{}%please provide file name (without suffix)
   \volumeinfo%(easy chair interface)
   {}% editors
   {2}% number of editors: 1, 2, ....
   {Conference title on which this volume is based on}% event
   {1}% volume
   {1}% issue
   {1}% starting page number
   \EventShortName{}
   \DOI{10.4230/LI{}PI{}cs.xxx.yyy.p}% to be completed by the volume editor
}

\maketitle

\begin{abstract}
    We present an extension of Voronoi diagrams where when considering
    which site a client is going to use, in addition to the site
    distances, other site attributes are also considered (for example, prices
    or weights).  A cell in this diagram is then the locus of all
    clients that consider the same set of sites to be relevant.  In
    particular, the precise site a client might use from this
    candidate set depends on parameters that might change between
    usages, and the candidate set lists all of the relevant sites. The
    resulting diagram is significantly more expressive than Voronoi
    diagrams, but naturally has the drawback that its complexity, even
    in the plane, might be quite high.  Nevertheless, we show that if
    the attributes of the sites are drawn from the same distribution
    (note that the locations are fixed), then the expected complexity
    of the candidate diagram is near linear.

    To this end, we derive several new technical results, which are 
    of independent interest.  In particular, we provide a high-probability, 
    asymptotically optimal bound on the number of Pareto optima points 
    in a point set uniformly sampled from the $d$-dimensional 
    hypercube.  To do so we revisit the classical backward analysis 
    technique, both simplifying and improving relevant results in order 
    to achieve the high-probability bounds.
\end{abstract}

%%%%%%%%%%%%%%%%%%%%%%%%%%%%%%%%%%%%%%%%%%%%%%%%%%%%%%%%%%%%%%%%%%%%%% 
%%%%%%%%%%%%%%%%%%%%%%%%%%%%%%%%%%%%%%%%%%%%%%%%%%%%%%%%%%%%%%%%%%%%%% 

%%%%%%%%%%%%%%%%%%%%%%%%%%%%%%%%%%%%%%%%%%%%%%%%%%%%%%%%%%%%%%%%%%%%%% 
%%%%%%%%%%%%%%%%%%%%%%%%%%%%%%%%%%%%%%%%%%%%%%%%%%%%%%%%%%%%%%%%%%%%%% 

\section{Introduction}

\myparagraph{Informal
   description of the candidate diagram.} %
Suppose you open your refrigerator one day to discover it is time to
go grocery shopping.%
\footnote{Unless you are feeling adventurous enough that day to eat
   the frozen mystery food stuck to the back of the freezer, which we
   strongly discourage you from doing.}  Which store you go to will be
determined by a number of different factors.  For example, what items
you are buying, and do you want the cheapest price or highest quality,
and how much time you have for this chore.  Naturally the distance to
the store will also be a factor.  On different days which store is the
best to go to will differ based on that day's preferences.  However,
there are certain stores you will never shop at.  These are stores
which are worse in every way than some other store (further,
more expensive, lower quality, etc).  Therefore, the stores that are
relevant and in the \emph{candidate set} are those that are not
strictly worse in every way than some other store.  Thus, every point
in the plane is mapped to a set of stores that a client at that
location might use.  The \emph{candidate diagram} is the partition of
the plane into regions, where each candidate set is the same for all
points in the same region.  Naturally, if your only consideration is
distance, then this is the (classical) Voronoi diagram of the sites.
However, here deciding which shop to use is an instance of
multi-objective optimization --- there are multiple, potentially
competing, objectives to be optimized, and the decision might change
as the weighting and influence of these objectives mutate over time
(in particular, you might decide to do your shopping in different
stores for different products).  The concept of relevant stores
discussed above is often referred as the \emph{Pareto optima}.

\myparagraph{Pareto optima in welfare economics.} %
Pareto efficiency, named after Vilfredo Pareto, is a core concept in
economic theory and more specifically in welfare economics.  Here each
point in $\Re^d$ represents the corresponding utilities of $d$ players
for a particular allocation of finite resources.  A point is said to
be \emph{Pareto optimal} if there is no other allocation which
increases the utility of any individual without decreasing the utility
of another.  The \emph{First Fundamental Theorem of Welfare Economics}
states that any competitive equilibrium (when supply equals demand)
is Pareto optimal.  The origins of this theorem date back to 1776 with
Adam Smith's famous (and controversial) work, ``The Wealth of
Nations,'' but was not formally \emph{proven} until the 20\th century
by Lerner, Lange, and Arrow (see \cite{a-we-08}).  Naturally such
proofs rely on simplifying (and potentially unrealistic) assumptions
such as perfect knowledge, or absence of externalities.
The \emph{Second Fundamental Theorem of Welfare Economics} states that
any Pareto optimum is achievable through lump-sum transfers (that is,
taxation and redistribution).  In other words each Pareto optima is a
``best solution'' under some set of societal preferences, and is
achievable through redistribution in one form or another (see
\cite{a-we-08} for a more in depth discussion).

\myparagraph{Pareto optima in computer science.} %
In computational geometry such Pareto optima points relate to the
\emph{orthogonal convex hull} \cite{osw-odcrc-84}, which in turn
relates to the well known convex hull (the input points that lie on
the orthogonal convex hull is a super set of those which lie on the
convex hull).  Pareto optima are also of importance to the database
community \cite{bks-so-01,htc-tpkds-13}, in which context such points
are called \emph{maximal} or \emph{skyline points}.  Such points are
of interest as they can be seen as the relevant subset of the
(potentially much larger) result of a relational database query.  The
standard example is querying a database of hotels for the cheapest and
closest hotel, where naturally hotels which are farther and more
expensive than an alternative hotel are not relevant results.  There
is a significant amount of work on computing these points, see Kung
\etal~\cite{klp-fmsv-75}. More recently, Godfrey
\etal~\cite{gsg-aamvc-07} compared various approaches for the
computation of these points (from a databases perspective), and also
introduced their own new external algorithm.%
\footnote{There is of course a lot of other work on Pareto optimal
   points, from connections to Nash equilibrium to scheduling. We
   resisted the temptation of including many such references which are
   not directly related to our paper.}%
% For other relevant work, see INSERT MORE REFS HERE.
% \myparagraph{Modeling uncertainty.} %

\paragraph{Modeling uncertainty.} %
Recently, there is a growing interest in modeling uncertainty in data.
As real data is acquired via physical measurements, noise and errors
are introduced.  This can be addressed by treating the data as coming
from a distribution (e.g., a point location might be interpreted as a
center of a Gaussian), and computing desired classical quantities
adapted for such settings.  Thus, a nearest-neighbor query becomes a
probabilistic question --- what is the expected distance to the
nearest-neighbor?  What is the most likely point to be the
nearest-neighbor?
% Etc.
(See \cite{aahpy-nnsuu-13} and references therein.)

This in turn gives rise to the question of what is the expected
complexity of geometric structures defined over such data.  The case
where the data is a set of points, and the locations of the points are
chosen randomly was thoroughly investigated (see \cite{sw-ig-93,%
   ww-sg-93,hr-ecrwv-14} and references therein).  The problem, when
the locations are fixed but the weights associated with the points are
chosen randomly, is relatively new.  Agarwal
\etal~\cite{ahks-urmsn-14} showed that for a set of disjoint segments
in the plane, if they are being expanded randomly, then the expected
complexity of the union is near linear.  This result is somewhat
surprising as in the worst case the complexity of such a union is
quadratic.

Here we are interested in bounding the expected complexity of weighted
generalizations of Voronoi diagrams, where the
weights (not the site locations) are randomly sampled.  Note that the
result of Agarwal \etal~\cite{ahks-urmsn-14} can be interpreted as
bounding the expected complexity of level sets of the multiplicative
weighted Voronoi diagram (of segments).  On the other hand, we
want to bound the entire lower envelope (which implies the same bound
on any level set).  For the special case of multiplicative weighted
Voronoi diagrams, a near-linear expected complexity bound was provided
by Har-Peled and Raichel \cite{hr-ecrwv-14}.  In this work we consider
a much more general class of weighted diagrams which allow multiple
weights and non-linear distance functions.

\subsection{Our contributions}

\myparagraph{Conceptual contribution.} %
We formally define the \emph{candidate diagram} in
\secref{beer:diagram:def} --- a new geometric structure that combines
proximity information with utility.  For every point $\query$ in the
plane, the diagram associates a \emph{candidate set} $\candid(\query)$
of sites that are relevant to $\query$. That is, all the sites that
are Pareto optima for $\query$.  Putting it differently, a site is not
in $\candid(\query)$ if it is further away from \emph{and} worse in
all parameters than some other site.  Significantly, unlike the
traditional Voronoi diagram, the candidate diagram allows the user to
change their distance function, as long as the function respects the
domination relationship.
This diagram is a significant extension of the Voronoi diagram, and
includes other extensions of Voronoi diagrams as special subcases,
like multiplicative weighted Voronoi diagrams.  Not surprisingly, the
worst case complexity of this diagram can be quite high.

\myparagraph{Technical contribution.} %
We consider the case where each site chooses its $j$\th attribute from
some distribution $\Distribution_j$ independently for each $j$.  We
show that the candidate diagram in expectation has near-linear
complexity, and that, with high probability, the candidate set has
poly-logarithmic size for any point in the plane. In the process we
derive several results which are interesting in their own right.
\smallskip%
\InHsienRadical{\vspace{4pt}\begin{compactenum}[(a)]}
\NotInHsienRadical{\begin{compactenum}[(A)]}
    \item 
    % \InHsienRadical{\textit{Low complexity of the minima for random
    % points in athe hypercube.}}
    \NotInHsienRadical{\textbf{Low complexity of the minima for random points in
       the hypercube.}}  
    We prove that if $n$ points are sampled from a
    fixed distribution (see \secref{sample:model} for assumptions on
    the distribution) over the $d$-dimensional hypercube then, with
    \emph{polynomially} small error probability, the number of Pareto
    optima points is $O(\log^{d-1} n)$, which is within a constant
    factor of the expectation (see \lemref{staircase:w:h:p}).
    Previously, this result was only known in a weaker form that is
    insufficient to imply our other results.  Specifically, Bentley
    \etal~\cite{bkst-anmsv-78} first derived the asymptotically tight
    bound on the expected number of Pareto optima points. Bai
    \etal~\cite{bdht-mh-05} proved that after normalization the
    cumulative distribution function of the number of Pareto optima is
    normal, up to an additive error of $O\pth{1/\!\polylog n}$.  (See
    \cite{br-ovrp-10,br-pp-10} as well.)  In particular, their results
    (which are quite nice and mathematically involved) only imply our
    statement with poly-logarithmically small error probability.
    To the best of our knowledge this result is new --- we emphasize,
    however, that for our purposes a weaker bound of $O(\log^d n)$ is
    sufficient, and such a result follows readily from the $\eps$-net
    theorem \cite{hw-ensrq-87} (naturally, this would add a
    logarithmic factor to later results).

    \smallskip%
    \item 
    \InHsienRadical{\textit{Backward analysis with high probability.}}
    \NotInHsienRadical{\textbf{Backward analysis with high probability.}} %
    To get this result, we prove a lemma providing high-probability
    bounds when applying backwards analysis \cite{s-barga-93} (see
    \lemref{property:e}). Such tail estimates are known in the context
    of randomized incremental algorithms \cite{cms-frric-93,%
       bcko-cgaa-08}, but our proof is arguably more direct and
    cleaner, and should be applicable to more cases.  (See
    \secref{minima:whp}).
    
    \smallskip%
    \item \InHsienRadical{\textit{Overlay of the $k$\th order Voronoi
          cells in randomized incremental construction.}}
    \NotInHsienRadical{\textbf{Overlay of the $k$\th order Voronoi
          cells in randomized incremental construction.}}
    We prove that the overlay of cells during a randomized incremental
    construction of the $k$\th order Voronoi diagram is of complexity
    $O\pth{k^4 n \log n }$ (see \lemref{proxy:complexity}).
    
    \smallskip%
    \item \InHsienRadical{\textit{Complexity of the candidate
          diagram.}}  \NotInHsienRadical{\textbf{Complexity of the
          candidate diagram.}} %
    Combining the above results carefully yields a near-linear upper
    bound on the complexity of the candidate diagram (see
    \thmref{candid:complexity}).
    % A high level sketch of this process is described below.
\end{compactenum}

% \myparagraph{Relation to previous work.} %
%%
% As mentioned above, Har-Peled and Raichel \cite{hr-ecrwv-14}
% provided a near-linear bound on the expected complexity of
% multiplicative Voronoi diagrams.  We emphasize several key
% distinctions between this previous work and the current paper.
% First, candidate diagrams (in a weaker form) were only used
% implicitly in \cite{hr-ecrwv-14} to bound the complexity of the
% multiplicative diagram.  Therefore the current paper is the first to
% consider the candidate diagram as an independent object of study.
% Equally important is the generalization to weight vectors as opposed
% to a single weight.  Indeed, when there is only a single weight
% involved the candidate diagram per site is less interesting.
% Moreover, in \cite{hr-ecrwv-14} the distance function considered was
% always the product of the scalar weight of the site and its
% distance.  Here we allow for the weight function to be any
% (coordinate-wise) non-decreasing function of the weight vectors.  In
% particular, this weight function can be non-linear and is allowed to
% vary (while the diagram remains fixed).

\myparagraph{Outline.} %
In \secref{candidDiagram} we formally define our problem and introduce
some tools that will be used later on.  Specifically, after some
required preliminaries, we formally introduce the candidate diagram in
\secref{beer:diagram:def}. The sampling model being used is described in
detail in \secref{sample:model}.  

Backward analysis with high probability is discussed in
\secref{minima:whp}, including \corref{property} which is a sufficient
statement for the purposes of this paper.  In
\secref{backward:analysis:b} we make a short detour and provide a
detailed proof of the high-probability backward analysis statement.

To bound the complexity of the candidate diagram (both the size
of the planar partition and the total size of the associated candidate
sets), in \secref{proxy:set}, the notion of \emph{proxy
   set} is introduced. Defined formally in \secref{proxy:set:def}, it is
(informally) an enlarged candidate set.  \secref{proxy:set:size} bounds
the size of the proxy set using backward analysis, both in expectation
and with high probability, and \secref{proxy:contain:beer} shows that
mucking around with the proxy set is useful, by proving that the proxy
set contains the candidate set, for any point in the plane.

In \secref{k:th:order}, it is shown that the diagram induced by the
proxy sets can be interpreted as the arrangement formed by the overlay
of cells during the randomized incremental construction of the $k$\th
order Voronoi diagram.  
To this end, \secref{k:th:order:prelim} defines the \emph{$k$\th order 
Voronoi diagram}, interpret as arrangement of planes, and states some basic 
properties of these entities.  For our purposes, we need to bound the 
size of the conflict lists encountered during the randomized incremental 
construction, and this is done in \secref{below:conflict:size} using the 
Clarkson-Shor technique. In \secref{enviroments:overlays} the 
\emph{$k$ environment} of a site is defined, and we related such a notion 
to the $k$\th order Voronoi diagram. Next, in \secref{all:together} we 
bound the expected complexity of the proxy diagram.
%
%To this end, \secref{k:th:order:prelim}
%defines the $k$\th order Voronoi diagram, as well as the \emph{$k$\!
%   environment} of a site, and states some basic properties of these
%entities.  For our purposes, we need to bound the size of the conflict
%lists encountered during the randomized incremental construction, and
%this is done in \secref{below:conflict:size} using the Clarkson-Shor
%technique. Next, in \secref{all:together} we bound the expected
%complexity of the proxy diagram.

We bound the expected size of the candidate set for any point in the
plane in \secref{size:candidate:set}.  First, in \secref{stair}, we
analyze the number of staircase points in randomly sampled point sets
from the hypercube, and we use this bound, in \secref{size:beer:set},
to bound the size of the candidate set.

Finally, in \secref{pf:candid:complexity}, we put everything together and
prove our main result, showing the desired bound on the complexity of
the candidate diagram.
%
%We fill in the missing details for the results of
%\secref{backward:analysis:a}, proving a high-probability bound for
%backward analysis in \RefSecMinimaWHP.
%\InNotProcVer{%
%   We delegated these details to the appendix, since these results are
%   somewhat orthogonal to the rest of the paper.}

%%%%%%%%%%%%%%%%%%%%%%%%%%%%%%%%%%%%%%%%%%%%%%%%%%%%%%%%%%%%%%% 
%%%%%%%%%%%%%%%%%%%%%%%%%%%%%%%%%%%%%%%%%%%%%%%%%%%%%%%%%%%%%%% 

\section{Problem definition and preliminaries}
\seclab{candidDiagram}

% \subsection{Preliminaries}
% \seclab{prelim}

Throughout, we assume the reader is familiar with standard
computational geometry terms, such as arrangements \cite{sa-dsstg-95},
vertical-decomposition \cite{bcko-cgaa-08}, etc.  In the same vein, we
assume that the variable $d$, the \emph{dimension}, is a small
constant and the big-$O$ notation hides constants that are potentially
exponential (or worse) in $d$.

A quantity is bounded by $O\pth{f}$ \emphi{with high probability} with
respect to $n$, if for any constant $\gamma > 0$, there is another
constant $c$ depending on $\gamma$ such that the quantity is at most
$c \cdot f$ with probability at least $1-n^{-\gamma}$.  In other words,
the bound holds for any polynomially small error with the expense of a
multiplicative constant factor on the size of the bound.  When there's
no danger of confusion, we sometimes write {$\Owhp(f)$}
for short.
\begin{defn}
    \deflab{dominates}%
    Consider two points $\pnt = (\pnt_1, \ldots, \pnt_d)$ and
    $\pntA = (\pntA_1, \ldots, \pntA_d)$ in $\Re^d$.  The point $\pnt$
    \emphi{dominates} $\pntA$ (denoted by $\pnt \preceq \pntA$) if
    $\pnt_i \leq \pntA_i$, for all $i$.
\end{defn}
Given a point set $\PntSet \subseteq \Re^d$, there are several terms
for the subset of $\PntSet$ that is not dominated, as discussed above,
such as \emph{Pareto optima} or \emph{minima}.  Here, we use the
following term.
\begin{defn}
    \deflab{stair}%
    For a point set $\PntSet \subseteq \Re^d$, a point
    $\pnt \in \PntSet$ is a \emphi{staircase point} of $\PntSet$ if no
    other point of $\PntSet$ dominates it.  The set of all such
    points, denoted by $\stairX{\PntSet}$, is the \emphi{staircase} of
    $\PntSet$.
\end{defn}
Observe that for a nonempty finite point set $\PntSet$, the staircase
$\stairX{\PntSet}$ is never empty.

\subsection{Formal definition of the candidate diagram}
\seclab{beer:diagram:def}

Let $\SiteSet = \brc{\site_1, \ldots, \site_n}$ be a set of $n$
distinct \emphi{sites} in the plane.  For each site $\site$ in
$\SiteSet$, there is an associated list
$\attribB = \pth{b_1, \ldots, b_d}$ of $d$ real-valued attributes,
each in the interval $[0,1]$.  When viewed as a point in the unit
hypercube $[0,1]^d$, this list of attributes is the \emphi{parametric
   point} of the site $\site_i$.  Specifically, a \emph{site} is a point in
the plane encoding a facility location, while the term \emph{point} is
used to refer to the (parametric) point encoding its attributes in
$\Re^d$.

\myparagraph{Preferences.} %
Fix a client location $\query$ in the plane.  For each site, there are
$d+1$ associated variables for the client to consider. Specifically,
the client distance to the site, and $d$ additional attributes (e.g.,
prices of $d$ different products) associated with the site.
Conceptually, the goal of the client is to ``pay'' as little as
possible by choosing the best site (e.g., minimize the overall cost of
buying these $d$ products together from a site, where the price of
traveling the distance to the site is also taken into account).

\begin{defn}
    \deflab{dominating:pref}%
    A client $\query$ has a \emphi{dominating preference} if for any
    two sites $\site$ and $\site'$ in the plane, with parametric points
    $\attribB$ and $\attribB'$ in $\Re^d$, respectively, the client would
    prefer the site $\site$ over $\site'$ if
    $\dist{\query}{\site} \leq \dist{\query}{\site'}$ and
    $\attribB \preceq \attribB'$ (that is, $\attribB$ dominates
    $\attribB'$).  We sometimes say site $\site$ \emph{dominates} site $\site'$.
    % In such a case, we would say that the client uses a
    % \emphi{dominating
    % distance function} in deciding which site to use.
\end{defn}

Note that a client having a dominating preference does not identify a
specific optimum site for the client, but rather a set of potential
optimum sites.  Specifically, given a client location $\query$ in the
plane, let its distance to the $i$\th site be
$\ell_i = \dist{\query}{\site_i}$.  The set of sites the client might
possibly use (assuming the client uses a dominating preference) are
the staircase points of the set
$\PntSet(\query) = \brc{ (\attribB_1, \ell_1), \ldots,
   (\attribB_n,\ell_n) }$ (that is, we are adding the distance to each
site as an additional attribute of the site --- this attribute depends
on the location of $\query$). The set of sites realizing the staircase
of $\PntSet(\query)$
%(that is, all the sites relevant to $\query$) 
is the \emphi{candidate set} $\candid(\query)$ of $\query$:
\begin{align}
    \candid(\query) = \brc{ \site_i \in \SiteSet %
       \sep{ \text{$(\attribB_i, \ell_i)$ is a staircase point of
             $\PntSet(\query)$ in $\Re^{d+1}$} }}.%
    \eqlab{candidate:set}%
\end{align}
The \emphi{candidate cell} of $\query$ is the set of all the points in
the plane that have the same candidate set associated with them.  That
is, $\brc{ \pnt \in \Re^2 \mid \candid(\pnt) = \candid(\query) }$.
The decomposition of the plane into these cells is the
\emphi{candidate diagram}.

Now, the client $\query$ has the candidate set $\candid(\query)$, and
it chooses some site (or potentially several sites) from
$\candid(\query)$ that it might want to use. Note that the client
might decide to use different sites for different acquisitions.
As an example, consider the case when each site $\site_i$ is attached
with attributes $\attribB_i = (b_{i,1}, b_{i,2})$.  If the client
$\query$ has the preference of choosing the site with smallest value
$b_{i,1} \ell_i$ among all the sites, then this preference is a
dominating preference, and therefore the client will choose one of the
sites from the candidate list $\candid(\query)$.  (Observe that the
preference function corresponds to the multiplicative Voronoi diagram
with respect to the first coordinate $b_{i,1}$.)  Similarly, if the
preference function is to choose the smallest value
$b_{i,1} \hspace{0.6pt} \ell_i^{2} + b_{i,2}$ among all the sites (which again
is a dominating preference), then this corresponds to a power diagram
of the sites.

% \subsection{Complexity of the diagram}
% \seclab{complexity:diagram}
\myparagraph{Complexity of the diagram.} %
The \emphi{complexity} of a planar arrangement %$\ArrC$
is the total number of edges, faces, and vertices.
% denoted by $\cardin{\ArrC}$.
A candidate diagram can be interpreted as a planar arrangement, and
its complexity is defined analogously.  The \emphi{\storage
   complexity} of the candidate diagram is the total amount of memory
needed to store the diagram explicitly, and is bounded by the
complexity of the candidate diagram together with the sum of the sizes
of candidate sets over all the faces in the arrangement of the diagram
(which is potentially larger by a factor of $n$, the number of
sites). Note, that the \storage complexity is a somewhat \naive upper
bound, as using persistent data-structures might significantly reduce
the space needed to store the candidate lists.

\begin{lemma}
   \lemlab{complexity}%
   The complexity of the candidate diagram of $n$ sites in the plane 
   is $O\pth{n^4}$.  The \storage complexity of
   the candidate diagram is $\Omega\pth{n^2}$ in the worst case and
   $O\pth{n^5}$ in all cases.%
\end{lemma}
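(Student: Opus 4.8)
The plan is to realize the candidate diagram as a coarsening of the arrangement of the $\binom{n}{2}$ pairwise perpendicular bisectors of the sites; this gives the $O(n^4)$ complexity and the $O(n^5)$ \storage bound at once. The $\Omega(n^2)$ lower bound is then proved by a direct one-dimensional construction.

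\textbf{Reduction to a line arrangement.} The key observation is that for a fixed pair of sites $\site_i, \site_j$, the point $(\attribB_i, \ell_i) \in \Re^{d+1}$ of $\PntSet(\query)$ is dominated by $(\attribB_j, \ell_j)$ if and only if (i) $\attribB_j \preceq \attribB_i$, a relation that does not involve $\query$, and (ii) $\dist{\query}{\site_j} \le \dist{\query}{\site_i}$, which holds exactly on one closed side of the perpendicular bisector of $\site_i \site_j$ --- a line, since the sites are distinct points. Let $\mathcal{A}$ be the arrangement of these $\binom{n}{2}$ bisector lines. In the interior of any face of $\mathcal{A}$ the sign of $\dist{\query}{\site_i} - \dist{\query}{\site_j}$ is fixed and nonzero for every pair $i,j$, so all pairwise domination relations among the points of $\PntSet(\query)$ are fixed; hence the set of undominated points --- and therefore the candidate set $\candid(\query)$ --- is constant on that face. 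Thus every candidate cell is a union of faces of $\mathcal{A}$ (together with the separating edges and vertices that become interior), so the candidate diagram is obtained from $\mathcal{A}$ by deleting edges and vertices, and its complexity is at most the complexity of $\mathcal{A}$. An arrangement of $m = \binom{n}{2} = O(n^2)$ lines in the plane has complexity $O(m^2) = O(n^4)$, which proves the bound on the complexity of the candidate diagram.

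\textbf{Storage complexity.} The diagram has $O(n^4)$ faces, each labeled by a candidate set of at most $n$ sites, so the sum of the sizes of the candidate sets over all faces is $O(n^5)$; adding the $O(n^4)$ complexity of the diagram itself leaves the bound $O(n^5)$. For the $\Omega(n^2)$ lower bound, put $\site_i = (i,0)$ for $i = 1, \ldots, n$, and let $\attribB_i$ be the parametric point whose first coordinate is $i/(n+1)$ and whose remaining coordinates are all $\tfrac12$, so that $\attribB_j \preceq \attribB_i$ exactly when $j \le i$. For this instance one checks that $\site_i \in \candid(\query)$ if and only if $\site_i$ is strictly closer to $\query$ than each of $\site_1, \ldots, \site_{i-1}$. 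Evaluating at $\query_k = \pth{k + \tfrac35,\, 0}$ for $k = 1, \ldots, n-1$, a short computation with the distances $\dist{\query_k}{\site_j}$ gives $\candid(\query_k) = \brc{\site_1, \ldots, \site_{k+1}}$, of size $k+1$. These $n-1$ candidate sets are pairwise distinct, so $\query_1, \ldots, \query_{n-1}$ lie in $n-1$ distinct faces of the candidate diagram, and hence the \storage complexity is at least $\sum_{k=1}^{n-1}(k+1) = \Omega(n^2)$.

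\textbf{Main difficulty.} There is no real obstacle here: once one sees that the candidate set changes only across the $O(n^2)$ bisector lines, the upper bounds are immediate and the lower bound is an explicit instance. The only point that deserves a word of care is that this exhibits the candidate diagram as a coarsening --- not a refinement --- of $\mathcal{A}$, so that passing to it can only lose complexity, even along the lower-dimensional cells where the ties $\dist{\query}{\site_i} = \dist{\query}{\site_j}$ live.
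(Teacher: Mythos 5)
Your proposal is correct and follows essentially the same route as the paper for the upper bounds: the candidate set can only change across the $\binom{n}{2}$ perpendicular bisectors, so the candidate diagram coarsens a line arrangement of complexity $O(n^4)$, and charging at most $n$ sites per face gives the $O(n^5)$ storage bound. The paper leaves the $\Omega(n^2)$ lower bound as an exercise, and your explicit construction (collinear sites with attribute points ordered along a single coordinate, so that domination coincides with index order, evaluated at the points $\query_k=(k+\tfrac35,0)$) correctly realizes candidate sets $\{\site_1,\dots,\site_{k+1}\}$ in distinct faces and hence settles it.
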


\begin{proof}
    The lower bound is easy, and is left as an exercise to the reader.
    A \naive upper bound of $O\pth{n^5}$ on the \storage complexity,
    follows because:
    \begin{inparaenum}[(i)]
        \item all possible pairs of sites induce together
        $\binom{n}{2}$ bisectors,
        \item the complexity of the arrangement of the bisectors is
        $O\pth{n^4}$, and
        \item the candidate set of each face in this arrangement might
        have $n$ elements inside.
    \end{inparaenum}
    \DCGVer{{\qed}}
\end{proof}%

We leave the problem of closing the gap between the upper and lower
bounds of \lemref{complexity} as an open problem for further research.

\subsection{Sampling model}
\seclab{sample:model}%

Fortunately, the situation changes dramatically when randomization is
involved.  Let $\SiteSet$ be a set of $n$ sites in the plane.  For
each site $\site \in \SiteSet$, a parametric point
$\attribB = \pth{\attribB_1, \ldots, \attribB_d}$ is sampled
independently from $[0,1]^d$, with the following constraint: each
coordinate $\attribB_i$ is sampled from a (continuous) distribution
$\Distribution_i$, independently for each coordinate.  In particular,
the sorted order of the $n$ parametric points by a specific coordinate
yields a uniform random permutation (for the sake of simplicity of
exposition we assume that all the values sampled are distinct).
% \end{defn}

Our main result shows that, under the above assumptions, both the
complexity and the \storage complexity of the candidate diagram are
near linear in expectation --- see \thmref{candid:complexity} for the
exact statement.

%%%%%%%%%%%%%%%%%%%%%%%%%%%%%%%%%%%%%%%%%%%%%%%%%%%%%%%%%%%%%%% 
%%%%%%%%%%%%%%%%%%%%%%%%%%%%%%%%%%%%%%%%%%%%%%%%%%%%%%%%%%%%%%% 

\section{Backward analysis with high probability}
\seclab{minima:whp}%
\apndlab{minima:whp}%

%\subsection{A short detour into backward analysis}
%\seclab{backward:analysis:a}

\emph{Randomized incremental construction} is a powerful technique
used by geometric algorithms. Here, one is given a set of elements
$\ElemSet$ (e.g., segments in the plane), and one is interested in
computing some structure induced by these elements (e.g., the vertical
decomposition formed by the segments). To this end, one computes a
random permutation $\Pi = \permut{ \element_1, \ldots, \element_n}$ of
the elements of $\ElemSet$, and in the $i$\th iteration one computes
the structure $V_i$ induced by the $i$\th prefix
$\Pi_i = \permut{ \element_1, \ldots, \element_i}$ of $\Pi$ by
inserting the $i$\th element $\element_i$ into $V_{i-1}$ (e.g., split
all the vertical trapezoids of $V_{i-1}$ that intersect $\element_i$,
and merge together adjacent trapezoids with the same floor and
ceiling).

In \emphi{backward analysis} one is interested in computing the
probability that a specific object in $V_i$ was actually
created in the $i$\th iteration (e.g., a specific vertical trapezoid
in the vertical decomposition $V_i$). If the object of interest is
defined by at most $b$ elements of $\Pi_i$ for some constant $b$,
then the desired quantity is the probability that $\element_i$ is one
of these defining elements, which is at most $b/i$. In some cases, the
sum of these probabilities, over the $n$ iterations, counts the number
of times certain events happen during the incremental construction.
However, this yields only a bound in expectation.  For a 
high-probability bound, one can not apply this argument directly, as there
is a subtle dependency leakage between the corresponding indicator
variables involved between different iterations. (Without going into 
details, this is because the defining sets of the objects of
interest can have different sizes, and these sizes depend on which
elements were used in the permutation in earlier iterations.)

% \begin{defn}
%     \deflab{property}%
%     
Let $\PntSet$ be a set of $n$ elements.  A \emphi{property}
$\Property$ of $\PntSet$ is a function that maps any subset $\SetX$ of
$\PntSet$ to a subset $\Property(\SetX)$ of $\SetX$.
% \end{defn}
% 
Intuitively the elements in $\Property(\SetX)$ have some desired
property with respect to $\SetX$ (for example, let $\SetX$ be a set of
points in the plane, then $\Property(\SetX)$ may be those points in
$\SetX$ who lie on the convex hull of $\SetX$).
The following corollary (implied by \lemref{property:e} below)
provides a high-probability bound for backward analysis, and while the
proof is an easy application of the Chernoff inequality, it
nevertheless significantly simplifies some classical results on
randomized incremental construction algorithms.
%See \RefSecMinimaWHP for a more detailed discussion and a proof.

\begin{corollary}
    \corlab{property}%
    Let $\PntSet$ be a set of $n$ elements, let $c>1$ and $k \ge 1$ be
    prespecified numbers, and let $\Property(\SetX)$ be a property
    defined over any subset $\SetX \subseteq \PntSet$.  Now, consider
    a uniform random permutation $\permut{\pnt_1, \ldots, \pnt_n}$ of
    $\PntSet$, and let $\PntSet_i = \brc{\pnt_1, \ldots, \pnt_i}$.
    Furthermore, assume that we have $\cardin{\PSetX{\PntSet_i}} \leq k$
    simultaneously for all $i$ with probability at least $1-n^{-c}$.
    Let $X_i$ be the indicator variable of the event
    $\pnt_i \in \Property(\PntSet_i)$.  Then, for any constant
    $\gamma \geq 2e$, we have
    \begin{displaymath}
        \Prob{\Bigl. \sum_{i=1}^n X_i > \gamma \cdot (2 k \ln
           n)} \leq n^{-\gamma k} + n^{-c}.
    \end{displaymath}
    (If for all $\PntSetA \subseteq \PntSet$ we have that
    $\cardin{\PSetX{\PntSetA}} \leq k$, then the additional error term
    $n^{-c}$ is not necessary.)
\end{corollary}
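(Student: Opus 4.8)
The plan is to prove the statement by conditioning on the high-probability event that all prefix properties are small, and then applying a Chernoff-type bound to the sum $\sum_i X_i$ — but crucially \emph{conditioned on the underlying permutation structure}, where the key is to compare $\sum_i X_i$ against a sum of \emph{independent} Bernoulli variables with parameters $k/i$ (or rather, to bound the moment generating function directly). The subtlety flagged in the excerpt is that the $X_i$ are not independent, and worse, the bound $\Prob{X_i = 1 \mid \text{history}} \le k/i$ only holds on the good event; so the first step is to set up the decomposition cleanly.

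First I would define the good event $\EventA$ to be $\{\cardin{\PSetX{\PntSet_i}} \le k \text{ for all } i\}$, which by hypothesis has $\Prob{\overline{\EventA}} \le n^{-c}$. The main work is to show $\Prob{\sum_i X_i > \gamma(2k\ln n) \text{ and } \EventA} \le n^{-\gamma k}$; adding back the $n^{-c}$ term for $\overline{\EventA}$ then finishes. To handle the conditioning, the clean trick is backward exposure: reveal the permutation from the last element to the first, i.e., condition on the set $\PntSet_i$ (but not its internal order) having been determined — given $\PntSet_i$, the element $\pnt_i$ is uniform over $\PntSet_i$, so $\Prob{\pnt_i \in \Property(\PntSet_i) \mid \PntSet_i} = \cardin{\PSetX{\PntSet_i}}/i$. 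On $\EventA$ this is at most $k/i$. I would then bound the conditional MGF: for $t>0$,
\begin{displaymath}
    \Ex{e^{t \sum_i X_i} \mathbf{1}_{\EventA}} \le \prod_{i=1}^n \pth{1 + \frac{k}{i}(e^t - 1)} \le \exp\pth{(e^t-1) k \sum_{i=1}^n \frac1i} \le \exp\pth{(e^t - 1)\, k (\ln n + 1)},
\end{displaymath}
where the product form comes from revealing $\pnt_n, \pnt_{n-1}, \ldots$ in that order and using that once the sets $\PntSet_n \supset \PntSet_{n-1} \supset \cdots$ and their "bad/good" status are exposed, each factor contributes at most $1 + (k/i)(e^t-1)$ on the good event (each conditional step multiplies the running MGF by at most this factor; on $\overline{\EventA}$ the indicator kills the term).

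Then I would apply Markov's inequality to $e^{t\sum X_i}$ with the standard Chernoff optimization. Writing $\mu = k(\ln n + 1)$ as the effective mean bound and $a = \gamma \cdot 2k \ln n \approx 2\gamma \mu$, choosing $t = \ln(a/\mu) \ge \ln(2\gamma) \ge \ln(4e) > 1$ (using $\gamma \ge 2e$) gives
\begin{displaymath}
    \Prob{\textstyle\sum_i X_i > a,\ \EventA} \le \exp\pth{-\mu\bigl(\tfrac{a}{\mu}\ln\tfrac{a}{\mu} - \tfrac{a}{\mu} + 1\bigr)} \le \exp\pth{-a \ln\tfrac{a}{2\mu}} \le \exp\pth{-2\gamma k \ln n \cdot \ln(\gamma)} \le n^{-\gamma k},
\end{displaymath}
where the middle steps use $a \ge e\mu$ so that $\ln(a/\mu) - 1 + \mu/a \ge \tfrac12\ln(a/\mu)$, and $\ln(a/(2\mu)) \ge \ln\gamma \ge \ln(2e) > 1$ (being slightly careful with the constant $2$ in the target bound $\gamma\cdot 2k\ln n$ versus $\mu = k(\ln n+1)$; the extra factor of $2$ is exactly what absorbs the $+1$). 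Combining with $\Prob{\overline{\EventA}} \le n^{-c}$ yields the claim, and the parenthetical remark is immediate since if $\cardin{\PSetX{\PntSetA}} \le k$ holds deterministically then $\EventA$ is the whole space and the $n^{-c}$ term vanishes.

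The main obstacle I anticipate is making the conditioning argument for the product bound on the MGF fully rigorous — specifically, justifying that backward exposure of the permutation gives a clean martingale-like factorization even though the $X_i$'s "defining sets" vary in size. The cleanest way is to filter by $\mathcal{F}_i = \sigma(\PntSet_n, \ldots, \PntSet_i)$ (equivalently the values of $\pnt_n, \ldots, \pnt_{i+1}$, which determine the set $\PntSet_i$) and verify $\Ex{e^{tX_i}\mathbf{1}_{\EventA} \mid \mathcal{F}_{i+1}} \le (1 + (k/i)(e^t-1))\,\mathbf{1}_{\{\EventA \text{ still possible}\}}$; the point is that $\EventA$ restricted to steps $\ge i$ is $\mathcal{F}_i$-measurable, so the indicator factors through correctly. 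This is where the paper's claim to be "more direct and cleaner" than the randomized-incremental-construction literature pays off, and it is the one step worth writing out in full detail rather than the routine Chernoff computation.
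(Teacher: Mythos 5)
Your proposal is correct in substance, but it takes a genuinely different route from the paper. The paper never bounds a conditional moment generating function: it first proves (\lemref{property:indep}) that when $\cardin{\PSetX{\SetX}}$ depends only on $\cardin{\SetX}$ the indicators $X_i$ are \emph{mutually independent} (via the same backward exposure of the permutation you use), then applies the off-the-shelf Chernoff bound for independent variables, and finally reduces the general situation to this case by two combinatorial modifications of the property itself: \emph{padding} $\Property$ up to exactly $k$ elements when $\cardin{\PSetX{\SetX}}\le k$ always holds, and \emph{truncating} it down to at most $k$ elements when the bound only holds with probability $1-n^{-c}$, so that the failure event contributes the extra $n^{-c}$ term exactly as in your decomposition. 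You instead keep the original (dependent) indicators, restrict to the good event $\EventA$ inside the expectation, and control $\Ex{e^{t\sum_i X_i}\mathbf{1}_{\EventA}}$ by peeling the permutation backwards, using that conditioned on the suffix $\pnt_n,\ldots,\pnt_{i+1}$ the element $\pnt_i$ is uniform on $\PntSet_i$, so each step contributes a factor at most $1+\tfrac{k}{i}(e^t-1)$. Both arguments rest on the same backward-analysis fact; the paper's route buys a cleaner structural statement (genuine independence in the exact-size case, which is of independent interest and lets one quote Chernoff verbatim), while yours avoids the padding/truncation gymnastics at the price of a hand-rolled martingale-type MGF bound.

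Two fixable slips in your write-up. First, the factorization is cleanest if you note $\mathbf{1}_{\EventA}=\prod_i \mathbf{1}\brc{\cardin{\PSetX{\PntSet_i}}\le k}$ and peel the factor for index $i$ by conditioning on $\mathcal{F}_i=\sigma(\pnt_n,\ldots,\pnt_{i+1})$ (not $\mathcal{F}_{i+1}$): the per-step indicator is $\mathcal{F}_i$-measurable while $X_i$ given $\mathcal{F}_i$ is Bernoulli with parameter $\cardin{\PSetX{\PntSet_i}}/i$; carrying the full $\mathbf{1}_{\EventA}$ through a single step, as you wrote it, is not literally valid since the constraints on shorter prefixes are not yet measurable. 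Second, the displayed Chernoff chain has wrong intermediate inequalities (e.g., $\ln\tfrac{a}{2\mu}\ge \ln\gamma$ fails because $\mu=k(\ln n+1)>k\ln n$), but the conclusion survives: with $a=2\gamma k\ln n$ and $\mu\le 2k\ln n$ one has $a/\mu\ge\gamma\ge 2e$, so the optimized bound $\exp\pth{-a\pth{\ln(a/\mu)-1}}\le \exp\pth{-a\ln 2}\le n^{-\gamma k}$ holds with room to spare.
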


In the remainder of this section we prove \lemref{property:e}, from which the 
above corollary is derived, and provide some examples of its applications.  
However, the above corollary statement is all that is required to prove our main results, 
and so if desired the reader can skip directly to \secref{proxy:set}.

%\corref{property} itself is sufficient for our purposes in the rest of the 
%paper.  So those who are eager for the main result can skip the rest
%of \secref{minima:whp} and proceed to \secref{proxy:set}.

\subsection{A short detour into backward analysis}
\seclab{backward:analysis:b}

% SARIEL NEW START

We need the following easy observation.

\begin{lemma}
    \lemlab{stupid}%
    Let $\Event_1, \ldots, \Event_t$ be disjoint events and let
    $\EventB$ be another event, such that
    $\beta = \Prob{\EventB \mid \Event_i}$ is the same for all $i$.
    Then
    $\Prob{\EventB \mid \Event_1 \cup \cdots \cup \Event_t} =
    \beta$.
\end{lemma}

\begin{proof}
    We have
    \begin{math}
        \Prob{\EventB \cap \pth{ \cup_i \Event_i}}%
        = %
        \sum_i \Prob{\EventB \cap \Event_i}%
        =%
        \sum_i \Prob{\EventB \mid \Event_i} \Prob{\Event_i}%
%        =%
%        \sum_i \beta \Prob{\Event_i}%
        =%
        \beta \sum_i \Prob{\Event_i}.
    \end{math}
    Hence
    \begin{math}
        \Prob{\EventB \mid \cup_i \Event_i}%
        =%
        {\Prob{\bigl. \EventB \cap (\cup_i \Event_i) }}
        \allowbreak /%
        {\Prob{\bigl. \cup_i \Event_i}}%
%        =%
%        \pth{\bigl.\beta \sum_i \Prob{ \Event_i}}/%
%        \pth{\bigl.\sum_i\Prob{ \Event_i}}%
        = \beta.
    \end{math}
    \DCGVer{\qed}%
\end{proof}

\begin{lemma}
    \lemlab{property:indep}%
    Let $\PntSet$ be a set of $n$ elements, and let $k_1, \ldots, k_n$
    be $n$ fixed non-negative integers depending only on $\PntSet$.
    Let $\Property$ be a property of $\PntSet$ satisfying the
    following condition: if $\cardin{\PntSetA} = i$ then
    $\cardin{\PSetX{\PntSetA}} = k_i$.  Now, consider a uniform random
    permutation $\permut{\pnt_1, \ldots, \pnt_n}$ of $\PntSet$. For
    any $i$, let $\PntSet_i = \brc{\pnt_1, \ldots, \pnt_i}$ and
    let $X_i$ be an indicator variable of the event that
    $\pnt_i \in \PSetX{\PntSet_i}$.  Then, the variables $X_i$ are
    mutually independent, for all $i$.
\end{lemma}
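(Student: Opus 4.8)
The plan is to show that $(X_1, \ldots, X_n)$ has exactly the joint distribution of $n$ independent Bernoulli variables with $\Prob{X_i = 1} = k_i/i$, by conditioning on the unordered prefix sets one at a time from the last index down to the first. The key observation is that, by hypothesis, the value $\cardin{\PSetX{\PntSetA}}$ depends only on $\cardin{\PntSetA}$: if $\cardin{\PntSetA} = i$ then $\cardin{\PSetX{\PntSetA}} = k_i$. Hence, conditioned on the event $\PntSet_i = \PntSetA$ for any fixed $i$-element set $\PntSetA$, the last element $\pnt_i$ is uniformly distributed over $\PntSetA$ (since all $i!$ orderings of $\PntSetA$ are equally likely as a prefix), so $\Prob{X_i = 1 \mid \PntSet_i = \PntSetA} = \cardin{\PSetX{\PntSetA}}/i = k_i/i$. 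This probability is the same for every choice of $\PntSetA$, and moreover the conditional distribution of the ordering of the first $i-1$ elements, given $\PntSet_i = \PntSetA$ and given the value of $\pnt_i$, is simply a uniform random permutation of $\PntSetA \setminus \brc{\pnt_i}$ — i.e.\ it is the same sampling model on an $(i-1)$-element ground set.

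First I would set up an induction on $i$ (downward, from $n$ to $1$), proving the statement: for every $i$, conditioned on any fixed value of the set $\PntSet_i$, the variables $X_1, \ldots, X_i$ are mutually independent with $\Prob{X_j = 1} = k_j/j$ for $j \le i$ — and, crucially, this conditional law does not depend on which $i$-element set $\PntSet_i$ equals. The base case $i = n$ is where $\PntSet_n = \PntSet$ is forced. For the inductive step, fix an $i$-set $\PntSetA$; as noted above, $X_i$ conditioned on $\PntSet_i = \PntSetA$ is Bernoulli$(k_i/i)$, and conditioning further on $X_i$ (equivalently, on whether $\pnt_i$ lies in $\PSetX{\PntSetA}$ or not) still leaves the prefix $\permut{\pnt_1, \ldots, \pnt_{i-1}}$ as a uniform random permutation of the $(i-1)$-set $\PntSet_{i-1} = \PntSetA \setminus \brc{\pnt_i}$; the inductive hypothesis then gives that $X_1, \ldots, X_{i-1}$ are mutually independent with the stated marginals and independent of the choice of $(i-1)$-set. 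Assembling: $X_i \perp (X_1, \ldots, X_{i-1})$ given $\PntSet_i = \PntSetA$, and $(X_1, \ldots, X_{i-1})$ is independent given $\PntSet_i$, so $(X_1, \ldots, X_i)$ is independent given $\PntSet_i = \PntSetA$, with the common conditional law not depending on $\PntSetA$. Since this conditional law is the same for all values of $\PntSet_i$, it equals the unconditional law, closing the induction.

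An alternative, perhaps cleaner, route is to use \lemref{stupid} directly: to verify independence it suffices to show $\Prob{X_i = 1 \mid X_1 = x_1, \ldots, X_{i-1} = x_{i-1}} = k_i / i$ for every fixing of the earlier variables. The event $\brc{X_1 = x_1, \ldots, X_{i-1} = x_{i-1}}$ is a disjoint union, over the $i$-element sets $\PntSetA$ (and the admissible orderings of their first $i-1$ elements), of events of the form $\brc{\PntSet_{i-1} = \PntSetA', \ \pnt_i = \pnt}$; on each such atom the conditional probability that $\pnt_i \in \PSetX{\PntSet_i}$ is $k_i/i$ because, conditioned on $\PntSet_{i-1}$, the next element $\pnt_i$ is uniform over the remaining $n - (i-1)$ elements, so $\PntSet_i = \PntSet_{i-1} \cup \brc{\pnt_i}$ is a uniform random $i$-set containing $\PntSet_{i-1}$, and — here is where the hypothesis is used — $\cardin{\PSetX{\PntSet_i}} = k_i$ regardless of which $i$-set it is, so the chance that $\pnt_i$ is one of these $k_i$ distinguished elements is $k_i$ out of the $i$ elements of $\PntSet_i$, i.e.\ exactly $k_i/i$ for the appropriate counting. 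Lemma \lemref{stupid} then collapses the conditioning on this disjoint union to the common value $k_i/i$, which is independent of $(x_1, \ldots, x_{i-1})$, giving mutual independence. The main obstacle — and the reason the hypothesis "$\cardin{\PSetX{\PntSetA}}$ depends only on $\cardin{\PntSetA}$" is essential — is ensuring that the conditional probability $\Prob{X_i = 1 \mid \text{earlier history}}$ is genuinely a constant: without this assumption the size of $\PSetX{\PntSet_i}$ would fluctuate with the actual prefix set, reintroducing precisely the "dependency leakage" described in the discussion preceding the lemma, and the conditional probabilities would no longer all equal $k_i / i$. I would take care to state explicitly that $\PntSet_i$ ranges only over $i$-subsets reachable given the conditioning (they all are, since any ordering of any $i$-set is an equally likely prefix) and that the $k_i$ are fixed ahead of time, depending only on $\PntSet$.
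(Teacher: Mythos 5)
Your first argument is correct, and it rests on the same core fact as the paper's proof: conditioned on the unordered prefix $\PntSet_i$ being a fixed $i$-set $\PntSetA$, the element $\pnt_i$ is uniformly distributed over the $i$ elements of $\PntSetA$, of which exactly $k_i$ lie in $\PSetX{\PntSetA}$. The paper packages this by generating the permutation backwards and conditioning on the ordered suffix $\permut{\pnt_{i+1},\ldots,\pnt_n}$ (relative to which all events with larger index are determined), then collapsing via \lemref{stupid}; you package it as an induction on the prefix length, using that the conditional law of $(X_1,\ldots,X_{i-1})$ given $\PntSet_i=\PntSetA$ and the identity of $\pnt_i$ is the same product law for every $(i-1)$-set, so it survives averaging over $\pnt_i$ and is independent of $X_i$. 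That is a legitimate reorganization of the same idea. Two small expository fixes: your induction actually runs upward (the claim for prefix length $i$ is deduced from the claim for length $i-1$, with base case $i=1$, and the lemma is read off at $i=n$), not ``downward with base case $i=n$''; and in the inductive step you should condition on the identity of $\pnt_i$ (not merely on $X_i$) and then invoke the set-independence of the conditional law, exactly as you hint.

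Your ``alternative, perhaps cleaner, route,'' however, contains a genuine error, and it is precisely the trap this lemma is meant to sidestep. Conditioned on $\PntSet_{i-1}=\PntSetA'$, the new element $\pnt_i$ is uniform over $\PntSet\setminus\PntSetA'$, and the probability that it lands in $\PSetX{\PntSetA'\cup\brc{\pnt_i}}$ is \emph{not} $k_i/i$ in general, even though $\cardin{\PSetX{\PntSet_i}}=k_i$ for every realization of $\PntSet_i$: the element $\pnt_i$ is the one \emph{new} element of $\PntSet_i$, not a uniformly chosen element of $\PntSet_i$. For instance, take $\Property(\SetX)$ to be the single smallest element of $\SetX$ (so $k_j=1$ for all $j$); if $\PntSetA'$ already contains the globally smallest element the conditional probability is $0$, while if $\PntSetA'$ consists of the $i-1$ largest elements it is $1$. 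Worse, on the atoms you propose, which fix both $\PntSet_{i-1}$ and $\pnt_i$, the variable $X_i$ is deterministic, so the hypothesis of \lemref{stupid} (the same value $\beta$ on every atom of the decomposition) fails outright. The identity $\Prob{X_i=1 \mid X_1=x_1,\ldots,X_{i-1}=x_{i-1}}=k_i/i$ is of course true, but it cannot be obtained by refining the conditioning to prefix atoms, because on those finer atoms the probability is not constant---this fluctuation is exactly the ``dependency leakage'' discussed before the lemma. The constant-probability decomposition has to be with respect to information relative to which the conditioning events are measurable and the probability is genuinely constant: the ordered suffix, as in the paper (where the conditioned events have \emph{larger} indices), or the set $\PntSet_i$ itself, as in your first argument. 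So keep the first argument and drop the second.
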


\begin{proof}
    Let $\Event_i$ denote the event that
    $\pnt_i \in \Property(\PntSet_i)$.  It suffices to show that the
    events $\Event_1, \ldots, \Event_n$ are mutually independent.  The
    insight is to think about the sampling process of creating the
    random permutation $\permut{\pnt_1, \ldots, \pnt_n}$ in a
    different way.  Imagine we randomly pick a permutation of elements
    in $\PntSet$, and set the last element to be $\pnt_n$.  Next, pick
    a random permutation of the remaining elements of
    $\PntSet \setminus \brc{\pnt_n}$ and set the last element to be
    $\pnt_{n-1}$.  Repeat this process until the whole permutation is
    generated.  Observe that $\Event_j$ is determined before
    $\Event_i$ for any $j > i$.

    Now, consider arbitrary indices
    $1 \leq i_1 < i_2 < \ldots < i_\psi \leq n$.  Observe that by our
    thought experiment, when determining the $i_1$\th value in the
    permutation, the suffix $\permut{\pnt_{i_1+1},\ldots,\pnt_n}$ is
    fixed.  Moreover, the property defined on the remaining set of
    elements marks $k_{i_1}$ elements, and these elements are randomly
    permuted before determining the $i_1$\th value.  Therefore, for
    any fixed sequence $\sfx = \permut{\pnt_{i_1+1},\ldots,\pnt_n}$,
    we have for a random permutation $\Permut$ of $\PntSet$, that
    \begin{math}
        \Prob{ \bigl.  \Event_{i_1} \vert{\, \sfx }}%
        =%
        \Prob{ \bigl.  \Permut \in \Event_{i_1} \vert{\,
              \SuffixY{\Permut}{i+1} = \sfx }}%
        =%
        k_{i_1}/i_1,
    \end{math}
    where
    $\SuffixY{\Permut}{i+1} = \permut{\PermutE_{i+1}, \ldots,
       \PermutE_{n}}$ is the suffix of the last $n-i_1$ elements of
    $\Permut$.  This also readily implies that
    \begin{math}
        \Prob{ \bigl.  \Event_{i_1} }%
        = %
        \sum_{\sfx} \Prob{ \bigl.  \Event_{i_1} \vert{\, \sfx }}
        \Prob{\sfx}%
        =%
        (k_1/i_1)\sum_{\sfx} \Prob{\sfx}%
        =%
        k_1/i_1.%
    \end{math}
    Thus, for any $\sfx$, we have
    $\Prob{ \bigl.  \Event_{i_1} \vert{\, \sfx }} = \Prob{ \bigl.
       \Event_{i_1} }$.

    \medskip%
    \noindent%
    \textbf{Informal argument.} %
    We are intuitively done --- knowing that
    $\Event_{i_2} \cap \ldots \cap \Event_{i_\psi} $ happens only
    gives us some partial information about what the suffix of the
    randomly picked permutation might be. However, the above states
    that even full knowledge of this suffix does not affect the
    probability of $\Event_{i_1}$ happening, thus implying that
    $\Event_{i_1}$ is independent of the other events.

    \medskip%
    \noindent%
    \textbf{Formal argument.} %
    Let $\PermutSet$ be the set of all permutations of $\PntSet$ such
    that $\Event_{i_2} \cap \ldots \cap \Event_{i_\psi} $ happens.
    Observe that whether or not a specific permutation $\Permut$
    belongs to $\PermutSet$ depends only on the value of its suffix
    $\SuffixY{\Permut}{i_1+1}$ --- indeed, once
    $\SuffixY{\Permut}{i_1+1}$ is known, one can determine whether the
    events $\Event_{i_2},\ldots, \Event_{i_\psi}$ happen for
    $\Permut$.

    For any index $i$, let $\SeqSetY{\PntSet}{n-i}$ be the set of
    sequences of distinct elements of $\PntSet$ of length $n-i$.  For a
    sequence $\sfx \in \SeqSetY{\PntSet}{n-i_1}$, let
    $\PermutSet[ \sfx]$ be the set of all permutations of $\PermutSet$
    with the suffix $\sfx$ (this set might be empty). For two
    different suffixes $\sfx, \sfx' \in \SeqSetY{\PntSet}{n-i_1}$, the
    corresponding sets of permutations $\PermutSet[ \sfx]$ and
    $\PermutSet[ \sfx']$ are disjoint. As such, the family
    \begin{math}
        \Set{\bigl. \PermutSet[ \sfx ] }{ \sfx \in \SeqSetY{\PntSet}{n-i_1}}
    \end{math}
    is a partition of $\PermutSet$ into disjoint sets.  

    By the above, for any $\sfx \in \SeqSetY{\PntSet}{n-i_1}$, such
    that $\PermutSet[\sfx]$ is not empty, we have that
    \begin{math}
        \Prob{\bigl. \Permut \in \Event_{i_1} \vert{\, \Permut \in
              \PermutSet[\sfx]}}%
        =%
        \Prob{\bigl. \Event_{i_1} \vert{\, \sfx }}%
        =%
        \Prob{\bigl. \Event_{i_1}}.
    \end{math}
    By \lemref{stupid}, for any arbitrary suffix $\sfx'$ such that
    $\PermutSet[\sfx']$ is not empty, we have
    \begin{align*}
      \Prob{ \bigl. \Event_{i_1} \vert{\, \Event_{i_2} \cap \ldots \cap
      \Event_{i_\psi} }}%
      =%
      \Prob{ \bigl. \Event_{i_1} \vert{\, \cup_{\sfx} \PermutSet[\sfx] }}%
      =%
      \Prob{ \bigl.  \Event_{i_1}\vert\, \sfx' }%
      =%
      \Prob{ \bigl. \Event_{i_1}}.
    \end{align*}

    \noindent%
    \textbf{Putting things together.} %
    By induction, we now have
    \begin{align*}
        \Prob{\bigl. \Event_{i_1} \cap \ldots \cap \Event_{i_\psi} }%
        &=%
        \Prob{\Event_{i_1} \mid \Event_{i_2} \cap \ldots \cap
              \Event_{i_\psi} }%
        \Prob{\bigl. \Event_{i_2} \cap \ldots \cap \Event_{i_\psi} } \\
        &=%
        \Prob{\Event_{i_1}\bigl.} \Prob{\bigl. \Event_{i_2} \cap
           \ldots \cap \Event_{i_{\psi}} }%
        =%
        \prod_{j=1}^\psi \Prob{\bigl. \Event_{i_j}},
        % = \prod_{j=1}^t \frac{k}{i_j},
    \end{align*}
    which implies that the events are mutually independent.
    \DCGVer{{\qed}}
\end{proof}
% SARIEL NEW END SARIEL NEW END

\begin{lemma}
    \lemlab{property:e}%
    % 
    % \RefProofInAppendix{p:e} %
    % 
    Let $\PntSet$ be a set of $n \ge e^2$ elements, and $k \ge 1$ be a fixed
    integer depending only on $\PntSet$.  Let $\Property$ be a
    property of $\PntSet$.  Now, consider a uniform random permutation
    $\permut{\pnt_1, \ldots, \pnt_n}$ of $\PntSet$. For each $i$,
    denote $\PntSet_i = \brc{\pnt_1, \ldots, \pnt_i}$ and let $X_i$ be
    an indicator variable of the event $\pnt_i \in \PSetX{\PntSet_i}$.
    Then we have:
    \InHsienRadical{\begin{compactenum}[(a)]}
	\NotInHsienRadical{\begin{compactenum}[(A)]}
        \item If $\cardin{\PSetX{\PntSetA}} = k$ whenever
        $\cardin{\PntSetA} \ge k$, and
        $\cardin{\PSetX{\PntSetA}} = \cardin{\PntSetA}$ whenever
        $\cardin{\PntSetA} < k$, then for any $\gamma \geq 2e$,
        \[
        \Bigl.\Prob{\Bigl. \sum\nolimits_{i} X_i > \gamma \cdot (2k \ln
           n)} \leq n^{-\gamma k}.
        \]
        
        \item The bound in \NotInHsienRadical{(A)}\InHsienRadical{(a)} holds under a weaker condition: For
        all\, $\PntSetA \subseteq \PntSet$ we have
        $\cardin{\PSetX{\PntSetA}} \leq k$.

        \item An even weaker condition suffices: For a random
        permutation
        $\langle\pnt_1, \ldots, \allowbreak \pnt_n \rangle$ of
        $\PntSet$, assume $\cardin{\PSetX{\PntSet_i}} \leq \newK$ for
        all $i$, with probability $1-n^{-c}$, where
        $\newK= c' \cdot k$, $c$ is an arbitrary constant, and $c'>1$
        is a constant that depends only on $c$.  Then for any
        $\gamma \geq 2e$,
        \[
        \Bigl.\Prob{\Bigl. \sum\nolimits_{i} X_i > \gamma \cdot (2c' k \ln
           n)} \leq n^{-\gamma k} + n^{-c}.
        \]
    \end{compactenum}
\end{lemma}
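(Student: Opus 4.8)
The plan is to prove part (A) by combining \lemref{property:indep} with a multiplicative Chernoff bound, and then to derive parts (B) and (C) from (A) by simple monotone couplings that modify the property $\Property$. For part (A), the two hypotheses say exactly that $\cardin{\PSetX{\PntSetA}}$ depends only on $\cardin{\PntSetA}$ — if $\cardin{\PntSetA}=i$ then $\cardin{\PSetX{\PntSetA}}=k_i\coloneqq\min(i,k)$ — so \lemref{property:indep} applies and tells us that $X_1,\dots,X_n$ are mutually independent, with $\Prob{X_i=1}=k_i/i\le k/i$. Hence $\mu\coloneqq\Ex{\sum_i X_i}=\sum_{i=1}^n k_i/i\le k\sum_{i=1}^n 1/i\le k(1+\ln n)\le 2k\ln n$, where the last inequality uses $n\ge e^2$. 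Since $\sum_i X_i$ is a sum of independent $0/1$ variables, the Chernoff bound gives $\Prob{\sum_i X_i\ge t}\le(e\mu/t)^t$ for every $t\ge\mu$; I would take $t=\gamma\cdot(2k\ln n)$, which is at least $\mu$ because $\gamma\ge1$. Then $e\mu/t\le e/\gamma\le 1/2$ since $\gamma\ge 2e$, so $\Prob{\sum_i X_i>t}\le 2^{-t}=2^{-2\gamma k\ln n}=n^{-2(\ln 2)\gamma k}\le n^{-\gamma k}$, using $2\ln2>1$. (The trivial case $k>n$, where $\sum_i X_i=n<\gamma\cdot2k\ln n$ with probability one, is handled separately.) Given \lemref{property:indep}, this step is routine.

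For part (B), the cardinalities $\cardin{\PSetX{\PntSetA}}$ may genuinely vary with $\PntSetA$, so \lemref{property:indep} no longer applies directly; I would \emph{pad} $\Property$ up to the exact sizes required by (A). Fix once and for all an arbitrary total order on $\PntSet$, and define $\PASetX{\PntSetA}$ to be $\PSetX{\PntSetA}$ together with the first $\min(k,\cardin{\PntSetA})-\cardin{\PSetX{\PntSetA}}$ elements of $\PntSetA\setminus\PSetX{\PntSetA}$ in that order. Because $\cardin{\PSetX{\PntSetA}}\le\min(k,\cardin{\PntSetA})$ there are enough elements in $\PntSetA\setminus\PSetX{\PntSetA}$ for this to be well defined, $\PropertyA$ is again a property, $\cardin{\PASetX{\PntSetA}}=\min(k,\cardin{\PntSetA})$, and $\PSetX{\PntSetA}\subseteq\PASetX{\PntSetA}$. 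Writing $X_i'$ for the indicator of the event $\pnt_i\in\PASetX{\PntSet_i}$ for the \emph{same} random permutation, the containment gives $X_i\le X_i'$ pointwise, so $\sum_i X_i\le\sum_i X_i'$, and applying part (A) to $\PropertyA$ yields (B).

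Part (C) is the delicate one, and the step I expect to be the main obstacle: the bound $\cardin{\PSetX{\PntSet_i}}\le\newK$ now holds only on a good event $\EventA$ with $\Prob{\overline{\EventA}}\le n^{-c}$, and conditioning on $\EventA$ would destroy the uniformity of the random permutation (hence the independence used in (A)) — this ``dependency leakage'' is exactly what the lemma is built to circumvent. I would sidestep the conditioning by \emph{truncating} the property instead: let $\widehat{\Property}$ be the property that agrees with $\Property$ on every $\PntSetA$ with $\cardin{\PSetX{\PntSetA}}\le\newK$ and outputs $\emptyset$ otherwise. Then $\cardin{\widehat{\Property}\pth{\PntSetA}}\le\newK$ for \emph{every} $\PntSetA$, so part (B) applied to $\widehat{\Property}$ with parameter $\newK$ gives $\Prob{\sum_i\widehat{X}_i>\gamma(2\newK\ln n)}\le n^{-\gamma\newK}\le n^{-\gamma k}$, where $\widehat{X}_i$ is the indicator for $\widehat{\Property}$ and the last step uses $\newK=c'k\ge k$. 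On $\EventA$ we have $\widehat{\Property}\pth{\PntSet_i}=\PSetX{\PntSet_i}$ for all $i$, so $X_i=\widehat{X}_i$ for all $i$ and the two sums coincide there; hence, with $T=\gamma(2\newK\ln n)=\gamma(2c'k\ln n)$,
\[
\Prob{\textstyle\sum_i X_i>T}\;\le\;\Prob{\textstyle\sum_i\widehat{X}_i>T}+\Prob{\overline{\EventA}}\;\le\;n^{-\gamma k}+n^{-c},
\]
which is the assertion of (C). (If $\newK$ is not an integer, replace it everywhere by $\lfloor\newK\rfloor\ge k$, which only tightens the bound.)
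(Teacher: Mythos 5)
Your proposal is correct and follows essentially the same route as the paper: part (A) via \lemref{property:indep} plus a Chernoff bound on $\sum_i X_i$ with $\mu \le 2k\ln n$, part (B) by padding the property to exactly $\min(k,\cardin{\PntSetA})$ elements, and part (C) by truncating the property so that (B) applies and then absorbing the bad event of probability $n^{-c}$ via a union bound. The only differences are cosmetic (your truncation to $\emptyset$ versus the paper's truncation to the first $\newK$ elements, and the particular form of the Chernoff inequality), so there is nothing to fix.
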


\begin{proof}
    \NotInHsienRadical{(A)}\InHsienRadical{(a)} Let $\Event_i$ be the event that
    $\pnt_i \in \Property(\PntSet_i)$.  By \lemref{property:indep} the
    events $\Event_1, \ldots, \Event_n$ are mutually independent, and
    % \begin{align*}
    \begin{math}
        \Prob{ \bigl. \Event_{i} }%
        =%
        \cardin{\Property(\PntSet_i)}/{i}%
        =%
        \min\pth{\bigl. {k}/{i},1}.%
    \end{math}
    % \end{align*}
    Thus, we have when $n \ge e^2$,
    \begin{align*}
        % \begin{math}
        \mu = \Ex{\sum\nolimits_i X_i}%
        \leq%
        k + \sum_{k < i\le n} \frac{k}{i}%
%        \leq%
%        k\pth{\Bigl. 1 + \ln n + 1 - \ln k}%
        \leq%
        k\pth{ 2+ \ln n}%
        \leq%
        2 k\ln n.%
        % \end{math}
    \end{align*}
    For any constant $\delta \geq 2e$, by Chernoff's inequality, we
    have $\Prob{\bigl. \sum_i X_i > \delta \mu } < 2^{-\delta \mu}$.
    Therefore by setting $\delta = {\gamma (2 k \ln n)}/{\mu}$ (which
    is at least $2e$ by the assumption that $\gamma \ge 2e$), we have
    \begin{align*}
        \Prob{ \sum\nolimits_i X_i > \gamma (2 k \ln n)}%
      =%
        < 2^{-\gamma (2 k \ln n)} < n^{-\gamma k}.
    \end{align*}

    \NotInHsienRadical{(B)}\InHsienRadical{(b)} In order to extend the result using the weaker condition, we
    augment the given property $\Property$ to a new property
    $\Property'$ that holds for \emph{exactly} $k$ elements.  So, fix
    an arbitrary ordering $\prec$ on the elements of $\PntSet$.  Now
    given any set $\SetX$ with $\cardin{\SetX} \geq k$, if
    $\cardin{\Property(\SetX)} = k$ then let
    $\Property'(\SetX) = \Property(\SetX)$. Otherwise, add the
    $k - \cardin{\Property(\SetX)}$ smallest elements in
    $\SetX \setminus \Property(\SetX)$ according to $\prec$ to
    $\Property(\SetX)$, and let $\Property'(\SetX)$ be the resulting
    subset of size $k$.  We also set $\Property'(\SetX) = \SetX$ for
    all $\SetX$ with $\cardin{\SetX} < k$.  The new property
    $\Property'$ complies with the original condition. For any
    $\SetX$, $\Property(\SetX) \subseteq \Property'(\SetX)$, which
    implies that an upper bound on the probability that the $i$\th
    element is in the property set $\Property'$ is an upper bound on
    the corresponding probability for $\Property$.

    \smallskip

    \NotInHsienRadical{(C)}\InHsienRadical{(c)} We truncate the given property $\Property$ if needed, so that
    it complies with \NotInHsienRadical{(B)}\InHsienRadical{(b)}.  Specifically, fix an arbitrary ordering
    $\prec$ on the elements of $\PntSet$.  Given any set $\SetX$, if
    $\cardin{\Property(\SetX)} \leq \newK$ then
    $\Property'(\SetX) = \Property(\SetX)$. Otherwise,
    $\cardin{\Property(\SetX)} > \newK$, and set $\Property'(\SetX)$
    to be the first $\newK$ of $\Property(\SetX)$ according to
    $\prec$.  Clearly, the new property $\Property'$ complies with the
    condition in \NotInHsienRadical{(B)}\InHsienRadical{(b)}.  Let $\EventA$ denote the event
    $\Property'(\PntSet_i) = \Property(\PntSet_i)$, for all $i$.  By
    assumption, we have
    \begin{math}
        \Prob{\EventA} \geq 1 - n^{-c}.
    \end{math}
    Similarly, let $\EventB$ be the event that
    $\sum_i X_i > \gamma (2 \newK \ln n)$. We now have that
    \begin{align*}
        \Prob{\EventB}%
        \leq%
        \Prob{ \EventB \sep{ \EventA }}%
        \Prob{ \EventA } %
        + \Prob{ \overline{\EventA}}%
        <%
        \pth{1 - \frac{1}{n^c}} n^{-\gamma \newK} + n^{-c}%
        \leq%
        n^{-\gamma k} + n^{-c}
    \end{align*}
    for any $\gamma \ge 2e$.
    \DCGVer{{\qed}}
\end{proof}

The result of \lemref{property:e} is known in the context of
randomized incremental construction algorithms (see
\cite[\S6.4]{bcko-cgaa-08}). However, the known proof is more
convoluted --- indeed, if the property $\Property(\SetX)$ has different
sizes for different sets $\SetX$, then it is no longer true that
variables $X_i$ in the proof of \lemref{property:e} are
independent. Thus the padding idea in part \NotInHsienRadical{(B)}\InHsienRadical{(b)} of the proof is crucial
in making the result more widely applicable.

\myparagraph{Example.} %
To see the power of \lemref{property:e} we provide two easy
applications --- both results are of course known, and are included
here to make it clearer in what settings \lemref{property:e} can be
applied. The impatient reader is encouraged to skip this example.
\InHsienRadical{\begin{enumerate}[(a)]}
\NotInHsienRadical{\begin{enumerate}[(A)]}
    \item \textbf{QuickSort}: We conceptually can think about
    QuickSort as being a randomized incremental algorithm, building
    up a list of numbers in the order they are used as pivots.
    Consider the execution of QuickSort when sorting a set $\PntSet$
    of $n$ numbers.  Let $\permut{\pnt_1, \ldots, \pnt_n}$ be the
    random permutation of the numbers picked in sequence by
    QuickSort.  Specifically, in the $i$\th iteration, it randomly
    picks a number $\pnt_i$ that was not handled yet, pivots based on
    this number, and then recursively handles the subproblems.
    At the $i$\th iteration, a set
    $\PntSet_i = \brc{\pnt_1, \ldots, \pnt_i}$ of pivots has already
    been chosen by the algorithm.  Consider a specific element
    $\query \in \PntSet$.  For any subset $\SetX \subseteq \PntSet$,
    let $\Property(\SetX)$ be the two numbers in $\SetX$ having
    $\query$ in between them in the original ordering of $\PntSet$ and
    are closest to each other.  In other words, $\Property(\SetX)$
    contains the (at most) two elements that are the endpoints of the
    interval of $\Re \setminus \SetX$ that contains $\query$.  Let
    $X_i$ be the indicator variable of the event
    $\pnt_i \in \Property(\PntSet_i)$ --- that is, $\query$ got
    compared to the $i$\th pivot when it was inserted.  Clearly, the
    total number of comparisons $\query$ participates in is
    $\sum_i X_i$, and by \lemref{property:e} the number of such
    comparisons is $O\pth{ \log n }$, with high probability, implying
    that QuickSort takes $O\pth{n \log n}$ time, with high
    probability.
    
    \item \textbf{Point-location queries in a history dag}: Consider a
    set of lines in the plane, and build their vertical decomposition
    using randomized incremental construction. Let
    $L_n = \permut{\ell_1, \ldots, \ell_n}$ be the permutation used by
    the randomized incremental construction.  Given a query point
    $\pnt$, the point-location time is the number of times the
    vertical trapezoid containing $\pnt$ changes in the vertical
    decomposition of $L_i= \permut{\ell_1, \ldots, \ell_i}$, as $i$
    increases. Thus, let $X_i$ the indicator variable of the event
    that $\ell_i$ is one of the (at most) four lines defining the
    vertical trapezoid containing $\pnt$ the vertical decomposition of
    $L_i$.  Again, \lemref{property:e} implies that the query time is
    $O\pth{ \log n}$, with high probability.  This result is well
    known, see \cite{cms-frric-93} and \cite[\S6.4]{bcko-cgaa-08}, 
    but our proof is arguably more direct and cleaner.
\end{enumerate}

%%%%%%%%%%%%%%%%%%%%%%%%%%%%%%%%%%%%%%%%%%%%%%%%%%%%%%%%%%%%%%% 
%%%%%%%%%%%%%%%%%%%%%%%%%%%%%%%%%%%%%%%%%%%%%%%%%%%%%%%%%%%%%%% 

\section{The proxy set}
\seclab{proxy:set}

Providing a reasonable bound on the complexity of the candidate
diagram directly seems challenging.  Therefore, we instead define for
each point $\query$ in the plane a slightly different set, called the
\emph{proxy set}.  First we prove that the proxy set for each point in
the plane has small size (see \lemref{proxy:set:size} below). Then we
prove that, with high probability, the proxy set of $\query$ contains
the candidate set of $\query$ for all points $\query$ in the plane
simultaneously (see \lemref{candidate:in:proxy} below).

\subsection{Definitions}
\seclab{proxy:set:def}

As before, the input is a set of sites $\SiteSet$.  For each site
$\site \in \SiteSet$, we randomly pick a parametric point
$\attribB \in [0,1]^d$ according to the sampling method described in
\secref{sample:model}.

\myparagraph{Volume ordering.} %
Given a point $\pnt = (\pnt_1, \ldots, \pnt_d)$ in $[0,1]^d$, the
\emphi{point volume} $\pv\pth{\pnt}$ of point $\pnt$ is defined to be
$\pnt_1 \pnt_2 \cdots \pnt_d$. That is, the volume of the
hyperrectangle with $\pnt$ and the origin as a pair of opposite
corners.
When $\pnt$ is specifically the associated parametric point of an
input site $\site$, we refer to the point volume of $\pnt$ as the
\emphi{parametric volume} of $\site$.
% \end{defn}
% 
Observe that if point $\pnt$ dominates another point $\pntA$ then
$\pnt$ must have smaller point volume (that is, $\pnt$ lies in the
hyperrectangle defined by $\pntA$).

% \begin{defn}
The \emphi{volume ordering} of sites in $\SiteSet$ is a permutation
$\permut{\site_1, \ldots, \site_n}$ ordered by increasing parametric
volume of the sites. That is,
$\pv(\attribB_1) \leq \pv(\attribB_2) \leq \ldots \leq \pv(\attribB_n)$,
where $\attribB_i$ is the parametric point of $\site_i$.
If $\attribB_i$ dominates $\attribB_j$ then $\site_i$ precedes
$\site_j$ in the volume ordering.  So if we add the sites in volume
ordering, then when we add the $i$\th site $\site_i$ we can ignore all
later sites when determining its region of influence --- that is, the
region of points whose candidate set $\site_i$ belongs to --- as no
later site can dominate $\site_i$.

\myparagraph{\InHsienRadical{\boldmath}$k$ Nearest neighbors.} %
For a set of sites $\SiteSet$ and a point $\query$ in the plane, let
{$\kdistSet{k}{\query}{\SiteSet}$} denote the \emphi{$k$\th nearest
   neighbor distance} to $\query$ in $\SiteSet$. That is, the $k$\th
smallest value in the multiset
$\Set{\bigl. \dist{\query}{\site}}{ \site \in \SiteSet}$.
The~\emphi{$k$ nearest neighbors} to $\query$ in $\SiteSet$ is the set
\begin{align*}
  \atmostK{\SiteSet}{k}{\query}%
  =%
  \Set{ \site \in \SiteSet }%
  { \dist{ \query}{ \site } \leq \kdistSet{k}{\query}{\SiteSet}
  \Big.}.
\end{align*}

\begin{defn}
    \deflab{proxy:set}%
    Let $\SiteSet$ be a set of sites in the plane, and let
    $\permut{\site_1, \ldots, \site_n}$ be the volume ordering of
    $\SiteSet$.  Let $\SiteSet_i$ denote the underlying set of the
    $i$\th prefix $\permut{\site_1, \ldots, \site_i}$ of
    $\permut{\site_1, \ldots, \site_n}$.  For a parameter $k$ and a
    point $\query$ in the plane, the \emphi{$k$\th proxy set} of
    $\query$ is the set of sites %
    \begin{align*}
      \proxySet{k}{\query}{\SiteSet}%
      =%
      \bigcup_{i=1}^n \atmostK{ \SiteSet_i}{k}{\query}.
    \end{align*}
    In words, site $\site$ is in $\proxySet{k}{\query}{\SiteSet}$ if
    it is one of the $k$ nearest neighbors to point $\query$ in some
    prefix of the volume ordering $\permut{\site_1, \ldots, \site_n}$.
\end{defn}

\subsection{Bounding the size of the proxy set}
\seclab{proxy:set:size}

The desired bound now follows by using backward analysis and
\corref{property}.

%\LemmaProofInFullVer{%
\begin{lemma}
   \lemlab{proxy:set:size}%
   Let $\SiteSet$ be a set of $n$ sites in the plane, and let
   $k \geq 1$ be a fixed parameter.  Then we have
   $\cardin{ \proxySet{k}{\query}{\SiteSet} } = \Owhp(k \log n)$
   simultaneously for all points $\query$ in the plane.
\end{lemma}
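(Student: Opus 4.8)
The plan is to fix, for each query point, a suitable ``property'' so that the size of its proxy set becomes exactly a backward-analysis sum, apply \corref{property}, and then take a union bound over the combinatorially distinct query points. As a preliminary, observe that the \emph{volume ordering} $\permut{\site_1, \ldots, \site_n}$ of $\SiteSet$ is a uniformly random permutation: the parametric points are sampled i.i.d., hence so are the parametric volumes $\pv(\attrib_1), \ldots, \pv(\attrib_n)$, and since every $\Distribution_i$ is continuous these volumes are almost surely distinct, so their increasing order is uniform over all $n!$ orderings.

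Fix a point $\query$ in the plane and define a property of $\SiteSet$ by $\Property(\SetX) \coloneqq \atmostK{\SetX}{k}{\query}$, the (at most $k$) nearest neighbors of $\query$ within $\SetX$. Taking the sites in general position, $\cardin{\Property(\SetX)} \le k+2 = O(k)$ for every $\SetX \subseteq \SiteSet$. Let $X_i$ be the indicator of the event $\site_i \in \Property(\SiteSet_i)$. The crux is the identity $\cardin{\proxySet{k}{\query}{\SiteSet}} = \sum_{i=1}^n X_i$. One inclusion, $\brc{\site_i : X_i = 1} \subseteq \proxySet{k}{\query}{\SiteSet}$, is immediate from \defref{proxy:set}. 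For the reverse, the point is that the $k$-th nearest-neighbor distance $\kdistSet{k}{\query}{\SiteSet_i}$ is non-increasing in $i$ (enlarging the site set can only move the $k$-th closest site closer to $\query$); hence if $\site_j \in \atmostK{\SiteSet_i}{k}{\query}$ for some $i \ge j$, then
\[
    \dist{\query}{\site_j} \;\le\; \kdistSet{k}{\query}{\SiteSet_i} \;\le\; \kdistSet{k}{\query}{\SiteSet_j},
\]
so $\site_j \in \atmostK{\SiteSet_j}{k}{\query}$, that is, $X_j = 1$. Thus $\proxySet{k}{\query}{\SiteSet}$ is exactly the set of distinct sites $\site_i$ with $X_i = 1$, and its size is $\sum_i X_i$.

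Since $\Property$ has size at most $k+2$ on every subset, \corref{property} applies in its form without the extra $n^{-c}$ term, giving, for every fixed $\query$ and every $\gamma \ge 2e$,
\[
    \Prob{ \cardin{\proxySet{k}{\query}{\SiteSet}} > \gamma \cdot 2(k+2)\ln n } \le n^{-\gamma(k+2)} \le n^{-\gamma k}.
\]
Next, $\proxySet{k}{\query}{\SiteSet}$ depends on $\query$ only through the order in which the sites appear sorted by distance from $\query$, hence it is constant on each of the $O(n^4)$ faces of the arrangement of the $\binom{n}{2}$ perpendicular bisectors of site pairs. A union bound over one representative per face, followed by choosing $\gamma = \max\pth{2e, (\tau+5)/k}$ for any prescribed error exponent $\tau > 0$, makes the failure probability at most $O\pth{n^{4-\gamma k}} \le n^{-\tau}$ for $n$ large, while $\gamma \cdot 2(k+2)\ln n = O(k \log n)$. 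This is the claimed $\Owhp(k\log n)$ bound, holding simultaneously for all $\query$.

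The main obstacle is the second paragraph: recognizing that the \emph{a priori} uncontrolled union $\proxySet{k}{\query}{\SiteSet} = \bigcup_i \atmostK{\SiteSet_i}{k}{\query}$ collapses precisely to the backward-analysis sum $\sum_i X_i$, which relies on the monotonicity of the $k$-th nearest-neighbor distance along prefixes of the random volume ordering. Once this is in place, the invocation of \corref{property} and the $O(n^4)$-face union bound are routine; one only needs to be careful that $\Property$ is a function of the \emph{unordered} set $\SetX$ (it is) and that the volume ordering really is a uniform random permutation (the preliminary step).
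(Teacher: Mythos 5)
Your proof is correct and follows essentially the same route as the paper: set $\Property(\SiteSet_i) = \atmostK{\SiteSet_i}{k}{\query}$, invoke \corref{property}, and take a union bound over representatives of the $O(n^4)$ faces of the bisector arrangement, within which the proxy set is constant. The only difference is that you make explicit two points the paper leaves implicit---that the volume ordering is a uniform random permutation, and that $\cardin{\proxySet{k}{\query}{\SiteSet}} = \sum_i X_i$ via monotonicity of $\kdistSet{k}{\query}{\SiteSet_i}$ in $i$---which is a welcome tightening rather than a deviation.
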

%}%
% 
\begin{proof}
    Fix a point $\query$ in the plane.  A site $\site$ gets added to
    the proxy set $\proxySet{k}{\query}{\SiteSet}$ if site $\site$ is
    one of the $k$ nearest neighbors of $\query$ among the underlying
    set $\SiteSet_i$ of some prefix of the volume ordering of
    $\SiteSet$.  Therefore a direct application of \corref{property}
    implies (by setting $\Property\pth{\SiteSet_i}$ to be
    $\atmostK{\SiteSet_i}{k}{\query}$), with high probability, that
    $\cardin{ \proxySet{k}{\query}{\SiteSet} \bigl. } = O\pth{ k \log
       n}$.
    
    Furthermore, this holds for all points in the plane
    simultaneously.  Indeed, consider the arrangement determined by
    the $\binom{n}{2}$ bisectors formed by all the pairs of sites in
    $\SiteSet$.  This arrangement is a simple planar map with
    $O\pth{n^4}$ vertices and $O\pth{n^4}$ faces.  Observe that within
    each face the proxy set cannot change since all points in this
    face have the same ordering of their distances to the sites in
    $\SiteSet$.  Therefore, picking a representative point from each
    of these $O\pth{n^4}$ faces, applying the high-probability bound
    to each of them, and then the union bound implies the
    claim.
    \DCGVer{{\qed}}
\end{proof}

\subsection{The proxy set contains the candidate set}
\seclab{proxy:contain:beer}

% The following corollary is implied by a careful (but straightforward)
% integration argument (see \apndref{h:p:minima}).
% % 
% \begin{corollary}
%     \corlab{integral:weak}%
%     % 
%     Let $\VolUArea{d}{\Delta}$ be the volume of the set of points
%     $\pnt$ in $[0,1]^d$ such that the point volume $\pv(\pnt)$ is at
%     most $\Delta$. That is,
%     \begin{align*}
%         \VolUArea{d}{\Delta} = %
%         \volX{ \Set{\pnt \in [0,1]^d}{\pv(\pnt) \le \Delta} }.
%     \end{align*}
%     Then for $\Delta \geq (\log n)/n$ we have
%     $\VolUArea{d}{\Delta} = \Theta(\Delta \log^{d-1} n)$.  In
%     particular,
%     $\VolUArea{d}{\log n/n} = \Theta( (\log^d n) / n )$.
% \end{corollary}

The following corollary is implied by a careful (but straightforward)
integration argument.
\begin{corollary}[Proof in \apndref{h:p:minima}]
    \corlab{integral:weak}%
    Let $\VolUArea{d}{\Delta}$ be the volume of the set of points
    $\pnt$ in $[0,1]^d$ such that the point volume $\pv(\pnt)$ is at
    most $\Delta$, where $\Delta \in (0,1)$. That is,
    \begin{displaymath}
        \Bigl.\VolUArea{d}{\Delta} = %
        \volX{ \Set{\pnt \in [0,1]^d}{\pv(\pnt) \le \Delta} }.
    \end{displaymath}
    Then, we have that
    \begin{math}
        \VolUArea{d}{\Delta} = \sum_{i=0}^{d-1} \frac{\Delta}{i!} {
           \ln^{i} \!\tfrac{1}{\Delta} }%
        =%
        O(\Delta \log^{d-1} n).
    \end{math}
\end{corollary}

%\LemmaProofInFullVer{%
\begin{lemma}
   \lemlab{candidate:in:proxy}%
   Let $\SiteSet$ be a set of $n$ sites in the plane, and let
   $k = \Theta(\log^d n)$ be a fixed parameter.  For all points
   $\query$ in the plane,  
   $\candid(\query) \subseteq \proxySet{k}{\query}{\SiteSet}$ 
   with high probability.%
\end{lemma}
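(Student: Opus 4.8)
The plan is to fix a client location $\query$ and prove that, for each rank $r$, if the $r$th closest site to $\query$ belongs to $\candid(\query)$ then it also belongs to $\proxySet{k}{\query}{\SiteSet}$, with failure probability polynomially small in $n$; a union bound over $r=1,\dots,n$ and over one representative point in each of the $O\pth{n^4}$ faces of the arrangement of the $\binom{n}{2}$ bisectors of $\SiteSet$ — within each face both $\candid(\cdot)$ and $\proxySet{k}{\cdot}{\SiteSet}$ are constant, exactly as in the proof of \lemref{proxy:set:size} — then gives the statement for all points of the plane at once. After applying the coordinate-wise cumulative distribution functions we may assume each $\Distribution_i$ is uniform on $[0,1]$: this does not change the domination order among parametric points, hence does not change the candidate diagram, and the volume ordering taken with respect to the transformed coordinates is still a linear extension of the domination order and a uniform random permutation (which is all that is used here and in later sections).

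Fix $\query$, let $\site_{[1]},\dots,\site_{[n]}$ be the sites sorted by increasing distance from $\query$, and let $\pnt_j$ be the (transformed, hence uniform) parametric point of $\site_{[j]}$. Since the distance order depends only on $\query$ and the fixed site locations, $\pnt_1,\dots,\pnt_n$ are i.i.d.\ uniform in $[0,1]^d$. Write $\PntSet_r=\brc{\pnt_1,\dots,\pnt_r}$. Two elementary observations recast membership in the two sets of interest as statements about $\PntSet_r$ alone. First, by \Eqref{candidate:set}, $\site_{[r]}\in\candid(\query)$ iff no site strictly closer to $\query$ than $\site_{[r]}$ has a parametric point dominating $\pnt_r$, i.e.\ iff $\pnt_r$ is not dominated by any other point of $\PntSet_r$. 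Second, the smallest prefix of the volume ordering that contains $\site_{[r]}$ is exactly the set of sites whose point volume is at most $\pv(\pnt_r)$, and larger prefixes only add sites; hence $\site_{[r]}\in\proxySet{k}{\query}{\SiteSet}$ iff $\pv(\pnt_r)$ is among the $k$ smallest of $\pv(\pnt_1),\dots,\pv(\pnt_r)$. So the ``bad'' event $B_r$ for rank $r$ is: $\pnt_r$ is undominated in $\PntSet_r$, yet at least $k$ of the other points of $\PntSet_r$ have point volume smaller than $\pv(\pnt_r)$.

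To bound $\Prob{B_r}$, condition on $\pnt_r=\pnt$, so that $\pnt_1,\dots,\pnt_{r-1}$ are $r-1$ independent uniform points; set $a=\pv(\pnt)=\Prob{\pnt'\preceq\pnt}$ and $b=\VolUArea{d}{a}=\Prob{\pv(\pnt')\le a}$ for $\pnt'$ uniform in $[0,1]^d$, noting $a\le b$ since $\brc{\pnt'\preceq\pnt}\subseteq\brc{\pv(\pnt')\le a}$. On $B_r$, none of $\pnt_1,\dots,\pnt_{r-1}$ lies in the orthant $\brc{\pnt'\preceq\pnt}$ (probability $(1-a)^{r-1}$), and the $k$ or more points of smaller volume must then lie in $\brc{\pv(\pnt')\le a}\setminus\brc{\pnt'\preceq\pnt}$, a region of measure $b-a$; conditioned on the first event the $\pnt_j$ remain independent, each in that region with probability $(b-a)/(1-a)\le b$. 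Hence
\begin{displaymath}
   \Prob{B_r \mid \pnt_r=\pnt}\;\le\;(1-a)^{r-1}\,\Prob{\mathrm{Bin}(r-1,b)\ge k}.
\end{displaymath}
If $(r-1)b\le k/6$ the binomial upper tail is at most $\pth{e(r-1)b/k}^{k}\le(e/6)^k<2^{-k}$, which is below $n^{-\gamma-5}$ once the constant hidden in $k=\Theta(\log^d n)$ is taken large enough in terms of $\gamma$ and $d$. Otherwise $b>k/\pth{6(r-1)}$; by the upper bound $\VolUArea{d}{a}=O\pth{a\,\log^{d-1}(1/a)}$ furnished by \corref{integral:weak}, this forces $a>n^{-2}$ for $n$ large (otherwise $b$ would be $O\pth{n^{-2}\log^{d-1}n}=o(\log^d n/n)\le k/(6(r-1))$), so $\log(1/a)=O(\log n)$ and therefore $(r-1)a=\Omega\pth{k/\log^{d-1}n}=\Omega(\log n)$ with a constant proportional to that of $k$; thus $(1-a)^{r-1}\le e^{-(r-1)a}\le n^{-\gamma-5}$. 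In both cases $\Prob{B_r\mid\pnt_r=\pnt}\le n^{-\gamma-5}$ for every $\pnt$, so $\Prob{B_r}\le n^{-\gamma-5}$.

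Summing over $r=1,\dots,n$ gives $\Prob{\bigcup_{r}B_r}\le n^{-\gamma-4}$ for the fixed $\query$, and the union over the $O\pth{n^4}$ representative points multiplies this by $O\pth{n^4}$, leaving failure probability $O\pth{n^{-\gamma}}$; as $\gamma$ is arbitrary, $\candid(\query)\subseteq\proxySet{k}{\query}{\SiteSet}$ for all $\query$ with high probability. The main obstacle is precisely the estimate of $\Prob{B_r}$ above: the bad event forces, simultaneously, that the orthant below $\pnt_r$ be empty of the earlier points (unlikely when that orthant is large) and that many earlier points have point volume smaller than $\pnt_r$ (which, through \corref{integral:weak}, can only happen when that orthant is large), and one must quantify this tension so that at least one of the two factors is always polynomially small; a secondary point is to verify that conditioning on the distance order of the sites and on $\pnt_r$ leaves the remaining parametric points i.i.d., which is what makes the clean reduction to a binomial tail legitimate.
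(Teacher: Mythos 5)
Your proof is correct, but it is not the proof the paper gives, and the comparison is worth recording. The paper argues site-by-site: it first uses a Chernoff bound to ensure that, with high probability, all sites whose point volume is below $\log n/n$ already lie in $\proxySet{k}{\query}{\SiteSet}$ (so any excluded site $\site_i$ has $\pv(\attrib_i)=\Omega(\log n/n)$), and then shows that among the $k$ nearest neighbors of $\query$ in the volume-prefix ending at $\site_i$ --- all closer to $\query$ and of smaller volume, hence with parametric points lying in the sublevel region $F_i$ --- each dominates $\attrib_i$ with conditional probability $\Theta(1/\log^{d-1}n)$, so the expected number of dominators is $\Theta(\log n)$ and Chernoff gives one with high probability. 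You instead characterize proxy membership exactly (the $r$\th closest site is in the proxy set iff its point volume is among the $k$ smallest in the distance-prefix $\pnt_1,\ldots,\pnt_r$ --- a clean reformulation the paper never states), define a per-rank bad event, and bound it by conditioning on $\pnt_r$ and splitting on whether $(r-1)\VolUArea{d}{\pv(\pnt_r)}$ is small (binomial tail) or large (empty-orthant probability $(1-a)^{r-1}$). Both arguments hinge on the same quantitative tension, namely $\VolUArea{d}{\Delta}/\Delta=\Theta(\log^{d-1}n)$, but your decomposition buys a few things: it removes the separate preprocessing step for tiny-volume sites, it makes the conditioning fully transparent (the paper's assertion that the $k$ prefix-neighbors are each ``equally likely to be anywhere in $F$'' and then applying Chernoff is exactly the delicate point your i.i.d.-after-conditioning setup resolves), and it makes explicit the reduction to uniform marginals via the coordinate-wise CDF transform, which the paper's own proof uses only implicitly (its claim that a random parametric point dominates $\attrib_i$ with probability exactly $\Delta_i$ presupposes uniformity). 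Two small caveats: the bound $\VolUArea{d}{a}=O\pth{a\log^{d-1}(1/a)}$ that you invoke for all $a$ (including $a$ below $\log n/n$) is really furnished by \lemref{integral} in the appendix rather than by \corref{integral:weak}, whose statement only covers $\Delta\ge \log n/n$; and your WLOG step tacitly redefines the volume ordering (hence $\proxySet{k}{\query}{\SiteSet}$) with respect to the transformed coordinates --- this is consistent with how the proxy machinery is used in the rest of the paper, as you note, but it should be stated as a convention rather than left as an aside.
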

%}

%\InNotProcVer{% 
\begin{proof}
    Fix a point $\query$ in the plane, and let $\site_i$ be any site
    \emph{not} in $\proxySet{k}{\query}{\SiteSet}$, and let
    $\attribB_i$ be the associated parametric point.  We claim that,
    with high probability, the site $\site_i$ is dominated by some
    other site which is closer to $\query$, and hence by the
    definition of dominating preference (\defref{dominating:pref}),
    $\site_i$ cannot be a site used by $\query$ (and thus
    $\site_i \notin \candid(\query)$).  Taking the union bound over
    all sites not in $\proxySet{k}{\query}{\SiteSet}$ then implies
    this claim.
    
    By \corref{integral:weak}, the total measure of the points in
    $[0,1]^d$ with point volume at most
    $\Delta = (\log n) / n$ is $O( (\log^d n)/n )$.  As such,
    by Chernoff's inequality, with high probability, there are
    $K = O(\log^d n)$ sites in $\SiteSet$ such that their parametric
    points have point volume smaller than $\Delta$.  In particular, by
    choosing $k$ to be larger than $K$, the underlying set
    $\SiteSet_k$ of the $k$\th prefix of the volume ordering of
    $\SiteSet$ will contain all these small point volume sites, and
    since $\SiteSet_k \subseteq \proxySet{k}{\query}{\SiteSet}$, so
    will $\proxySet{k}{\query}{\SiteSet}$.  Therefore, from this point
    on, we will assume that
    $\site_i \notin \proxySet{k}{\query}{\SiteSet}$ and
    $\Delta_i = \pv(\attribB_i) =\Omega(\log n / n)$.
    
    Now any site $\site$ with smaller parametric volume than $\site_i$
    is in the (unordered) prefix $\SiteSet_i$.  In particular,
    the $k$ nearest neighbors $\atmostK{\SiteSet_i}{k}{\query}$ of
    $\query$ in $\SiteSet_i$ all have smaller parametric volume than
    $\site_i$.  Hence $\proxySet{k}{\query}{\SiteSet}$ contains $k$
    points all of which have smaller parametric volume than $\site_i$,
    and which are closer to $\query$.  Therefore, the claim will be
    implied if one of these $k$ points dominates $\site_i$.

    \begin{figure}[t]
        \centerline{\includegraphics[width=0.3\linewidth]{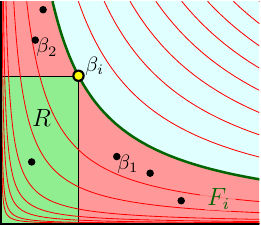}}%
        \captionof{figure}{The red shaded region $F_i$ consists of parametric points whose point volume is at most the point volume of $\attribB_i$.  The green shaded region $R$ consists of the parametric points that dominate $\attribB_i$.  The red curves are contours of the point volume function.}%
        \figlab{dominating}%
    \end{figure}

    The probability of a site $\site$ (that is closer to $\query$ than
    $\site_i$) with parametric point $\attribB$ to dominate $\site_i$
    is the probability that $\attribB \preceq \attribB_i$ given that
    $\attribB \in F_i$, where
    $F_i = \Set{ \attribB \in {[0,1]^d}}{ \pv(\attribB) \leq
       \Delta_i}$.  \corref{integral:weak} implies that
    $\volX{F_i} = \VolUArea{d}{\Delta_i} = O( \Delta_i
    \smash{\,\log^{d-1}} n )$.  The probability that a random
    parametric point in $[0,1]^d$ dominates $\attribB_i$ is exactly
    $\Delta_i$, and as such the desired probability
    $\Prob{ \attribB \preceq \attribB_i \sep{ \attribB \in F_i}}$ is
    equal to $\Delta_i / \VolUArea{d}{\Delta_i}$, which is
    $\Omega(1/\log^{d-1} n)$.  This is depicted in \figref{dominating}
    --- the probability of a random point picked uniformly from the
    region $F_i$ under the curve $y = \Delta_i/x$, induced by
    $\site_i$, to fall in the rectangle $R$.
    
    As the parametric point of each one of the $k$ points in
    $\atmostK{\SiteSet_i}{k}{\query}$ has equal probability to be
    anywhere in $F$, this implies the expected number of points in
    $\atmostK{\SiteSet_i}{k}{\query}$ which dominate $\site_i$ is
    $\Prob{ \attribB \preceq \attribB_i \sep{ \attribB \in F_i}} \cdot k =
    \Omega(\log n)$.
    Therefore by making $k$ sufficiently large, Chernoff's inequality
    implies the desired result.
    
    It follows that the statement holds, for all points in the plane
    simultaneously, by following the argument used in the proof of
    \lemref{proxy:set:size}.
    \DCGVer{{\qed}}
\end{proof}%
% }

%%%%%%%%%%%%%%%%%%%%%%%%%%%%%%%%%%%%%%%%%%%%%%%%%%%%%%%%%%%%%%% 
%%%%%%%%%%%%%%%%%%%%%%%%%%%%%%%%%%%%%%%%%%%%%%%%%%%%%%%%%%%%%%% 

\section{Bounding the complexity of the \InHsienRadical{\boldmath}$k$\th%
order proxy diagram}
\seclab{k:th:order}

The \emphi{$k$\th proxy cell} of $\query$ is the set of all
the points in the plane that have the same $k$\th proxy set associated
with them. Formally, this is the set
\[
  \Set{ \bigl. \pnt \in \Re^2}{ \bigl. \proxySet{k}{\pnt}{\SiteSet} =
  \proxySet{k}{\query}{\SiteSet} }\!.
\]
The decomposition of the plane into these faces is the 
\emphi{$k$\th order proxy diagram}.  
In this section, our goal is to prove that
the expected total diagram complexity of the $k$\th order proxy
diagram is $O\pth{ k^4 n\log n }$.  To this end, we relate this
complexity to the overlay of star-shaped polygons that rise out of the
$k$\th order Voronoi diagram.

\subsection{Preliminaries}
\seclab{k:th:order:prelim}

\subsubsection{The \InHsienRadical{\boldmath}$k$\th order Voronoi diagram}

Let $\SiteSet$ be a set of $n$ sites in the plane.  The
\emphi{$k$\th order Voronoi diagram} of $\SiteSet$ is a
partition of the plane into faces such that each cell is the locus of
points which have the same set of $k$ nearest sites in $\SiteSet$ (the
internal ordering of these $k$ sites, by distance to the query point,
may vary within the cell).  It is well known that the worst case
complexity of this diagram is $\Theta\pth{k(n-k)}$ (see \cite[\S6.5]{akl-vddt-13}).

\subsubsection{Arrangements of planes and lines}

One can interpret the $k$\th order Voronoi diagram in terms of an
arrangement of planes in $\Re^3$.  Specifically, ``lift'' each site to
the paraboloid $\pth{x,y, -(x^2+y^2)}$.  Consider the arrangement of
planes $\HPlanes$ tangent to the paraboloid at the lifted locations of
the sites.  A point on the union of these planes is of \emph{level
   $k$} if there are exactly $k$ planes strictly below it.  The
\emphi{$k$-level} is the closure of the set of points of
level $k$.\footnote{The lifting of the sites to the paraboloid
   $z = -(x^2+y^2)$ is done so that the definition of the $k$-level
   coincide with the standard definition. }
(For any set of $n$ hyperplanes in $\Re^d$, one can define $k$-levels
of arrangement of hyperplanes analogously.)
Consider a point $\query$ in the $xy$-plane.  The decreasing
$z$-ordering of the planes vertically below $\query$ is the same as
the ordering, by decreasing distance from $\query$, to the
corresponding sites.
Hence, let {$\planeLevel{k}{\HPlanes}$} denote the set of edges in the
arrangement $\HPlanes$ on the $k$-level, where an edge is a maximal
portion of the $k$-level that lies on the intersection of two planes
(induced by two sites).  Then the projection of the edges in
$\planeLevel{k-1}{\HPlanes}$ onto the $xy$-plane results in the edges
of the $k$\th order Voronoi diagram.
% The set of all the edges from the first $k$ levels is denoted by
% $\planeLevel{\leq k}{\SiteSet}$.
When there is no risk of confusion, we also use
$\planeLevel{k}{\SiteSet}$ to denote the set of edges in
$\planeLevel{k}{\HPlanes}$, where $\HPlanes$ is obtained by lifting
the sites in $\SiteSet$ to the paraboloid and taking the tangential
planes, as described above.

We also need the notion of $k$-levels of arrangement of lines.  For 
set of lines $L$ in the plane, let $\planeLevel{k}{L}$ denote the
set of edges in the arrangement of $L$ on the $k$-level. 
We need the following lemma.

\begin{lemma}
    \lemlab{on:L}%
    Let $L$ be a set of lines in general position in the plane, and let
    $\ell$ be any line in $L$.  Then at most $k+2$ edges from
    $\planeLevel{k}{L}$, the $k$-level of the arrangement of $L$, can
    lie on $\ell$.%
\end{lemma}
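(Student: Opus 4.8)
The plan is to reduce \lemref{on:L} to an elementary fact about $\pm 1$-lattice paths. First I would dispose of the case where $\ell$ is vertical --- there the level increases by exactly $1$ at each crossing as one moves up $\ell$, so each value occurs on at most one edge --- and otherwise sweep a point along $\ell$ from left to right. The $n-1$ points where the other lines of $L$ cross $\ell$ cut $\ell$ into $n$ edges; writing $v_0, v_1, \dots, v_{n-1}$ for their levels (the number of lines of $L$ strictly below an interior point of the edge) in left-to-right order, general position guarantees that passing each crossing flips exactly one line from below $\ell$ to above it, or vice versa, so $\cardin{v_{j+1} - v_j} = 1$ for all $j$. An edge of $\ell$ belongs to $\planeLevel{k}{L}$ exactly when the corresponding $v_j$ equals $k$, so it suffices to bound $t \coloneqq \cardin{\brc{j : v_j = k}}$; the argument will in fact give $t \le k+1$, which is even better than the claimed $k+2$.

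The one quantitative ingredient is that the number of \emph{down-steps} (indices $j$ with $v_{j+1} = v_j - 1$) equals $v_0$: a down-step occurs exactly at a crossing where the other line passes from below $\ell$ to above $\ell$ as one moves rightward, and over the whole line this happens once for each line of slope larger than that of $\ell$, which are precisely the $v_0$ lines lying below $\ell$ near the left end. Now assume $t \ge 1$ (otherwise there is nothing to prove) and let $j_1 < \dots < j_t$ be the indices with $v_{j_r} = k$. Since steps are $\pm 1$, consecutive visits satisfy $j_{r+1} \ge j_r + 2$, and between two consecutive visits the path performs an \emph{excursion} that stays strictly above $k$ or strictly below $k$; an ``above'' excursion starts with an up-step and ends with a down-step, a ``below'' excursion starts with a down-step, so each of the $t-1$ excursions contains at least one down-step, and distinct excursions use distinct down-steps. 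Finally, the initial piece of the path, running from $v_0$ to the first visit $v_{j_1} = k$, has net height change $k - v_0$, hence contains at least $v_0 - k$ down-steps whenever $v_0 \ge k$.

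Assembling the count, the $v_0$ down-steps are distributed among the initial piece, the $t-1$ inter-visit excursions, and the final piece, so $v_0 \ge \max(v_0 - k,\, 0) + (t-1)$. If $v_0 \ge k$ this yields $t \le k+1$, and if $v_0 < k$ it yields $t \le v_0 + 1 \le k$; in all cases $t \le k+1 \le k+2$, which proves the lemma. The step I expect to need the most care is the excursion decomposition together with the observation that every excursion must spend a down-step --- this is exactly where the $\pm 1$-step structure of the level sequence enters, and it is what prevents $t$ from growing like $n$ instead of like $k$; the surrounding bookkeeping, and the harmless edge cases ($\ell$ never reaching level $k$, or $v_0 = k$, or $v_{n-1} = k$, which make the initial or final piece empty), are routine.
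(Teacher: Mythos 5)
Your proof is correct, and it even establishes the slightly stronger bound of $k+1$ edges of $\planeLevel{k}{L}$ on $\ell$, which of course implies the stated $k+2$. The skeleton is the same as in the paper---sweep along $\ell$ and exploit that the level sequence of its edges is a $\pm 1$ walk---but the accounting is dual to ours. The paper first shears so that $\ell$ is horizontal and then charges each departure-and-return to the $k$-level to a crossing with a negative-slope line (an up-step); since every such line stays below $\ell$ forever after being crossed, after $k+1$ returns the level is permanently above $k$, giving $k+2$ edges in total. You instead budget the down-steps: their total number along $\ell$ is exactly $v_0$ (one per line of larger slope, i.e., per line initially below $\ell$), the initial descent consumes at least $v_0-k$ of them when $v_0\ge k$, and your excursion decomposition shows each of the $t-1$ gaps between consecutive level-$k$ edges consumes a further distinct down-step, whence $t\le k+1$. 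What your version buys is the sharper constant, no need for the normalizing linear transformation, and a cleanly global budget argument; what the paper's version buys is a very short local charging scheme (each return is paid for on the spot by the line crossed at the return point) at the cost of one extra unit in the bound. Since the lemma is only used through an $O(k)$ bound in \lemref{exact:k}, either constant suffices downstream.
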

\begin{proof}
    This lemma is well known, and its proof is included here for the
    sake of completeness.
    
    \NotDCGVer{%
       \parpic[r]{ \includegraphics{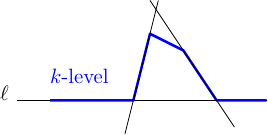} }%
    }
    \DCGVer{%
       \begin{figure}[t]
           \centerline{ \includegraphics{figs/level} }%
           \captionof{figure}{}%
           \figlab{k:level}%
       \end{figure}%
    }

    \NotDCGVer{\InHsienRadical{\picskip{5}\linenumbers}}
    \DCGVer{\InHsienRadical{\picskip{6}\linenumbers}}
    \noindent%
    Perform a linear transformation such that $\ell$ is horizontal and
    the $k$-level is preserved.  As we go from left to right along the
    now horizontal line $\ell$ (starting from $-\infty$), we may leave
    and enter the $k$-level multiple times.  However, every time we
    leave and then return to the $k$-level we must intersect a
    negative slope line in between.  Specifically, both when we leave
    and return to the $k$-level, there must be an intersection with
    another line.  If the line intersecting the leaving point has a
    negative slope then we are done, so assume it has positive slope.
    In this case the level on $\ell$ decreases as we leave the
    $k$-level, and therefore when we return to the $k$-level, the
    point of return must be at an intersection with a negative slope
    line %
    \NotDCGVer{(see figure on the right).}%
    \DCGVer{(see \figref{k:level}).}%
    
    So after leaving and returning to the $k$-level $k+1$ times, there
    must be at least $k+1$ negative slope lines below, which implies
    that the remaining part of $\ell$ is on level strictly larger than
    $k$.
    \DCGVer{{\qed}}
\end{proof}

\begin{lemma}
    \lemlab{exact:k}%
    Let $L$ be a set of $n$ lines in general position in the
    plane. Fix any arbitrary insertion ordering of the lines in $L$,
    then the total number of distinct vertices on the $k$-level of the
    arrangement of $L$ seen over all iterations of this insertion
    process is bounded by $O(nk)$.%
\end{lemma}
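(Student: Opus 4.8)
The plan is to count, once each, all the vertices that ever lie on the $k$-level of some prefix arrangement $L_i = \brc{\ell_1,\dots,\ell_i}$, by charging every such vertex to the iteration at which it \emph{first} appears on the $k$-level, and to exploit a monotonicity property that makes these first-appearance events manageable. First I would set up the following observation: for a fixed vertex $v=\ell_a\cap\ell_b$ of the arrangement of $L$, let $\mu_i(v)$ be the number of lines of $L_i$ other than $\ell_a,\ell_b$ lying strictly below $v$ (defined once $a,b\le i$). Inserting a line can only add lines below $v$, so $\mu_i(v)$ is non-decreasing in $i$. A routine local argument (of the same flavor as the proof of \lemref{on:L}) shows that $v$ lies on the $k$-level of $L_i$ exactly when $\mu_i(v)\in\brc{k-1,k}$; hence the set of iterations at which $v$ is on the $k$-level is a single (possibly empty) contiguous interval, and each relevant vertex has a well-defined first iteration $f(v)$.

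Next I would split the charge into two cases. If $f(v)$ equals the birth iteration $\max(a,b)$ of $v$, then $v$ lies on the line inserted at iteration $f(v)$ and is a vertex of the $k$-level of $L_{f(v)}$; by \lemref{on:L} at most $k+2$ edges of that $k$-level lie on a single line, so at most $k+3$ such vertices are charged to any one iteration, contributing $O(nk)$ in total. The remaining case is $f(v)>\max(a,b)$: both lines through $v$ are already present and the newly inserted line $\ell_{f(v)}$ passes strictly below $v$, pushing $\mu(v)$ from $k-2$ to $k-1$; at that same moment $v$ leaves the $(k-2)$-level, and by monotonicity it does so only once during the whole process. So the number of such vertices is at most the total number of distinct vertices ever deleted from the $(k-2)$-level, which is at most the total number of distinct vertices ever on the $(k-2)$-level. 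Writing $T_k$ for the quantity we want to bound, this gives a recurrence $T_k \le (k+3)n + T_{k-2}$, with base case $T_0 = O(n)$: each insertion adds at most two new vertices to the lower envelope (\lemref{on:L} with $k=0$), and no old vertex can ever re-enter level $0$ since $\mu$ only increases.

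The step I expect to be the main obstacle is tightening the second case: the naive recurrence above only yields $O(nk^2)$, and getting the claimed $O(nk)$ requires a sharper charging for the old vertices that re-enter the $k$-level. When $v$ re-enters at iteration $j=f(v)$, it is a vertex of the $k$-level of $L_j$ lying \emph{above} the newly inserted line $\ell_j$, and the goal is to show that only $O(k)$ such vertices can be charged to each $\ell_j$ — intuitively because $\ell_j$ can lift only a per-line bounded amount of the $(k-1)$-level of $L_{j-1}$ up onto the $k$-level of $L_j$. I would try to make this precise either by an \lemref{on:L}-style argument applied to the portion of $\ell_j$ lying below the $(k-1)$-level of $L_{j-1}$, or by a Clarkson–Shor style accounting of the $(k+1)$-tuples $\brc{\ell_a,\ell_b}\cup(\text{the }k-1\text{ lines below }v)$ whose newest member is $\ell_j$; once this $O(k)$-per-line bound is in hand, summing over the $n$ insertions gives $O(nk)$ and everything else is bookkeeping.
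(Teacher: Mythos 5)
Your setup is sound: the monotonicity of $\mu$, the characterization that a vertex lies on the $k$-level exactly when $\mu\in\brc{k-1,k}$, and the treatment (via \lemref{on:L}) of vertices that are on the $k$-level already at the iteration of their creation all check out. But the proposal does not prove the lemma: as you concede, the recurrence only yields $O(nk^2)$, and the strengthening you hope to establish---that only $O(k)$ ``re-entering'' vertices can be charged to each inserted line $\ell_j$---is in fact false. Take $k=2$: first insert $n-1$ lines tangent to a concave curve, so that their lower envelope has $n-2$ vertices, each with no line strictly below it; then insert one very low, nearly horizontal line. That single insertion places a line below every one of those $\Theta(n)$ vertices, so their $\mu$ jumps from $0$ to $1$ and all of them enter the $2$-level simultaneously, and none of them lies on the inserted line. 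Hence neither an \lemref{on:L}-style argument restricted to $\ell_j$ nor a Clarkson--Shor style accounting can deliver a per-insertion $O(k)$ bound for your second case: no such bound exists, so this route cannot be completed.

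The missing idea is to charge each vertex to its \emph{birth} iteration rather than to its first appearance on the $k$-level. If a vertex $v$ is ever on the $k$-level, then at that moment at most $k$ lines lie strictly below it, and by your own monotonicity observation the same holds at the iteration $b(v)=\max(a,b)$ at which $v$ is created; at that moment $v$ lies on the just-inserted line $\ell_{b(v)}$. So it suffices to show that, in any arrangement, a fixed line $\ell$ carries at most $2(k+1)$ vertices having at most $k$ lines strictly below them. Split the crossings on $\ell$ at level at most $k$ according to whether the crossing line passes from below $\ell$ (to the left) to above $\ell$ (to the right), or the reverse. Among level-at-most-$k$ crossings of the first kind, consider the leftmost one: every other line of this kind crosses $\ell$ further to the right and is therefore still strictly below $\ell$ at this leftmost crossing, so there are at most $k$ others, i.e., at most $k+1$ crossings of this kind; symmetrically for the second kind. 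Charging each vertex once, to the line inserted at its birth, gives at most $2(k+1)n=O(nk)$ distinct vertices ever on the $k$-level, with no recursion and no need to control how many old vertices rise onto the $k$-level in any single step. This is in effect the paper's charging scheme (everything is paid for on the newly inserted line, in the spirit of \lemref{on:L}); the subtlety you correctly flagged---old vertices joining the $k$-level later---is resolved not by bounding those events per iteration, but by observing that any such vertex was already at level at most $k$ at birth and is therefore already accounted for there.
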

\begin{proof}%
    Let $\ell_i$ be the $i$\th line inserted, and let $L_i$ be the set
    of the first $i$ lines.  Any new vertex on the $k$\th
    level created by the insertion must lie on $\ell_i$.  However, by
    \lemref{on:L} at most $k+2$ edges from $\planeLevel{k}{L_i}$ can
    lie on $\ell_i$.  As each such edge has at most two endpoints, the
    insertion of $\ell_i$ contributes $O\pth{ k }$ vertices to the
    $k$-level.  The bound now follows by summing over all $n$ lines.
    \DCGVer{{\qed}}
\end{proof}%

\subsection{Bounding the size of the below conflict-lists}
\seclab{below:conflict:size}

\subsubsection{The below conflict lists}

Let $\HPlanes$ be a set of $n$ planes in general position in $\Re^3$.
(For example, in the setting of the $k$\th order Voronoi diagram,
$\HPlanes$ is the set of planes that are tangent to the paraboloid at
the lifted locations of the sites.)  For any subset
$\RSample \subseteq\HPlanes$, let $\VLvl{k}{\RSample}$ denote the
vertices on the $k$-level of the arrangement of ${\RSample}$.
Similarly, let
$\VLvl{\leq k}{\RSample} = \bigcup_{i=0}^k \VLvl{k}{\RSample}$
be the set of vertices of level at most $k$ in the arrangement of
${\RSample}$, and let $\planeLevel{\leq k}{\RSample}$ be the set of
edges of level at most $k$ in the arrangement of ${\RSample}$.  For a
vertex $\vertex$ in the arrangement of ${\RSample}$, the \emphi{below
   conflict list} $\BclX{\vertex}$ of $\vertex$ is the set of planes
in $\HPlanes$ (not $\RSample$) that lie strictly below $\vertex$, and
let $\bclX{\vertex} = \cardin{\BclX{\vertex}}$.  For an edge $\edge$
in the arrangement of ${\RSample}$, the \emphi{below conflict list}
$\BclX{\edge}$ of $\edge$ is the set of planes in $\HPlanes$ (again,
not $\RSample$) which lie below $\edge$ (that is, there is at least one
point on $\edge$ that lies above such a plane), and let
$\bclX{\edge} = \cardin{\BclX{\edge}}$.  Our purpose here is to bound
the quantities
\begin{math}
    \smash{ \ExChar \big[ {\sum_{\vertex \in \VLvl{\leq k}{\RSample} }
          \bclX{\vertex}} \big] }
\end{math}
and
\begin{math}
    \smash{ \ExChar \big[ {\sum_{\edge \in \planeLevel{\leq
                k}{\RSample} } \bclX{\edge}} \big] }.
\end{math}

\subsubsection{The Clarkson-Shor technique}

In the following, we use the Clarkson-Shor technique
\cite{cs-arscg-89}, stated here without proof (see \cite{h-gaa-11} for
details).  Specifically, let $\SiteSet$ be a set of elements such that
any subset $\RSample \subseteq \SiteSet$ defines a corresponding set
of objects $\ObjX{\RSample}$ (e.g., $\SiteSet$ is a set of planes and
any subset $\RSample \subseteq \SiteSet$ induces a set of vertices in
the arrangement of planes $\RSample$).
Each potential object, $\object$, has a defining set and a stopping
set.  The \emphi{defining set}, $\DefSet{\object}$, is a subset of
$\SiteSet$ that must appear in $\RSample$ in order for the object to
be present in $\ObjX{\RSample}$.  We require that the defining set has
at most a constant size for every object.  The \emphi{stopping set},
$\KillSet{\object}$, is a subset of $\SiteSet$ such that if any of its
member appear in $\RSample$ then $\object$ is not present in
$\ObjX{\RSample}$.  We also naturally require that
$\KillSet{\object} \cap \DefSet{\object} = \varnothing$ for all object
$\object$.  Surprisingly, this already implies the following.

\begin{theorem}[Bounded Moments \cite{cs-arscg-89}]%
    \thmlab{moments}%
    Using the above notation, let $\SiteSet$ be a set of $n$ elements,
    and let $\RSample$ be a random sample of size $r$ from $\SiteSet$.
    Let $f(\cdot)$ be a monotonically increasing function bounded by a
    polynomial (that is, $f(n) = n^{O(1)}$).  We have %
    \begin{align*}
      \Ex{\sum\nolimits_{\object\in \ObjX{\RSample}} f\pth{
      \Bigl. \cardin{\KillSet{\object}} }}%
      =%
      O\pth{ \Ex{\Bigl. \cardin{\ObjX{\RSample}} }
      f\pth{\frac{n}{r}} },
    \end{align*}%
    where the expectation is taken over random sample $\RSample$.
\end{theorem}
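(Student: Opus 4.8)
The plan is to make the expectation completely explicit and then exploit how quickly an object's survival probability decays as its conflict list grows. It is convenient to replace the size-$r$ sample $\RSample$ by the model in which each element of $\SiteSet$ is taken into $\RSample$ independently with probability $p \coloneqq r/n$; since every object has a defining set of $O(1)$ size, the two models agree up to constant factors on all the probabilities that arise here, so this substitution costs nothing asymptotically. (A little care is needed, since ``$\object \in \ObjX{\RSample}$'' is neither a monotone increasing nor a monotone decreasing event, but this comparison is routine.) Under the Bernoulli model, $\object \in \ObjX{\RSample}$ exactly when $\DefSet{\object} \subseteq \RSample$ and $\KillSet{\object} \cap \RSample = \emptyset$, so $\Prob{\object \in \ObjX{\RSample}} = p^{\cardin{\DefSet{\object}}}(1-p)^{\cardin{\KillSet{\object}}}$. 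Thus $\Ex{\cardin{\ObjX{\RSample}}} = \sum_{\object} p^{\cardin{\DefSet{\object}}}(1-p)^{\cardin{\KillSet{\object}}}$ and
\[
    \Ex{\sum\nolimits_{\object \in \ObjX{\RSample}} f\pth{\cardin{\KillSet{\object}}}}
    = \sum_{\object} p^{\cardin{\DefSet{\object}}}(1-p)^{\cardin{\KillSet{\object}}}\, f\pth{\cardin{\KillSet{\object}}},
\]
the sums ranging over all potential objects; so the task reduces to showing that inserting the factor $f\pth{\cardin{\KillSet{\object}}}$ inflates the sum by only a factor $O\pth{f(n/r)} = O\pth{f(1/p)}$.

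The engine is an exponential-decay estimate. Let $b = O(1)$ bound all defining-set sizes. For an integer $t \ge 1$, compare the survival probability of a \emph{fixed} object $\object$ at rate $p$ with its survival probability at the sparser rate $p/t$ (a sample $\RSample'$ of size about $r/t$):
\[
    \frac{p^{\cardin{\DefSet{\object}}}(1-p)^{\cardin{\KillSet{\object}}}}{(p/t)^{\cardin{\DefSet{\object}}}(1-p/t)^{\cardin{\KillSet{\object}}}}
    = t^{\cardin{\DefSet{\object}}}\pth{\frac{1-p}{1-p/t}}^{\cardin{\KillSet{\object}}}
    \le t^{b}\, e^{-\Omega\pth{p\, \cardin{\KillSet{\object}}}}.
\]
Summing over all objects with $\cardin{\KillSet{\object}} \ge t\,(n/r) = t/p$, the factor $e^{-\Omega(t)}$ swamps $t^{b}$, and one obtains
\[
    \Ex{ \#\Set{ \object \in \ObjX{\RSample} }{ \cardin{\KillSet{\object}} \ge t\,(n/r) } }
    = O\pth{ t^{b}\, e^{-\Omega(t)} } \cdot \Ex{\cardin{\ObjX{\RSample'}}} .
\]
This is the standard Clarkson--Shor exponential-decay lemma; passing to a \emph{smaller} sample here is essential, since objects with very long conflict lists are almost never present in $\RSample$ itself and so cannot be charged against $\Ex{\cardin{\ObjX{\RSample}}}$ directly.

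To assemble the bound, partition the potential objects into shells, where shell $S_t$ (for $t \ge 1$) collects those with $\cardin{\KillSet{\object}}$ between $(t-1)(n/r)$ and $t\,(n/r)$ (so $S_1$ absorbs everything below $n/r$). On $S_t$, monotonicity and polynomial growth of $f$ give $f\pth{\cardin{\KillSet{\object}}} \le f\pth{t\,(n/r)} = t^{O(1)}\, f(n/r)$, so the contribution of $S_t$ to the weighted sum is at most
\[
    t^{O(1)}\, f(n/r) \cdot \Ex{ \#\Set{ \object \in \ObjX{\RSample} }{ \cardin{\KillSet{\object}} \ge (t-1)(n/r) } }
    = O\pth{ t^{O(1)}\, e^{-\Omega(t)} }\, f(n/r) \cdot \Ex{\cardin{\ObjX{\RSample'}}} .
\]
Summing the (convergent) series $\sum_{t \ge 1} t^{O(1)} e^{-\Omega(t)}$, and using that the count $\Ex{\cardin{\ObjX{\RSample'}}}$ of the sparser sample is $O\pth{\Ex{\cardin{\ObjX{\RSample}}}}$ — which holds in the structured geometric settings where the theorem is used; for instance the number of vertices of level at most $k$ in an arrangement of $\rho$ planes is $O\pth{\rho k^2}$, which is non-decreasing in $\rho$ — yields $\Ex{\sum_{\object \in \ObjX{\RSample}} f\pth{\cardin{\KillSet{\object}}}} = O\pth{\Ex{\cardin{\ObjX{\RSample}}}\, f(n/r)}$, as required.

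The delicate part is the exponential-decay step and, more precisely, making all the bookkeeping balance: one must compare against exactly the rate $p/t$ (not, say, $p/2$), so that the \emph{single} factor $e^{-\Omega(t)}$ simultaneously dominates the $t^{b}$ loss from the smaller defining-set probability, the $t^{O(1)}$ loss from evaluating $f$ at $t\,(n/r)$ rather than $n/r$, and still leaves a geometrically summable tail over the shells. The remaining ingredient — controlling $\Ex{\cardin{\ObjX{\RSample'}}}$ for the sparser samples — is mild and automatic in every application in this paper, but it is the reason the clean statement above is to be read in the Clarkson--Shor setting rather than for arbitrary abstract object systems.
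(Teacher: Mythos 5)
You should first note what you are comparing against: the paper never proves \thmref{moments} at all---it is imported from Clarkson--Shor \cite{cs-arscg-89} with a pointer to \cite{h-gaa-11}---so the only meaningful comparison is with the standard proof in those sources, and your argument is essentially that proof. The skeleton (pass to Bernoulli sampling at rate $p=r/n$, write the left side as $\sum_{\object} f\pth{\cardin{\KillSet{\object}}}\Prob{\object\in\ObjX{\RSample}}$, compare the survival probability at rate $p$ with that at rate $p/t$ to get the Chazelle--Friedman style decay $t^{b}e^{-\Omega(p\cardin{\KillSet{\object}})}$ \cite{cf-dvrsi-90}, bucket by conflict-list size, and sum the geometric series) is correct, and the Bernoulli-versus-fixed-size swap is the same simplification the paper itself makes in the proof of \lemref{b:c:l}, with the same reference \cite{s-cstre-01}. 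One thing you do silently is upgrade the framework's axioms from the stated necessary conditions (``$\DefSet{\object}$ must be in $\RSample$, $\KillSet{\object}$ must avoid $\RSample$'') to an if-and-only-if characterization of $\object\in\ObjX{\RSample}$; the sparser-sample comparison needs the ``if'' direction, so this locality assumption should be stated explicitly (it is standard, and it holds in the paper's application).

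The step you flag at the end is indeed the one real issue, and it is a hypothesis, not a deduction: $\Ex{\cardin{\ObjX{\RSample'}}}=O\pth{\Ex{\cardin{\ObjX{\RSample}}}}$ for the sparser samples does not follow from the abstract framework, and without some assumption of this kind the statement as written is false in full abstraction. For instance, a single potential object $\object$ with $\DefSet{\object}$ one fixed element, $\cardin{\KillSet{\object}}=n/2$, presence governed by the if-and-only-if rule, $r=n/4$ and $f(x)=x$ makes the left side exceed the right side by a factor $\Theta(n)$, since both sides carry the same (tiny) presence probability but the left side is weighted by $f(n/2)$ while the right side allows only $f(n/r)=f(4)$. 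This is exactly why Clarkson--Shor and Sharir phrase the bound through a monotone bounding function for the expected number of objects at smaller sample sizes. Your stated justification is also not quite of the right shape: monotonicity of an \emph{upper} bound such as $O(\rho k^2)$ does not compare $\Ex{\cardin{\ObjX{\RSample'}}}$ with $\Ex{\cardin{\ObjX{\RSample}}}$; what you need is either a matching lower bound $\Ex{\cardin{\ObjX{\RSample}}}=\Omega(r)$ (true for the paper's one application in \lemref{b:c:l}, where the objects are lower-envelope vertices of planes tangent to the paraboloid at sites in general position), or simply to restate the right-hand side as $O\pth{\max_{\rho\le r}\Ex{\cardin{\ObjX{\RSample_\rho}}}\cdot f(n/r)}$, which is all that \lemref{b:c:l} uses. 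With that hypothesis made explicit (or the right-hand side restated), your proof is complete and matches the cited sources.
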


\subsubsection{Bounding the below conflict-lists}

\myparagraph{The technical challenge.} %
The proof of the next lemma is technically interesting as it does not
follow in a straightforward fashion from the Clarkson-Shor
technique. Indeed, the below conflict list is \emph{not} the standard
conflict list. Specifically, the decision whether a vertex $\vertex$
in the arrangement of ${\RSample}$ is of level at most $k$ is a
``global'' decision of $\RSample$, and as such the defining set of
this vertex is neither of constant size, nor unique, as required to
use the Clarkson-Shor technique. If this was the only issue, the
extension by Agarwal \etal~\cite{ams-cmfal-98} could handle this
situation. However it is even worse: a plane
$\plane \in \HPlanes \setminus \RSample$ that is below a vertex
$\vertex \in \VLvl{\leq k}{\RSample}$ is not necessarily conflicting
with $\vertex$ (that is, in the stopping set of $\vertex$) --- as its
addition to $\RSample$ will not necessarily remove $\vertex$ from
$\VLvl{\leq k}{\RSample \cup \brc{\plane}}$.

\myparagraph{The solution.} %
Since the standard technique fails in this case, we need to perform
our argument somehow indirectly.  Specifically, we use a second random
sample and then deploy the Clarkson-Shor technique on this smaller
sample --- this is reminiscent of the proof bounding the size of
$\VLvl{\leq k}{\HPlanes}$ by Clarkson-Shor \cite{cs-arscg-89}, and the
proof of the exponential decay lemma of Chazelle and Friedman
\cite{cf-dvrsi-90}.

\begin{lemma}
   \lemlab{b:c:l}%
   Let $k$ be a fixed constant, and let $\RSample$ be a random sample
   (without replacement) of size $r$ from a set of $\HPlanes$ of $n$
   planes in $\Re^3$, we have
   \begin{align*}
     { \Ex{ \Bigl. \smash{\sum\nolimits_{\vertex \in \VLvl{\leq k}{\RSample} }}
     \bclX{\vertex}} } =O\pth{ nk^3 }.
   \end{align*}
\end{lemma}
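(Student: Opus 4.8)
The plan is to split the below conflict list of each $(\leq k)$-level vertex of $\RSample$ into the part inside the sample and the part outside it, and to handle the ``outside'' part by means of an auxiliary smaller sample. For $\vertex \in \VLvl{\leq k}{\RSample}$ write $\bclX{\vertex} = \cardin{\BclX{\vertex}\cap\RSample} + \cardin{\BclX{\vertex}\setminus\RSample}$. The first summand is at most $k$ (at most $k$ planes of $\RSample$ lie strictly below a vertex of the $(\leq k)$-level), and the $(\leq k)$-level of an arrangement of $r \leq n$ planes in $\Re^3$ has only $O\pth{rk^2} = O\pth{nk^2}$ vertices by Clarkson--Shor~\cite{cs-arscg-89}; hence $\sum_{\vertex \in \VLvl{\leq k}{\RSample}} \cardin{\BclX{\vertex}\cap\RSample} = O\pth{nk^3}$ deterministically. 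It therefore remains to bound $\Ex{\sum_{\vertex \in \VLvl{\leq k}{\RSample}} \cardin{\BclX{\vertex}\setminus\RSample}}$ --- a genuine, sample-avoiding conflict-list sum --- but the obstruction noted above (a $(\leq k)$-level vertex has neither a constant-size nor a unique defining set) still prevents a direct appeal to \thmref{moments}, and this is where a second random sample enters.

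I would condition on $\RSample$ --- the case $r = O(k)$ being immediate, since then $\sum_{\vertex}\bclX{\vertex} \leq n\cdot\cardin{\VLvl{\leq k}{\RSample}} = n\cdot O\pth{rk^2} = O\pth{nk^3}$ --- and draw a sub-sample $\RSample' \subseteq \RSample$ of size $r' = \ceil{r/(ck)}$ for a suitable constant $c$. Fix $\vertex \in \VLvl{\leq k}{\RSample}$, let $D_\vertex$ be its three defining planes and let $E_\vertex = \BclX{\vertex}\cap\RSample$ be the (at most $k$) planes of $\RSample$ that lie strictly below it. Since $\RSample' \subseteq \RSample$, the vertex $\vertex$ lies on the lower envelope (the $0$-level) of $\RSample'$ exactly when $D_\vertex \subseteq \RSample'$ and $\RSample' \cap E_\vertex = \emptyset$; a routine hypergeometric calculation shows this occurs with probability $\Omega\pth{(r'/r)^3}$, the factor $(r'/r)^3$ coming from capturing the three defining planes and the remaining factor being kept bounded below by an absolute constant through our choice of $r'$ (this last factor is an exponential-decay-in-$k$ estimate, in the spirit of Chazelle--Friedman). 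As $r' = \Theta(r/k)$, this probability is $\Omega\pth{1/k^3}$. Since $\cardin{\BclX{\vertex}\setminus\RSample} \leq \cardin{\BclX{\vertex}} = \bclX{\vertex}$ and all terms below are non-negative, summing over $\vertex \in \VLvl{\leq k}{\RSample}$ gives, for the fixed $\RSample$,
\[
\sum_{\vertex \in \VLvl{\leq k}{\RSample}} \cardin{\BclX{\vertex}\setminus\RSample} \;\leq\; O\pth{k^3}\cdot\ExExt{\RSample'}{\;\sum_{\vertex \in \VLvl{0}{\RSample'}} \bclX{\vertex}\;},
\]
the expectation being over the choice of $\RSample'$ inside $\RSample$.

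Taking expectation over $\RSample$ as well, and using the standard fact that a uniform $r'$-subset of a uniform $r$-subset of $\HPlanes$ is again a uniform $r'$-subset of $\HPlanes$, the right-hand side becomes $O\pth{k^3}\cdot\ExExt{\RSample'}{\sum_{\vertex \in \VLvl{0}{\RSample'}} \bclX{\vertex}}$ with $\RSample'$ a uniform sample of size $r'$ drawn directly from $\HPlanes$. Here the ordinary Clarkson--Shor setup applies verbatim: a lower-envelope vertex of $\RSample'$ has a defining set of size $3$ and stopping set exactly $\BclX{\vertex}$ (inserting any plane strictly below $\vertex$ removes $\vertex$ from the lower envelope), so $\bclX{\vertex}$ is precisely the size of its stopping set. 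Applying \thmref{moments} with $f$ the identity, and using that the lower envelope of $r'$ planes in $\Re^3$ is a convex polyhedron with $O\pth{r'}$ vertices, gives $\ExExt{\RSample'}{\sum_{\vertex \in \VLvl{0}{\RSample'}} \bclX{\vertex}} = O\pth{r'\cdot n/r'} = O\pth{n}$. Thus the ``outside'' part is $O\pth{k^3}\cdot O\pth{n} = O\pth{nk^3}$, and adding it to the ``inside'' part completes the proof.

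The step I expect to be the main obstacle is the middle one: showing that sub-sampling really does turn a $(\leq k)$-level vertex of $\RSample$ into a lower-envelope vertex of $\RSample'$ with probability $\Omega\pth{1/k^3}$ --- in particular, tuning $r'$ so that the event ``none of the at most $k$ below-planes lands in $\RSample'$'' retains constant probability --- and recognizing that the inequality $\cardin{\BclX{\vertex}\setminus\RSample} \leq \bclX{\vertex}$ is exactly what allows the conflict list relative to $\RSample$ to be replaced by one relative to $\RSample'$ (equivalently, to $\HPlanes$) at the cost of only the $O\pth{k^3}$ blow-up. The other ingredients --- the $O\pth{rk^2}$ bound on the $(\leq k)$-level, the $O\pth{r'}$ bound on the lower envelope, the composition property of uniform random subsets, and \thmref{moments} itself --- are all standard.
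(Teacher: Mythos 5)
Your proposal is correct and follows essentially the same route as the paper's proof: a second, $\Theta(1/k)$-rate subsample of $\RSample$, a Chazelle--Friedman style argument that each vertex of $\VLvl{\leq k}{\RSample}$ survives to the lower envelope of $\RSample'$ with probability $\Omega(1/k^3)$, composition of the two samples into a single uniform sample of $\HPlanes$, and then \thmref{moments} applied to the (linear-complexity) lower envelope to get the $O(n)$ bound. The only deviations are cosmetic---you use a fixed-size subsample with a hypergeometric estimate where the paper uses Bernoulli $1/k$ subsampling, and you separately bound the in-sample part of the conflict list, which the paper's definition of $\BclX{\vertex}$ (planes of $\HPlanes$ outside $\RSample$) renders unnecessary.
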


% \InNotProcVer{%
\begin{proof}
    For the sake of simplicity of exposition, let us assume that the
    sampling here is done by picking every element into the random
    sample $\RSample$ with probability $r/n$. Doing the computations
    below using sampling without replacement (so we get the exact
    size) requires modifying the calculations so that the
    probabilities are stated using binomial coefficients --- this makes
    the calculation messier, but the results remain the same. See
    \cite{s-cstre-01} for further discussion of this minor issue.
    
    Fix a random sample $\RSample$. Now sample once again by picking
    each plane in $\RSample$, with probability $1/k$, into a subsample
    $\RSample'$.  Let us consider the probability that a vertex
    $\vertex \in \VLvl{\leq k}{\RSample}$ ends up on the lower
    envelope of $\RSample'$.  A lower bound can be achieved by the
    standard argument of Clarkson-Shor.  Specifically, if a vertex
    $\vertex$ is on the lower envelope then its three defining planes
    must be in $\RSample'$. Moreover, as
    $\vertex \in \VLvl{\leq k}{\RSample}$, by definition there are at
    most $k$ planes below $v$ that must not be in $\RSample'$.  So let
    $X_\vertex$ be the indicator variable of whether $\vertex$ appears
    on the lower envelope of $\RSample'$.  We then have
    \begin{align*}
      \displaystyle
      \ExOverCond{\RSample'}{\bigl. X_\vertex}{\RSample} \geq
      \frac{1}{k^3}(1-1/k)^k\geq \frac{1}{e^2k^3}.
    \end{align*}
    Observe that
    \begin{align}
      \DCGVer{\nonumber}%
      \ExOver{\RSample'} {\sum\nolimits_{\vertex \in
      \VLvl{0}{\RSample'}} \bclX{\vertex} }%
      &=%
        \ExOver{\RSample} {\biggl.\ExOverCond{\RSample'}%
        { \textstyle \sum_{\vertex \in \VLvl{0}{\RSample'}}
        \bclX{\vertex} } { \RSample } }%
        \DCGVer{%
      \\
      &%
        }%
        \geq%
        \ExOver{\RSample} {\biggl.\ExOverCond{\RSample'}%
        { \textstyle \sum_{\vertex \in \VLvl{\leq k}{\RSample}
        } X_\vertex \bclX{\vertex} } { \RSample } }.%
        \eqlab{amazing}
    \end{align}
    Fixing the value of $\RSample$, the lower bound above implies
    \begin{align*}
      \ExOverCond{\RSample'} {\sum\nolimits_{\vertex \in
      \VLvl{\leq k}{\RSample} } X_\vertex \bclX{\vertex}
      } {\RSample}%
      &=%
        \sum_{\vertex \in \VLvl{\leq k}{\RSample} }
        \ExOverCond{\RSample'}{\Bigl.  X_\vertex \bclX{\vertex} }
        {\RSample}%
        \DCGVer{
      \\&}%
          =%
          \sum_{\vertex \in \VLvl{\leq k}{\RSample} }
          \bclX{\vertex} \ExOverCond{\RSample'}{\Bigl.  X_\vertex
          }{\RSample}%
          \geq%
          \sum_{\vertex \in \VLvl{\leq k}{\RSample} }
          \frac{\bclX{\vertex}}{e^2k^3},
    \end{align*}
    by linearity of expectations and as $\bclX{\vertex}$ is a constant
    for $\vertex$.  Plugging this into \Eqref{amazing}, we have
    \begin{align}
        % \mu =
      \ExOver{\RSample'}{%
      \sum\nolimits_{\vertex \in \VLvl{0}{\RSample'}}\,
      \bclX{\vertex} }%
      \geq%
      \ExOver{\RSample} {%
      \sum\nolimits_{\vertex \in \VLvl{\leq k}{\RSample} }
      \frac{\bclX{\vertex}}{e^2k^3}%
      }%
      =%
      {\displaystyle \frac{1}{e^2k^3}}
      \ExOver{\RSample}{%
      \sum\nolimits_{\vertex \in \VLvl{\leq k}{\RSample}} \bclX{\vertex} }.%
      \eqlab{stupid:2}%
    \end{align}
    
    Observe that $\RSample'$ is a random sample of $\RSample$ which by
    itself is a random sample of $\HPlanes$. As such, one can
    interpret $\RSample'$ as a direct random sample of $\HPlanes$.
    The lower envelope of a set of planes has linear complexity, and
    for a vertex $\vertex$ on the lower envelope of $\RSample'$ the
    set $\BclX{\vertex}$ is the standard conflict list of
    $\vertex$. As such, \thmref{moments} implies
    \begin{align*}
        % \mu =
      \ExOver{\RSample'}{\sum\nolimits_{\vertex \in
      \VLvl{0}{\RSample'}} \bclX{ v }}%
      =%
      O\pth{\cardin{\RSample'} \cdot
      \displaystyle\frac{n}{\cardin{\RSample'}} }%
      =%
      O\pth{n }.
    \end{align*}
    Plugging this into \Eqref{stupid:2} implies the claim.
    \DCGVer{{\qed}}
\end{proof}%
% }

\begin{corollary}
    \corlab{co:b:c:l}%
    Let $\RSample$ be a random sample (without replacement) of size
    $r$ from a set $\HPlanes$ of $n$ planes in $\Re^3$. We have that
    \begin{math}
        \smash{ \ExOver{\RSample}{\sum_{\edge \in \planeLevel{\leq
                    k}{\RSample}} \bclX{\edge}} } = O\pth{ nk^3 }.
    \end{math}
\end{corollary}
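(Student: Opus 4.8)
The plan is to reduce this edge statement to the vertex statement already proved in \lemref{b:c:l}, by charging the below conflict list of each edge to those of its two endpoint vertices. The key geometric fact is the following. Let $\edge$ be an edge of the arrangement of $\RSample$ lying on a line $\ell$, with endpoint vertices $\vertex$ and $\vertexA$. For any plane $\plane \in \HPlanes$, the set of points of $\ell$ lying strictly above $\plane$ is a half-line of $\ell$ (possibly empty or all of $\ell$); hence if $\plane$ lies below some point of $\edge$, then $\edge$ meets this half-line, so at least one endpoint of $\edge$ lies in it, and $\plane$ lies below that endpoint. Consequently $\BclX{\edge} \subseteq \BclX{\vertex} \cup \BclX{\vertexA}$, and therefore $\bclX{\edge} \le \bclX{\vertex} + \bclX{\vertexA}$.

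First I would verify the two bookkeeping facts needed to globalize this pointwise bound. (i) If $\edge \in \planeLevel{\le k}{\RSample}$ then both its endpoints lie in $\VLvl{\le k}{\RSample}$: "strictly below" is an open condition, so any plane strictly below an endpoint of $\edge$ is strictly below a relative-interior point of $\edge$, whence the level of an endpoint is at most the level of $\edge$. (ii) In general position every vertex of the arrangement of $\RSample$ lies on exactly three planes, hence on three lines, hence is incident to exactly six edges. Summing the pointwise inequality over all $\edge \in \planeLevel{\le k}{\RSample}$, and using (i) and (ii), gives
\[
    \sum\nolimits_{\edge \in \planeLevel{\le k}{\RSample}} \bclX{\edge}
    \;\le\;
    6 \sum\nolimits_{\vertex \in \VLvl{\le k}{\RSample}} \bclX{\vertex},
\]
and taking expectations and applying \lemref{b:c:l} yields the desired $O\pth{nk^3}$.

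The hard part will be the unbounded edges: a ray has only one genuine endpoint, so the argument above does not control the planes lying below it "near infinity". The simplest fix is to clip the whole arrangement to a sufficiently large axis-aligned box before running the argument---then every edge is bounded and no below conflict list grows, so the reduction applies verbatim. If one prefers not to clip, the alternative is to rerun the double-sampling step of \lemref{b:c:l} at the level of edges: resample $\RSample' \subseteq \RSample$ keeping each plane with probability $1/k$; an unbounded level-$\le k$ edge $\edge$ of the arrangement of $\RSample$ survives onto the lower envelope of $\RSample'$ with probability $\Omega\pth{1/k^2}$ (its two defining planes kept, its $\le k$ below-planes dropped), in which case $\BclX{\edge}$ is contained in the standard conflict list of the unbounded lower-envelope edge of $\RSample'$ containing it; since each such lower-envelope edge contains at most two unbounded edges of the arrangement of $\RSample$, and the lower envelope of a set of planes has linearly many edges, \thmref{moments} gives $\ExOver{\RSample'}{\sum_{\edge' \in \planeLevel{0}{\RSample'}} \bclX{\edge'}} = O\pth{n}$, and unwinding the resampling bounds the unbounded contribution by $O\pth{k^2 n} = O\pth{nk^3}$. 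Either way, the only real subtlety is this passage from edges to vertices for the unbounded part; everything else is routine.
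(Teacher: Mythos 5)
Your proposal is correct and takes essentially the same route as the paper's proof: establish $\BclX{\edge} \subseteq \BclX{\vertexA} \cup \BclX{\vertex}$ and charge each edge to its two endpoint vertices, note that each vertex is charged $O(1)$ times, invoke \lemref{b:c:l}, and dispose of the unbounded edges by clipping the arrangement to a large bounding box. Your additional care---verifying that endpoints of level-$\le k$ edges lie in $\VLvl{\leq k}{\RSample}$, the count of six incident edges per vertex, and the alternative double-sampling treatment of the rays---merely fills in details the paper explicitly omits.
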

\begin{proof}
    Under general position assumption every vertex in the arrangement
    of ${\HPlanes}$ is adjacent to $8$ edges. For an edge
    $\edge = \vertexA \vertex$, it is easy to verify that
    $\BclX{\edge} \subseteq \BclX{\vertexA} \cup \BclX{\vertex}$, and
    as such we charge the conflict list of $\edge$ to its two
    endpoints $u$ and $\vertex$, and every vertex get charged
    $O\pth{1}$ times.  Now, the claim follows by \lemref{b:c:l}.
    
    This argument fails to capture edges that are rays in the
    arrangement, but this is easy to overcome by clipping the
    arrangement to a bounding box that contains all the vertices of
    the arrangement. We omit the easy but tedious details.
    \DCGVer{{\qed}}
\end{proof}%

\subsection{Environments and overlays}
\seclab{enviroments:overlays}
% \myparagraph{Environments and overlays.} %
% 
For a site $\site$ in $\SiteSet$ and a constant $k$, the \emphi{$k$\!
   environment} of $\site$, denoted by $\envSet{k}{\site}{\SiteSet}$,
is the set of all the points in the plane such that $\site$ is one of
their $k$ nearest neighbors in $\SiteSet$: %
\begin{align*}
  \envSet{k}{\site}{\SiteSet} = \brc{ \query \in \Re^2%
  \sep{ \site \in \atmostK{\SiteSet}{k}{\query} }}\!.
\end{align*}

\parpic[r]{%
   % \begin{minipage}{3cm}
   \includegraphics[page=4,width=0.2\linewidth]{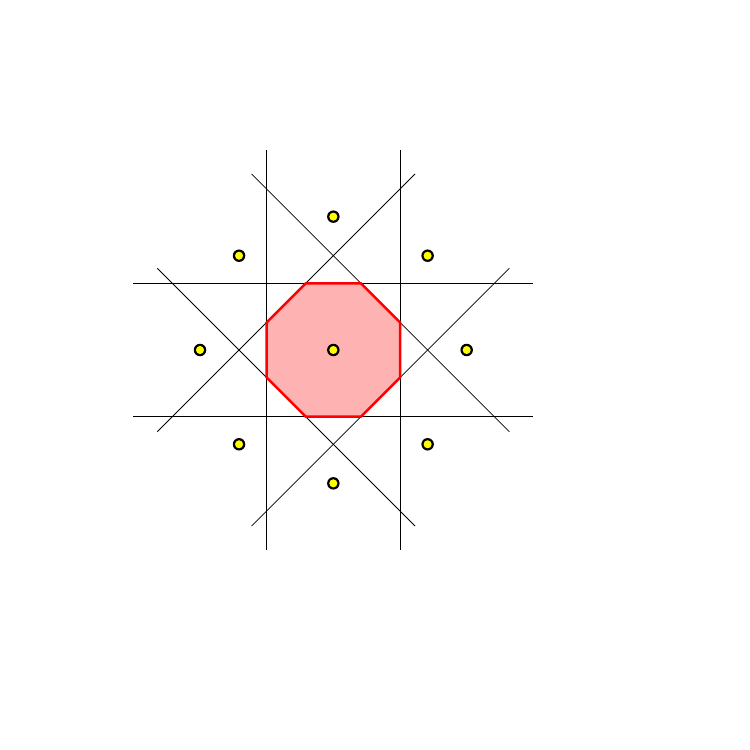}%
   % \captionof{figure}{}
   % \figlab{k:order}
   % \end{minipage}
}

\NotDCGVer{\InHsienRadical{\picskip{6}\linenumbers}}
\DCGVer{\InHsienRadical{\picskip{6}\linenumbers}}
See the figure on the right for an example what this environment looks
like for different values of $k$.  One can view the $k$ environment of
$\site$ as the union of the $k$\th order Voronoi cells which have
$\site$ as one of the $k$ nearest sites.  Observe that the overlay of
the polygons
$\envSet{k}{\site_1}{\SiteSet}, \ldots, \envSet{k}{\site_n}{\SiteSet}$
produces the $k$\th order Voronoi diagram of $\SiteSet$.  It is also
known that each $k$ environment of a site is a star-shaped polygon
(see Aurenhammer and Schwarzkopf \cite{as-soria-92}).

\begin{lemma}
    \lemlab{star:shaped}%
    The set $\envSet{k}{\site}{\SiteSet}$ is a star-shaped polygon
    with respect to the point $\site$.
\end{lemma}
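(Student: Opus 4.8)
The plan is to prove the two assertions --- ``polygon'' and ``star-shaped with respect to $\site$'' --- essentially separately. The fact that $\envSet{k}{\site}{\SiteSet}$ is a polygon is immediate from the observation already recorded above that it is the union of those cells of the $k$\th order Voronoi diagram of $\SiteSet$ that have $\site$ among their $k$ nearest sites: the $k$\th order diagram is a finite subdivision of the plane into convex polygonal cells, so any union of its cells is a polygonal region. Hence all the content lies in showing that this region is star-shaped with respect to $\site$, i.e.\ that for every $\query \in \envSet{k}{\site}{\SiteSet}$ the whole segment joining $\site$ to $\query$ is contained in $\envSet{k}{\site}{\SiteSet}$ (note that $\site$ itself lies in the region, being its own nearest neighbor, so star-shapedness with respect to $\site$ is meaningful).

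First I would reduce to a counting statement: a point $\query$ lies in $\envSet{k}{\site}{\SiteSet}$, that is $\site \in \atmostK{\SiteSet}{k}{\query}$, exactly when at most $k-1$ sites of $\SiteSet$ are strictly closer to $\query$ than $\site$ is. So fix $\query \in \envSet{k}{\site}{\SiteSet}$ and a point $\pnt$ on the segment joining $\site$ to $\query$; write $\pnt = \gamma(t)$ where $\gamma(t) = (1-t)\site + t\query$ and $t \in [0,1]$. I must bound the number of sites strictly closer to $\pnt$ than $\site$. The key step is the monotonicity claim: for any $\siteA \in \SiteSet$ with $\siteA \neq \site$, if $\dist{\pnt}{\siteA} < \dist{\pnt}{\site}$ then $\dist{\query}{\siteA} < \dist{\query}{\site}$. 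To see this, let $g(t) = \dist{\gamma(t)}{\siteA}^2 - \dist{\gamma(t)}{\site}^2$; expanding the squared norms, the quadratic-in-$t$ term $\|\gamma(t)\|^2$ cancels, so $g$ is an affine function of $t$, with $g(0) = \dist{\site}{\siteA}^2 > 0$. If $\dist{\pnt}{\siteA} < \dist{\pnt}{\site}$, i.e.\ $g(t) < 0$ at our value $t > 0$ (the claim is vacuous when $\pnt = \site$), then $g$ passes from a positive value at $0$ to a negative value at $t$, so it has negative slope and stays negative for all parameters $\geq t$; in particular $g(1) < 0$, which is precisely $\dist{\query}{\siteA} < \dist{\query}{\site}$.

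Given the claim, the set of sites strictly closer to $\pnt$ than to $\site$ is a subset of the set of sites strictly closer to $\query$ than to $\site$, and the latter set has size at most $k-1$ because $\query \in \envSet{k}{\site}{\SiteSet}$. Hence at most $k-1$ sites are strictly closer to $\pnt$ than $\site$, so $\site \in \atmostK{\SiteSet}{k}{\pnt}$ and $\pnt \in \envSet{k}{\site}{\SiteSet}$. Since $\pnt$ was an arbitrary point of the segment and $\query$ an arbitrary point of $\envSet{k}{\site}{\SiteSet}$, this yields star-shapedness with respect to $\site$, completing the proof.

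The short affine-function argument is the mathematical heart and is clean; the one place requiring care --- and the main ``obstacle,'' such as it is --- is the bookkeeping in the presence of equal distances: verifying the equivalence between ``$\site$ is one of the $k$ nearest sites of $\query$'' and ``at most $k-1$ sites are strictly closer to $\query$ than $\site$,'' and checking that the containment-of-sets step is still valid when several distances coincide. Under the paper's standing assumption that the sampled coordinates (and hence, generically, the distances involved) are distinct, this subtlety disappears entirely.
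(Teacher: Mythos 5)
Your proof is correct and takes essentially the same route as the paper's: the paper argues that along any ray from $\site$ the number of bisectors (between $\site$ and the other sites) crossed is monotone, which is exactly your monotonicity claim phrased geometrically, while your affine-function argument for $g(t)=\dist{\gamma(t)}{\siteA}^2-\dist{\gamma(t)}{\site}^2$ is just an algebraic verification of the same fact. Your treatment of the polygon part (union of $k$\th order Voronoi cells bounded by bisector pieces) and your remarks on ties match, and slightly sharpen, what the paper states briefly.
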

\begin{proof}
    Consider the set of all $n-1$ bisectors determined by $\site$ and
    any other site in $\SiteSet$.  For any point $\query$ in the
    plane, $\pnt\in \envSet{k}{\site}{\SiteSet}$ holds if the segment
    from $\site$ to $\pnt$ crosses at most $k-1$ of these bisectors.
    The star-shaped property follows as when walking along any ray
    emanating from $\site$, the number of bisectors crossed is a
    monotonically increasing function of distance from $\site$.
    Moreover, $\envSet{k}{\site}{\SiteSet}$ is a polygon as its
    boundary is composed of subsets of straight line bisectors.
    \DCGVer{{\qed}}
\end{proof}%

Going back to our original problem.  Let $k$ be a fixed constant, and
let $\permut{\site_1, \ldots, \site_n}$ be the
volume ordering of $\SiteSet$.  As usual, we use $\SiteSet_i$ to
denote the unordered $i$\th prefix of $\permut{\site_1, \ldots, \site_n}$.  Let
$\env_i = \envSet{k}{\site_i}{\SiteSet_i}$, the union of all
cells in the $k$\th order Voronoi diagram of $\SiteSet_i$ where
$\site_i$ is one of the $k$ nearest neighbors.
% participates in.

\begin{observation}
    % \obslab{env:overlay:proxy}%
    %
    The arrangement determined by the overlay of the polygons
    $\env_1, \ldots, \env_n$ is the $k$\th order proxy diagram of
    $\SiteSet$.
\end{observation}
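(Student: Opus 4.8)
The plan is to reduce everything to one pointwise identity relating the proxy set to the environments, and then lift that identity to the level of arrangements. Concretely, I would first establish that for every point $\query$ in the plane,
\[
\proxySet{k}{\query}{\SiteSet}
=
\Set{ \site_i \in \SiteSet }{ \query \in \env_i }
=
\Set{ \site_i }{ \site_i \in \atmostK{\SiteSet_i}{k}{\query} }.
\]
The inclusion $\supseteq$ is immediate from $\proxySet{k}{\query}{\SiteSet} = \bigcup_j \atmostK{\SiteSet_j}{k}{\query}$ by taking $j=i$. For $\subseteq$, suppose $\site_i \in \atmostK{\SiteSet_j}{k}{\query}$ for some $j$; then necessarily $i \le j$, so $\SiteSet_i \subseteq \SiteSet_j$ and in particular $\site_i \in \SiteSet_i$. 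Now use that the $k$\th nearest-neighbor distance is monotone non-increasing under enlarging the ground set, i.e., $\kdistSet{k}{\query}{\SiteSet_j} \le \kdistSet{k}{\query}{\SiteSet_i}$; therefore $\dist{\query}{\site_i} \le \kdistSet{k}{\query}{\SiteSet_j} \le \kdistSet{k}{\query}{\SiteSet_i}$, i.e., $\site_i \in \atmostK{\SiteSet_i}{k}{\query}$, which is exactly the statement $\query \in \env_i$. Note this step uses only that the $\SiteSet_i$ form a nested chain of prefixes; the specific choice of the volume ordering is irrelevant here.

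Given the identity, I would argue as follows. By the standard meaning of overlay, the overlay of $\env_1, \ldots, \env_n$ is the planar arrangement whose edges and vertices are supported on the union $\bigcup_i \partial \env_i$. Two points in a common face of this overlay can be joined by a path crossing no $\partial \env_i$, hence share the index set $\Set{i}{\query \in \env_i}$ and, by the identity, the same proxy set; so each overlay face lies inside one face of the $k$\th order proxy diagram. Conversely, since $\proxySet{k}{\cdot}{\SiteSet}$ is a function of that index set alone, it can change only when the index set changes, that is, only across some $\partial \env_i$; hence the proxy diagram is no finer than the overlay.

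The remaining point — and the one I expect to require the most care — is the reverse refinement, namely that the overlay is not strictly finer than the proxy diagram: no edge of $\partial \env_i$ may separate two regions carrying the same proxy set. For this I would invoke \lemref{star:shaped}, according to which $\partial \env_i$ is a union of bisector pieces along which the rank of $\site_i$ among the distances from the current point to the sites of $\SiteSet_i$ equals exactly $k$; thus crossing such an edge transversally flips whether $\site_i \in \atmostK{\SiteSet_i}{k}{\query}$, and by the identity this adds or removes $\site_i$ from $\proxySet{k}{\query}{\SiteSet}$. Hence every overlay edge is a genuine proxy-diagram edge, their endpoints supply the common vertex set, and the two arrangements coincide. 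Degenerate incidences where several bisectors meet simply collapse to common vertices of both arrangements and cause no difficulty.
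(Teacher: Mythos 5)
Your argument is correct, and it supplies exactly the justification that the paper leaves implicit: the statement appears only as an unproved \emph{Observation}, and the intended reasoning is precisely your pointwise identity $\proxySet{k}{\query}{\SiteSet} = \brc{ \site_i \in \SiteSet \mid \query \in \env_i }$, which follows from monotonicity of the $k$\th nearest-neighbor distance over the nested prefixes $\SiteSet_1 \subseteq \cdots \subseteq \SiteSet_n$, together with the routine lifting to faces and edges of the overlay. One small remark: for how the observation is actually used (bounding the complexity of the proxy diagram by that of the overlay in \lemref{proxy:complexity}), only the direction that the overlay refines the proxy diagram is needed, so the more delicate converse step you flag (that every edge of some $\partial \env_i$ genuinely changes the proxy set) is a bonus rather than a necessity.
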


\subsection{Putting it all together}
\seclab{all:together}

The proof of the following lemma is similar in spirit to the argument
of Har-Peled and Raichel \cite{hr-ecrwv-14}.

\begin{lemma}
    \lemlab{proxy:complexity}%
    Let $\SiteSet$ be a set of $n$ sites in the plane,
    let $\permut{\site_1, \ldots, \site_n}$ be the volume ordering of
    $\SiteSet$, and let $k$ be a fixed number.  The expected
    complexity of the arrangement determined by the overlay of the
    polygons $\,\env_1, \ldots, \env_n$ (and therefore, the expected
    complexity of the $k$\th order proxy diagram) is
    $ O\pth{ k^4 n \log n }$, where
    $\env_i = \envSet{k}{\site_i}{\SiteSet_i}$ and
    $\SiteSet_i = \brc{\site_1, \ldots, \site_i}$ is the underlying
    set of the $i$\th prefix of $\permut{\site_1, \ldots, \site_n}$,
    for each $i$.
\end{lemma}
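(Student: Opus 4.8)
The plan, following the approach of Har-Peled and Raichel~\cite{hr-ecrwv-14}, is to split the overlay complexity into two parts and bound the expectation of each. Since each $\env_i$ is a simple polygon (star-shaped by \lemref{star:shaped}) whose edges lie on bisectors, the overlay of $\env_1,\dots,\env_n$ is an arrangement of polygonal closed curves, and by Euler's formula the number of its vertices, edges, and faces is
\[
O\!\left(\sum_{i=1}^n \cardin{\env_i} \;+\; \sum_{1\le i<j\le n}\cardin{\partial\env_i\cap\partial\env_j}\right),
\]
where $\cardin{\env_i}$ denotes the number of edges of $\env_i$ and $\cardin{\partial\env_i\cap\partial\env_j}$ the number of transversal boundary crossings. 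It thus suffices to bound the expectation of each term by $O(k^4 n\log n)$.

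For the first term, recall that the volume ordering is a uniform random permutation of $\SiteSet$ (the parametric volumes are i.i.d.\ continuous random variables), so $\SiteSet_i$ is a uniform random $i$-element subset and, conditioned on $\SiteSet_i$, the site $\site_i$ is uniform in $\SiteSet_i$. Hence $\Ex{\cardin{\env_i}}=\tfrac1i\Ex{\sum_{\site\in\SiteSet_i}\cardin{\envSet{k}{\site}{\SiteSet_i}}}$. The overlay of all environments of a fixed set is its $k$th order Voronoi diagram, and each of its edges lies on the common boundary of exactly the two environments of the pair of sites swapped across it; thus $\sum_{\site\in\SiteSet_i}\cardin{\envSet{k}{\site}{\SiteSet_i}}=O(\cardin{\mathrm{Vor}_k(\SiteSet_i)})=O(ki)$, using the classical $O(k(i-k))$ bound. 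Therefore $\Ex{\cardin{\env_i}}=O(k)$ and $\Ex{\sum_i\cardin{\env_i}}=O(kn)$, comfortably within the claimed bound.

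The dominant term is the expected number of boundary crossings, and here I would use the machinery of \secref{below:conflict:size}. Lift the sites to their tangent planes $\HPlanes=\brc{h_1,\dots,h_n}$ in volume order, so that $\SiteSet_i$ corresponds to the random sample $\RSample_i=\brc{h_1,\dots,h_i}$ and $\partial\env_i$ lifts to the edges of the arrangement of $\RSample_i$ on the $(k-1)$-level that are incident to $h_i$. Charge each crossing of $\partial\env_i$ and $\partial\env_j$ (with $i<j$) to the edge $g_i$ of $\partial\env_i$ on which it lies. The key structural observation is that at such a crossing point $p$ the site $\site_j$ is strictly closer to $p$ than $\site_i$: otherwise $\site_i$, together with the exactly $k-1$ sites of $\SiteSet_i$ strictly closer to $p$ than $\site_i$ (as $p\in\partial\env_i$), would give at least $k$ sites of $\SiteSet_j$ strictly closer to $p$ than $\site_j$, contradicting $p\in\env_j$. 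Lifted, this says $h_j$ lies strictly below $g_i$ at $p$, and since $j>i$ this gives $h_j\in\BclX{g_i}$, the below conflict list of $g_i$ with respect to $\RSample_i$. Hence at most $\bclX{g_i}$ distinct later sites can cross $g_i$, and a level-counting argument in the spirit of \lemref{on:L} and \lemref{exact:k} bounds the number of crossings contributed by any one later site along the segment $g_i$ by $O(k)$, so the total number of crossings on $g_i$ is $O(k\,\bclX{g_i})$. Summing over the edges of $\partial\env_i$, averaging over $\site_i$ (uniform in $\RSample_i$, contributing a factor $1/i$ since only a $1/i$-fraction of the conflict mass of $\planeLevel{\le k}{\RSample_i}$ lands on $\partial\env_i$), and invoking \corref{co:b:c:l}, namely $\Ex{\sum_{e\in\planeLevel{\le k}{\RSample}}\bclX{e}}=O(nk^3)$, the expected number of crossings is at most
\[
O(k)\sum_{i=1}^n \frac{1}{i}\cdot O(nk^3) \;=\; O\pth{k^4 n\log n},
\]
as desired, the $\log n$ being the harmonic-series factor characteristic of backward analysis.

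The main obstacle is this crossing bound. First, the level-counting step needs care because the line carrying $g_i$ is a bisector of $\site_i$ and \emph{not} of the later site $\site_j$, so \lemref{on:L} does not apply verbatim and one must argue on an appropriately augmented sub-arrangement of bisectors to recover the $O(k)$ bound. Second, and more fundamentally, the quantities being summed are below-conflict-list sizes of level-$\le k$ features, whose defining sets are of non-constant size and are not unique — exactly the obstruction flagged before \lemref{b:c:l}, and the reason \corref{co:b:c:l} had to be established through the indirect two-stage sampling argument rather than a direct appeal to Clarkson--Shor. Arranging the charging so that every overlay vertex is counted $O(1)$ times and the resulting sum reduces to a form to which \corref{co:b:c:l} applies is where the real work lies.
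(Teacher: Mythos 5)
Your proposal is correct and takes essentially the same route as the paper's proof: backward analysis over the volume ordering (an $O(1/i)$ factor per round for an edge of the $\leq k$-level of $\SiteSet_i$ to lie on $\partial \env_i$), the key observation that later boundaries lie below earlier ones in the lifted picture so that each such edge is crossed only $O\pth{k \, \bclX{\edge}}$ times (via \lemref{exact:k} applied to the lines cut out on the vertical plane through the edge), and then \corref{co:b:c:l} plus the harmonic sum yielding $O\pth{k^4 n \log n}$. The ``obstacles'' you flag at the end are handled exactly as you sketch---the paper invokes \lemref{exact:k} on the conflict-list lines in that vertical plane, and applies \corref{co:b:c:l} only after conditioning on $\SiteSet_i$, so the conflict-list sum that arises is precisely the one the corollary bounds.
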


\begin{proof}
    As the arrangement of the overlay of the polygons
    $\env_1, \ldots, \env_n$ is a planar map, it suffices to bound the
    number of edges in the arrangement.  Fix an iteration $i$, and
    observe that $\SiteSet_i$ is fixed once $i$ is fixed.
    For an edge $e \in \planeLevel{\leq k}{\SiteSet_i}$, let $X_\edge$
    be the indicator variable of the event that $\edge$ was created in
    the $i$\th iteration, and furthermore, lies on the boundary of
    $\env_i$.  Observe that $\Ex{X_\edge \mid \SiteSet_i} \leq 4/i$,
    as an edge appears for the first time in round $i$ only if one of
    its (at most) four defining sites was the $i$\th site inserted.

    For each $i$, let $\Edges(\env_i)$ be the edges in
    $\planeLevel{\leq k}{\SiteSet_i}$ that appear on the boundary of
    $\env_i$ (for simplicity we do not distinguish between edges in
    $\planeLevel{\leq k}{\SiteSet_i}$ in $\Re^3$ and their projection
    in the plane).  
    Created in the $i$\th iteration, an edge $\edge$
    in $\Edges(\env_i)$ is going to be broken into several pieces in
    the final arrangement of the overlay.  Let $n_\edge$ be the number
    of such pieces that arise from $\edge$. 

    Here we claim that $n_\edge \leq c\cdot k\bclX{\edge}$
    for some constant $c$.  Indeed, $n_\edge$ counts the number of future
    intersections of $\edge$ with the edges of $\Edges(\env_j)$, for
    any $j>i$.  
    As the edge $\edge$ is on the $k$-level at the time of
    creation, and the edges in $\Edges(\env_j)$ are on the $k$-level
    when they are being created (in the future), these edges must lie
    below $\edge$.  Namely, any future intersect on $\edge$ are caused
    by intersections of (pairs of) planes in $\BclX{\edge}$.  So
    consider the intersection of all planes in $\BclX{\edge}$ on the
    vertical plane containing $\edge$.  
    (Since $\SiteSet_i$ is fixed, $\BclX{\edge}$ is
    also fixed for all $\edge \in \planeLevel{\leq k}{\SiteSet_i}$.)
    On this vertical plane,
    $\BclX{\edge}$ is a set of $\bclX{\edge}$ lines, whose insertion
    ordering is defined by the suffix of the permutation
    $\permut{\site_{i+1}, \ldots, \site_{n}}$.  Now any edge of
    $\Edges(\env_j)$, for some $j>i$, that intersects $\edge$ must
    appear as a vertex on the $k$-level at some point during the
    insertion of these lines.  However, by \lemref{exact:k}, applied
    to the lines of $\BclX{\edge}$ on the vertical plane of $\edge$,
    under any insertion ordering there are at most
    $O\pth{ k \bclX{\edge} }$ vertices that ever appear on the
    $k$-level.

    Let
    \begin{math}
        Y_i = \sum_{\edge \in \Edges(\env_i)} n_\edge = \sum_{\edge
           \in \Edges_{\leq k}(\SiteSet_i) } n_\edge X_\edge
    \end{math}
    be the total (forward) complexity contribution to the final
    arrangement of edges added in round $i$.  We thus have
    \begin{align*}
        \ExCond{\Bigl. Y_i}{\SiteSet_i}%
        &=%
        \Ex{%
              \sum\nolimits_{\edge \in \planeLevel{\leq k} {\SiteSet_i}}%
           n_\edge X_\edge \sep{\biggl.\SiteSet_i}}%
        \leq%
        \ExCond{ \sum\nolimits_{\edge \in \planeLevel{\leq
                    k}{\SiteSet_i}} {ck \bclX{\edge} X_\edge}
        }{\SiteSet_i}%
        \\&%
        =%
        \sum\nolimits_{\edge \in \planeLevel{\leq k}{\SiteSet_i}} {ck
           \bclX{\edge} } \ExCond{\Bigl.X_\edge}{\SiteSet_i}%
        \leq%
        {\frac{4ck}{i} \cdot \sum\nolimits_{\edge \in \planeLevel{\leq
                 k}{\SiteSet_i}} \bclX{\edge} }.
    \end{align*}

    The total complexity of the overlay arrangement of the polygons
    $\env_1, \ldots, \allowbreak \env_n$ is asymptotically bounded by
    $\sum_i Y_i$, and so by \corref{co:b:c:l} we have
    \begin{align*}
      \Ex{\Bigl. \textstyle \sum_i Y_i}%
      &=%
        \sum_i \Ex{ \biggl. \ExCond{\Bigl.Y_i}{\SiteSet_i} }%
        \leq%
        \sum_i \Ex{ {\frac{4ck}{i}\cdot \sum\nolimits_{\edge \in
        \planeLevel{\leq k}{\SiteSet_i}} \bclX{\edge} }
        }%
        \DCGVer{
      \\&}%
          =%
          O\pth{\sum\nolimits_i \frac{nk^4}{i} }%
          =%
          O\pth{\Bigl. k^4 n \log n}. 
          \DCGVer{{\tag*\qed}}
    \end{align*}
\end{proof}

%%%%%%%%%%%%%%%%%%%%%%%%%%%%%%%%%%%%%%%%%%%%%%%%%%%%%%%%%%%%%%%%%%% 
%%%%%%%%%%%%%%%%%%%%%%%%%%%%%%%%%%%%%%%%%%%%%%%%%%%%%%%%%%%%%%%%%%% 

\section{On the expected size of the staircase}
\seclab{size:candidate:set}

\subsection{Number of staircase points}
\seclab{stair}

\subsubsection{The two dimensional case}
\seclab{stair:2d}

\begin{corollary}%
    \corlab{2:d}%
    % 
    % \RefProofInAppendix{2:d} %
    %
    Let $\PntSet$ be a set of $n$ points sampled uniformly at random
    from the unit square $[0,1]^2$.  Then the number of staircase
    points $\stair(\PntSet)$ in $\PntSet$ is $\Owhp(\log n)$.
\end{corollary}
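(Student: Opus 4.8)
The plan is to reduce the two–dimensional problem to a one–dimensional record–counting question and then invoke the high–probability backward analysis bound of \lemref{property:e} (equivalently \corref{property}). Sample the second coordinates of the $n$ points first, and then the first coordinates independently. Order the points by increasing first coordinate and write $\permut{\pnt_1, \ldots, \pnt_n}$ for this ordering. Since each coordinate is drawn from a continuous distribution and the two coordinates are independent, all sampled values are distinct with probability one, and the rank order by first coordinate is a uniform random permutation that is independent of the multiset of sampled second coordinates. Consequently, after relabelling by this order, the sequence of second coordinates is a uniform random permutation of that multiset.

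First I would record the key structural observation: a point $\pnt_i$ lies in $\stairX{\PntSet}$ iff no other point dominates it, and since every point occurring after $\pnt_i$ in the ordering has a strictly larger first coordinate, it can never dominate $\pnt_i$; hence $\pnt_i \in \stairX{\PntSet}$ exactly when none of $\pnt_1, \ldots, \pnt_{i-1}$ has a smaller second coordinate, i.e.\ when $\pnt_i$ has the minimum second coordinate among $\PntSet_i = \brc{\pnt_1, \ldots, \pnt_i}$. Thus, letting $X_i$ be the indicator of the event that $\pnt_i$ attains the smallest second coordinate in $\PntSet_i$, we obtain the exact identity $\cardin{\stairX{\PntSet}} = \sum_{i=1}^n X_i$.

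Next I would apply \lemref{property:e}(A) (or \corref{property}) with parameter $k = 1$ and the property $\PSetX{\SetX}$ defined to be the singleton consisting of the point of $\SetX$ with the smallest second coordinate; this property has size exactly one on every nonempty subset, so the hypothesis of part (A) holds (and since it holds for all subsets, the $n^{-c}$ error term of \corref{property} is unnecessary). Conditioning on the sampled second coordinates, the ordering constructed above is a uniform random permutation of them, $X_i$ is precisely the indicator that $\pnt_i \in \PSetX{\PntSet_i}$, and the lemma yields $\Prob{\sum_i X_i > \gamma\cdot 2\ln n} \leq n^{-\gamma}$ for every $\gamma \geq 2e$. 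As this estimate does not depend on the conditioning, it holds unconditionally, and choosing $\gamma$ large enough to beat any prescribed polynomial error gives $\cardin{\stairX{\PntSet}} = \Owhp(\log n)$.

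The only step that requires genuine care is the reduction itself: making precise that sorting by the first coordinate turns the second coordinates into a uniform random permutation which is independent of their values, so that the fixed–set backward analysis lemma applies verbatim. Once this is set up, the Chernoff–type tail bound comes essentially for free from \lemref{property:e}, and everything else is routine.
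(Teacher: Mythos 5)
Your proposal is correct and follows essentially the same route as the paper: order the points by increasing $x$-coordinate, observe that the staircase points are exactly the prefix minima of the $y$-values in the resulting uniform random permutation, and apply the high-probability backward-analysis bound (\corref{property}, i.e.\ \lemref{property:e}) with $k=1$. The extra care you take in justifying that sorting by the first coordinate yields a uniform random permutation of the second coordinates, independent of their values, is a welcome but minor elaboration of the same argument.
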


\begin{proof}%:in:appendix:e}{\corref{2:d}}{2:d}
    If we order the points in $\PntSet$ by increasing $x$-coordinate,
    then the staircase points are exactly the points which have the
    smallest $y$-values out of all points in their prefix in this
    ordering.  As the $x$-coordinates are sampled uniformly at random,
    this ordering is a random permutation $\permut{y_1, \ldots, y_n}$
    of the $y$-values $\valsY$.  Let $X_i$ be the indicator variable
    of the event that $y_i$ is the smallest number in
    $\valsY_i = \brc{y_1, \ldots, y_i}$ for each $i$.  By setting
    property $\Property(\valsY_i)$ to be the smallest number in the
    prefix $\valsY_i$, we have $\sum_{i=1}^n X_i = O(\log n)$ with
    high probability by \corref{property}.
    \DCGVer{{\qed}}
\end{proof}%:in:appendix:e}

\subsubsection{Higher dimensions}
\seclab{stair:hd}

\begin{lemma}
    \lemlab{staircase:union}%
    % 
    % \RefProofInAppendix{staircase:union} %
    %
    Fix a dimension $d \ge 2$.  Let $m$ and $n$ be parameters, such
    that $m \leq n$.  Let
    $\OPntSet = \permut{\pntA_1, \ldots, \pntA_m}$ be an ordered set
    of $m$ points picked randomly from $[0,1]^d$ as described in
    \secref{sample:model}.  Assume that we have
    $\cardin{\stairX{\OPntSet_i}} = O(c_d \log^{d-1} n)$, with high
    probability with respect to $m$ for all $i$ simultaneously, where
    $\OPntSet_i = \brc{\pntA_1, \ldots, \pntA_i}$ is the underlying
    set of the $i$\th prefix of $\OPntSet$.  Then, the set
    $\stair = \bigcup_{i=1}^m \stairX{\OPntSet_i}$ has size
    $O(c_d \log^d n)$, with high probability with respect to $m$.
\end{lemma}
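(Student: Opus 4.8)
The plan is to recognize $\cardin{\stair}$ as a sum of backward-analysis indicator variables and then invoke \corref{property} directly. First, since each coordinate distribution $\Distribution_j$ is continuous, with probability one all the sampled points are distinct and no coordinate ties occur, so every $\stairX{\OPntSet_i}$ is well defined; I will assume this throughout. The key combinatorial fact is that staircase membership is monotone under deletion: if $\pnt \in \stairX{\SetX}$ and $\pnt \in \SetX' \subseteq \SetX$, then $\pnt \in \stairX{\SetX'}$, since no point of $\SetX' \subseteq \SetX$ can dominate $\pnt$. Hence for a point $\pntA_j$ to belong to $\stair = \bigcup_{i=1}^m \stairX{\OPntSet_i}$ it is necessary and sufficient that $\pntA_j \in \stairX{\OPntSet_j}$: the only prefixes $\OPntSet_i$ containing $\pntA_j$ are those with $i \ge j$, and if $\pntA_j \in \stairX{\OPntSet_i}$ for such an $i$ then also $\pntA_j \in \stairX{\OPntSet_j}$ because $\OPntSet_j \subseteq \OPntSet_i$. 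Letting $X_j$ be the indicator variable of the event $\pntA_j \in \stairX{\OPntSet_j}$, this yields the exact identity $\cardin{\stair} = \sum_{j=1}^m X_j$.

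Next I would apply \corref{property} with ground set the (unordered) set of $m$ sampled points, random permutation the sampling order $\permut{\pntA_1, \ldots, \pntA_m}$, and property $\Property(\SetX) \coloneqq \stairX{\SetX}$; the variables $X_j$ above are exactly the $X_i$ of the corollary. Choosing $k \coloneqq \Theta(c_d \log^{d-1} n)$ to be the bound supplied by the hypothesis, we have $\cardin{\PSetX{\OPntSet_i}} \le k$ simultaneously for all $i$ with probability at least $1 - m^{-c}$, where $c$ may be taken arbitrarily large since the hypothesized bound holds with high probability with respect to $m$. Then \corref{property} gives, for any fixed $\gamma \ge 2e$,
\[
\Prob{\cardin{\stair} > \gamma \pth{2k \ln m}} \le m^{-\gamma k} + m^{-c}.
\]
Since $m \le n$ we have $\ln m \le \ln n$, so $\gamma\pth{2k\ln m} = O\pth{c_d \log^{d-1} n \cdot \log n} = O\pth{c_d \log^d n}$, and the error term is polynomially small in $m$ once $c$ and $\gamma$ are chosen large enough — which is exactly the claimed statement $\cardin{\stair} = \Owhp\pth{c_d \log^d n}$ with respect to $m$.

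The only genuinely delicate point — and the step I expect to require the most care — is that \corref{property} is stated for a \emph{fixed} ground set, whereas here the set of sampled points is itself random. I would handle this by conditioning on the realization of the unordered sampled set: given that set, the sampling order is a uniform random permutation of it (exchangeability of the i.i.d.\ draws), so the corollary applies conditionally. To transfer the hypothesized high-probability bound $\cardin{\stairX{\OPntSet_i}} \le k$ down to (almost) every such conditioning, I would apply Markov's inequality to the conditional failure probability: all but an $m^{-c/2}$-fraction of set realizations have conditional failure probability at most $m^{-c/2}$, and for those one runs the argument above with $c$ replaced by $c/2$; integrating over the random set contributes only an additional $m^{-c/2}$ to the overall error, leaving the bound $\Owhp\pth{c_d \log^d n}$ intact.
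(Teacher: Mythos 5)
Your proof is correct and follows essentially the same route as the paper: both apply \corref{property} with the property $\Property(\SetX) = \stairX{\SetX}$ and the fixed threshold $k = \Theta(c_d\log^{d-1} n)$ supplied by the hypothesis, then translate the $O(k\log m)$ bound into $O(c_d\log^d n)$ using $m \le n$ (the paper does this by choosing $\gamma = 2e\ln n/\ln m$ rather than keeping $\gamma$ constant, a cosmetic difference). The two points you spell out — the identity $\cardin{\stair} = \sum_i X_i$ via monotonicity of staircase membership under deletion, and the conditioning on the unordered sample so that the corollary's fixed-ground-set hypothesis applies — are left implicit in the paper's proof, so your write-up is just a more detailed version of the same argument.
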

\begin{proof}
    Let $\cardin{\stairX{\OPntSet_i}} \leq c' \cdot c_d \ln^{d-1} n$
    with probability $1-m^{-c}$ for large enough constant $c$ and some
    constant $c'$ depending on $c$.  By setting
    $\Property(\OPntSet_i) = \stairX{\OPntSet_i}$, we have that
    $\Prob{\bigl.\cardin{\stair} > \gamma (2k \ln m)} \leq m^{-\gamma
       k} + m^{-c}$ for $\smash{k = c' \cdot c_d \ln^{d-1} n}$ and any
    $\gamma \geq 2e$, by \corref{property}.  Setting
    $\gamma = 2e\ln n / \ln m$
    % , we have
    % \[
    %     \Prob{\bigl. \cardin{\stair} > 4e c' \cdot c_d \ln^{d} n}
    %     \leq
    %     e^{-2ec' \cdot c_d \ln^{d} n} + m^{-c}
    %     \leq %e^{-2ec' \cdot c_d \ln^{d} m} + m^{-c},
    % \]
    % which
    implies the claim.
    \DCGVer{{\qed}}
\end{proof}

\begin{lemma}
    \lemlab{staircase}
    % 
    % \RefProofInAppendix{staircase} %
    %
    Fix a dimension $d \ge 2$.  Let $m, n$ be parameters, such that
    $m \leq n$.  Let $\PntSet$ be a set of $m$ points picked randomly
    from $[0,1]^d$ as described in \secref{sample:model}.  Then,
    $\cardin{\stairX{\PntSet}} = O( c_d \log^{d-1} n )$ holds, with
    high probability with respect to $m$, for some constant $c_d$ that
    depends only on $d$.
\end{lemma}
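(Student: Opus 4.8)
I would prove the bound by induction on the dimension $d$, stripping off one coordinate at a time and invoking \lemref{staircase:union}. The base case $d = 2$ is \corref{2:d} (its argument only uses that sorting by the first coordinate gives a uniform random permutation, which holds in the sampling model of \secref{sample:model}). For the inductive step, assume the statement in dimension $d - 1$, and let $\PntSet = \brc{\pnt_1, \ldots, \pnt_m}$ be $m$ points sampled from $[0,1]^d$ as in \secref{sample:model}, relabeled so that $\pnt_1, \ldots, \pnt_m$ is sorted by increasing $d$\th coordinate. Let $\pntA_i \in [0,1]^{d-1}$ be the projection of $\pnt_i$ onto its first $d - 1$ coordinates. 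Because the coordinates are drawn independently, the sort order by the $d$\th coordinate is independent of the first $d - 1$ coordinates, so $\OPntSet = \permut{\pntA_1, \ldots, \pntA_m}$ is a \emph{uniformly random} ordering of $m$ points sampled from $[0,1]^{d-1}$ according to the same model; write $\OPntSet_i = \brc{\pntA_1, \ldots, \pntA_i}$ for its $i$\th prefix.

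The reduction rests on a routine observation: $\pnt_i$ is a staircase point of $\PntSet$ if and only if $\pntA_i$ is a staircase point of $\OPntSet_i$. Indeed, any point dominating $\pnt_i$ has a smaller $d$\th coordinate, hence lies among $\pnt_1, \ldots, \pnt_{i-1}$, and it dominates $\pnt_i$ precisely when its projection dominates $\pntA_i$; conversely a projection in $\OPntSet_{i-1}$ that dominates $\pntA_i$ lifts to a point dominating $\pnt_i$. Moreover, if $\pntA_j$ is a staircase point of $\OPntSet_i$ for some $i$ (necessarily $i \ge j$) then it is already a staircase point of $\OPntSet_j$. Together these give a bijection between $\stairX{\PntSet}$ and $\bigcup_{i=1}^m \stairX{\OPntSet_i}$, so $\cardin{\stairX{\PntSet}} = \cardin{\bigcup_{i=1}^m \stairX{\OPntSet_i}}$.

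Now I would apply \lemref{staircase:union} in dimension $d - 1$ to the random ordered set $\OPntSet$: it bounds $\cardin{\bigcup_{i=1}^m \stairX{\OPntSet_i}}$ by $O(c_{d-1}\log^{d-1} n)$ with high probability with respect to $m$, which is exactly the desired bound with $c_d = \Theta(c_{d-1})$. To use that lemma one must verify its hypothesis: that $\cardin{\stairX{\OPntSet_i}} = O(c_{d-1}\log^{d-2} n)$ holds, with high probability with respect to $m$, simultaneously for all $i$. For this I would invoke the inductive hypothesis --- each $\OPntSet_i$ has the distribution of a set of $i \le m \le n$ points sampled from $[0,1]^{d-1}$, so in dimension $d - 1$ it has $O(c_{d-1}\log^{d-2} n)$ staircase points --- and then combine these estimates over the $m$ prefixes, handling the short prefixes (where $i$ is below the target bound) by the trivial inequality $\cardin{\stairX{\OPntSet_i}} \le i$.

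The step I expect to be the crux is precisely this last one: promoting the per-prefix estimates to a bound that holds for all prefixes simultaneously with polynomially small failure probability. This is the ``dependency leakage between iterations'' phenomenon of \secref{minima:whp} --- the staircases of different prefixes have fluctuating sizes and are coupled through the shared permutation --- and it is exactly the regime that \lemref{property:e}/\corref{property}, and in turn \lemref{staircase:union}, were designed to absorb. Everything else (the sorting, the projection, and the bijection) is bookkeeping, and the power-of-logarithm accounting closes up cleanly: a single-set bound of $\log^{d-2} n$ in dimension $d - 1$ produces a union bound of $\log^{d-1} n$, matching the single-set bound claimed here in dimension $d$.
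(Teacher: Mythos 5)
Your proposal follows essentially the same route as the paper's proof: induct on the dimension with \corref{2:d} as the base case, sort by the $d$\th coordinate and project to the first $d-1$ coordinates, relate $\stairX{\PntSet}$ to $\bigcup_{i} \stairX{\OPntSet_i}$, and conclude via \lemref{staircase:union}, whose hypothesis is supplied by applying the inductive hypothesis to each prefix. The differences are cosmetic: the paper only uses the one-sided containment $\cardin{\stairX{\PntSet}} \leq \cardin{\bigl.\bigcup_{i} \stairX{\OPntSet_i}}$ where you establish a bijection, and you spell out the union-bound and short-prefix bookkeeping that the paper leaves implicit.
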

\begin{proof}%{\lemrefpage{staircase}}{staircase}
    The argument follows by induction on dimension.  The
    two-dimensional case follows from \corref{2:d}.  Assume we have
    proven the claim for all dimension smaller than $d$.
    
    Now, sort $\PntSet$ by increasing value of the $d$\th coordinate,
    and let $\pnt_i = \pth{\pntA_i, \ell_i}$ be the $i$\th point in
    $\PntSet$ in this order for each $i$, where $\pntA_i$ is a
    $(d-1)$-dimensional vector and $\ell_i$ is the value of the $d$\th
    coordinate of $\pnt_i$.  Observe that the points
    $\pntA_1, \ldots, \pntA_m$ are randomly, uniformly, and
    independently picked from the hypercube $[0,1]^{d-1}$.  Now, if
    $\pnt_i$ is a minima point of $\PntSet$, then it is a minima point
    of $\brc{\pnt_1,\ldots, \pnt_i}$.  But this implies that $\pntA_i$
    is a minima point of
    $\OPntSet_i = \brc{ \pntA_1, \ldots, \pntA_i}$ as well. Namely,
    $\pntA_i \in \stair = \bigcup_{i=1}^m \stairX{\OPntSet_i}$.  This
    implies that $\cardin{\stairX{\PntSet}} \leq \cardin{\stair}$.
    Now, applying induction hypothesis on each $\OPntSet_i$ in
    dimension $d-1$ we have
    $\cardin{\stairX{\OPntSet_i}} = O( c_{d-1} \log^{d-2} n )$ holds
    for all $i$, with high probability with respect to $m$.  Plugging
    it into \lemref{staircase:union} we have
    $\cardin{\stairX{\PntSet}} \leq \cardin{\stair} = O( c_{d-1}
    \log^{d-1} n )$, with high probability with respect to $m$.
    Choosing a proper constant $c_d$ now implies the claim.
    \DCGVer{{\qed}}
\end{proof}

\begin{lemma}
    \lemlab{staircase:w:h:p}%
    Fix a dimension $d \ge 2$.  Let
    $\OPntSet = \permut{\pntA_1, \ldots, \pntA_n}$ be an ordered set
    of $n$ points picked randomly from $[0,1]^d$ (as described in
    \secref{sample:model}), and
    $\OPntSet_i = \brc{\pntA_1, \ldots, \pntA_i}$ is the $i$\th
    (unordered) prefix of $\OPntSet$.  Then, the set
    $\bigcup_{i=1}^n \stairX{\OPntSet_i}$ is of size
    $\Owhp( c_d \log^{d} n)$, and the staircase $\stairX{\PntSet}$ is
    of size $\Owhp( c_d \log^{d-1} n)$.
\end{lemma}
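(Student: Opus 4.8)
The plan is to obtain both statements by specializing \lemref{staircase} and \lemref{staircase:union} to the case $m=n$, so that their ``with high probability with respect to $m$'' guarantees become high-probability bounds in $n$. The second assertion is then immediate: the underlying set $\PntSet \coloneqq \OPntSet_n$ is a set of $n$ points sampled as in \secref{sample:model}, so \lemref{staircase} applied with parameter $m=n$ (and the auxiliary parameter also set to $n$) gives $\cardin{\stairX{\PntSet}} = \Owhp(c_d \log^{d-1}n)$.

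For the first assertion I would first record the elementary observation that a point $\pntA_\ell$ lies in $\bigcup_{i=1}^n \stairX{\OPntSet_i}$ if and only if it already lies in $\stairX{\OPntSet_\ell}$: a point that is not dominated inside a prefix is not dominated inside any of its sub-prefixes, and $\pntA_\ell$ only appears in the prefixes $\OPntSet_i$ with $i\ge\ell$. Hence $\cardin{\bigcup_{i=1}^n \stairX{\OPntSet_i}} = \sum_{\ell=1}^n X_\ell$, where $X_\ell$ is the indicator variable of the event $\pntA_\ell \in \stairX{\OPntSet_\ell}$. I would then apply \corref{property} (which is exactly \lemref{staircase:union} specialized to $m=n$, and whose proof is precisely this application) with $k = \Theta(c_d \log^{d-1}n)$ and with the property $\PSetX{\SetX} \coloneqq \stairX{\SetX}$. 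Its prerequisite --- that $\cardin{\stairX{\OPntSet_i}} \le k$ holds simultaneously over all prefixes with polynomially small failure probability --- is supplied by \lemref{staircase} applied to each $\OPntSet_i$: it holds trivially when $i\le k$, since then $\cardin{\stairX{\OPntSet_i}}\le i\le k$, and it is the high-probability bound of \lemref{staircase} (with auxiliary parameter $n$) for the remaining $i$, made uniform over the at most $n$ relevant indices by a union bound together with a sufficiently large constant in the high-probability guarantee. \corref{property} then yields $\sum_{\ell} X_\ell = O(k\log n) = O(c_d \log^d n)$ with high probability, which is the first assertion.

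The delicate step, and the one I expect to be the main obstacle, is exactly this uniform-over-all-prefixes bound on $\cardin{\stairX{\OPntSet_i}}$ with failure probability polynomially small in $n$: direct backward analysis does not suffice here, because $\cardin{\stairX{\OPntSet_i}}$ is itself random, so the indicators $X_\ell$ are not independent and one cannot simply union-bound the per-prefix tail estimates (which are only polynomially small in $i$, not in $n$). This is precisely the scenario the development in \secref{minima:whp} was built to handle --- the padding argument in the proof of \lemref{property:e} and the resulting \corref{property} --- and the recursion that validates the prerequisite ultimately bottoms out in the two-dimensional record-counting statement of \corref{2:d}, where the relevant indicator variables genuinely are independent.
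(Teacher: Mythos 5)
Your proposal is correct and takes essentially the same route as the paper: the paper's proof simply invokes \lemref{staircase:union} with $m=n$ for the first assertion and \lemref{staircase} with $m=n$ for the second. Your argument merely inlines the proof of \lemref{staircase:union}---the identification of $\bigl|\bigcup_i \stairX{\OPntSet_i}\bigr|$ with $\sum_\ell X_\ell$ and the application of \corref{property} with $k=\Theta(c_d\log^{d-1}n)$, whose prerequisite is supplied by \lemref{staircase}---so it matches the paper's reasoning step for step.
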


\begin{proof}%:in:appendix:e}{\lemrefpage{staircase:w:h:p}}{s:w:h:p}
    By \lemref{staircase:union}, the set
    $\bigcup_{i=1}^n \stairX{\OPntSet_i}$ is of size
    $O( c_d \log^d n )$, with high probability.  By
    \lemref{staircase}, the set $\stairX{\PntSet}$ is of size
    $O( c_d \log^{d-1} n )$, with high probability.
    \DCGVer{{\qed}}
\end{proof}%:in:appendix:e}

\begin{remark}
    In the proof of \lemref{staircase} whether a point is on the
    staircase (or not) only depends on the coordinate orderings of the
    points and not their actual values.
    
    The basic recursive argument used in \lemref{staircase} was used
    by Clarkson \cite{c-enkci-04} to bound the expected number of
    $k$-sets for a random point set.  Here, using \corref{property}
    enables us to get a high-probability bound.
    
    Note that the definition of the staircase can be made with respect
    to any corner of the hypercube (that is, this corner would replace
    the origin in the definition dominance, point volume, the
    exponential grid, etc).  Taking the union over all $2^d$ such
    staircases gives us the subset of $\PntSet$ on the orthogonal
    convex hull of $\PntSet$.  Therefore \lemref{staircase:w:h:p} also
    bounds the number of input points on the orthogonal convex hull.
    As the vertices on the convex hull of $\PntSet$ are a subset of
    the points in $\PntSet$ on the orthogonal convex hull, the above
    also implies the same bound on the number of vertices on the
    convex hull.
\end{remark}

\subsection{Bounding the size of the candidate set}
\seclab{size:beer:set}

We can now readily bound the size of the candidate set for any point
in the plane.%

\begin{lemma}
    \lemlab{candid}%
    % 
    % \RefProofInAppendix{candid} %
    % 
    Let $\SiteSet$ be a set of $n$ sites in the plane, where for each
    site $\site$ in $\SiteSet$, a parametric point from a distribution
    over $[0,1]^d$ is sampled (as described in \secref{sample:model}).
    Then, the candidate set has size $\Owhp(\log^{d} n)$
    simultaneously for all points in the plane.
\end{lemma}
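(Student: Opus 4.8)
The plan is to reduce the bound to the staircase estimate of \lemref{staircase:w:h:p}. Fix a point $\query$ lying in the interior of a face of the arrangement of the $\binom{n}{2}$ bisectors of pairs of sites, so that the distances $\ell_i = \dist{\query}{\site_i}$ are pairwise distinct, and relabel the sites so that $\ell_1 < \ell_2 < \cdots < \ell_n$. The point $(\attribB_i, \ell_i)$ is a staircase point of $\PntSet(\query)$ in $\Re^{d+1}$ if and only if it is not dominated by any $(\attribB_j, \ell_j)$ with $j < i$ (no later point can dominate it, its last coordinate being strictly larger), and since $\ell_j < \ell_i$ for every such $j$, this holds exactly when $\attribB_i$ is a staircase point of $\OPntSet_i \coloneqq \brc{\attribB_1, \ldots, \attribB_i}$ in $\Re^d$. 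Consequently, writing $\OPntSet = \permut{\attribB_1, \ldots, \attribB_n}$ for the parametric points listed in this distance order, and using that the sampled parametric points are distinct,
\begin{displaymath}
  \cardin{\candid(\query)}
  = \cardin{\Set{i}{\attribB_i \in \stairX{\OPntSet_i}}}
  = \cardin{\textstyle\bigcup_{i=1}^{n} \stairX{\OPntSet_i}},
\end{displaymath}
the last equality because $\attribB_i \in \stairX{\OPntSet_j}$ for some $j$ forces $j \ge i$, hence $\attribB_i \in \stairX{\OPntSet_i}$.

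Next I would observe that the distance ordering of the sites is a function of the (fixed) site locations only, hence independent of the sampled parametric points; therefore $\OPntSet$ is distributed exactly as an ordered sample of $n$ points drawn from $[0,1]^d$ under the model of \secref{sample:model}. Thus \lemref{staircase:w:h:p} applies and yields $\cardin{\bigcup_{i=1}^n \stairX{\OPntSet_i}} = \Owhp(c_d \log^d n)$, i.e., $\cardin{\candid(\query)} = \Owhp(\log^d n)$ for the fixed point $\query$.

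To upgrade this to a statement holding simultaneously over the whole plane, I would follow the argument in the proof of \lemref{proxy:set:size}: by the reduction above the candidate set depends on $\query$ only through the distance ordering of the sites, hence it is constant on each of the $O(n^4)$ faces of the bisector arrangement; picking one representative per face, applying the per-point high-probability bound, and taking a union bound over the $O(n^4)$ faces gives the claim, with the lower-dimensional cells (points equidistant to two or more sites) absorbed into the same union bound after minor bookkeeping.

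The step that needs care, and which I expect to be the crux, is the reduction in the first paragraph: one must check that sorting $\PntSet(\query)$ along the fixed distance coordinate turns the $(d{+}1)$-dimensional staircase into precisely the $d$-dimensional ``union over prefixes of staircases'' quantity bounded by \lemref{staircase:w:h:p}, and at the same time verify that the sorting permutation, being a function of the non-random site locations, does not disturb the distributional assumptions on the parametric points. Everything downstream of that is a routine union bound.
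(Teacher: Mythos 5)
Your proposal is correct and follows essentially the same route as the paper: reduce the candidate set at a fixed $\query$ to the union of prefix staircases of the parametric points ordered by distance, apply \lemref{staircase:w:h:p}, and then take a union bound over the $O(n^4)$ faces of the bisector arrangement, within which the candidate set is constant. The only difference is one of exposition---the paper treats the sort-by-distance reduction and the independence of the distance ordering from the sampled attributes as immediate, whereas you spell these steps out explicitly (and your handling of ties on lower-dimensional faces is fine).
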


\begin{proof}
    Consider the arrangement of bisectors of all pairs of points of
    $\SiteSet$.  This arrangement has complexity $O\pth{ n^4 }$, and
    inside each cell the candidate set is the same.  Now for any point
    in a cell of the arrangement, \lemref{staircase:w:h:p} immediately
    gives us the stated bound, with high probability.  Therefore
    picking a representative point from each cell in this arrangement
    and applying the union bound imply the claim.  \DCGVer{{\qed}}
\end{proof}

%%%%%%%%%%%%%%%%%%%%%%%%%%%%%%%%%%%%%%%%%%%%%%%%%%%%%%%%%%%%%%% 
%%%%%%%%%%%%%%%%%%%%%%%%%%%%%%%%%%%%%%%%%%%%%%%%%%%%%%%%%%%%%%% 

\section{The main result}
\seclab{pf:candid:complexity}

We now use the bound on the complexity of the proxy diagram, as well
as our knowledge of the relationship between the candidate set and the
proxy set to bound the complexity, as well as the \storage complexity, 
of the candidate diagram.

Recall that the \emph{complexity} of a candidate diagram, treated as a
planar arrangement, is the total number of edges, faces, and vertices 
in the diagram. The \emph{\storage complexity} of the candidate 
diagram is 
%the total amount of memory needed to store the diagram explicitly, 
%and is bounded by the complexity of the candidate diagram together with 
the sum of the sizes of candidate sets over all the faces in the 
arrangement of the diagram.

\begin{theorem}
    \thmlab{candid:complexity}%
    % 
    % \RefProofInAppendix{c:c} %
    % 
    Let $\SiteSet$ be a set of $n$ sites in the plane, where for each
    site in $\SiteSet$ we sample an associated parametric point in
    $[0,1]^d$, as described in \secref{sample:model}.  Then, the
    expected complexity of the candidate diagram is
    $O\bigl(n\log^{8d+5} n \bigr)$. The expected \storage complexity
    of this candidate diagram is $O\bigl( n\log^{9d+5} n \bigr)$.
\end{theorem}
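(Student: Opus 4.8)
The plan is to use the $k$-th order proxy diagram of \lemref{proxy:complexity} as a scaffold: fix $k = \Theta(\log^d n)$, with the constant chosen as in \lemref{candidate:in:proxy}, and bound the candidate diagram cell-by-cell over the proxy diagram. Let $\mathcal{G}$ be the event that, simultaneously for all points $\query$ in the plane, (i) $\candid(\query) \subseteq \proxySet{k}{\query}{\SiteSet}$, (ii) $\cardin{\proxySet{k}{\query}{\SiteSet}} = O\pth{\log^{d+1} n}$, and (iii) $\cardin{\candid(\query)} = O\pth{\log^{d} n}$. By \lemref{candidate:in:proxy}, \lemref{proxy:set:size} (with $k = \Theta(\log^d n)$, giving proxy sets of size $O(k\log n)$), \lemref{candid}, and the union bound, $\Prob{\overline{\mathcal{G}}} \le n^{-c}$ for any prescribed constant $c$, at the expense of the constants hidden in (ii) and (iii).

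Next I would record the structural fact that, on $\mathcal{G}$, inside any single cell $\VCell$ of the proxy diagram the candidate set of a point $\query \in \VCell$ equals the staircase of the sites of the common proxy set $P_\VCell$ of $\VCell$, augmented with the distances to $\query$. Indeed, if some $\site_i \in P_\VCell$ lies on this restricted staircase but is dominated once all of $\SiteSet$ is considered, it is dominated by a staircase point $\site_j$ of the full point set, and $\site_j \in \candid(\query) \subseteq P_\VCell$ by (i) --- a contradiction. Consequently, inside $\VCell$ the candidate set changes only when $\query$ crosses a bisector of a pair of sites of $P_\VCell$, so the candidate diagram restricted to $\VCell$ is a coarsening of the arrangement of the $\binom{\cardin{P_\VCell}}{2}$ bisector lines of pairs in $P_\VCell$, clipped to $\VCell$. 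Writing $t_\VCell$ for the complexity of $\partial \VCell$, this clipped arrangement has complexity $O\pth{\cardin{P_\VCell}^4 + \cardin{P_\VCell}^2 t_\VCell + t_\VCell}$ (arrangement vertices inside $\VCell$, crossings of bisector lines with $\partial\VCell$, and vertices of $\VCell$, respectively).

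Since every vertex and edge of the candidate diagram is counted $O(1)$ times among these clipped arrangements, its complexity is $O\pth{\sum_\VCell \bigl( \cardin{P_\VCell}^4 + \cardin{P_\VCell}^2 t_\VCell + t_\VCell \bigr)}$. On $\mathcal{G}$ we have $\cardin{P_\VCell} = O\pth{\log^{d+1} n}$ for every $\VCell$ by (ii), while $\sum_\VCell 1$ and $\sum_\VCell t_\VCell$ are both $O$ of the complexity of the proxy diagram, which is $O\pth{k^4 n \log n}$ by \lemref{proxy:complexity} (each proxy-diagram edge bounds two cells). Hence, on $\mathcal{G}$, the candidate diagram has complexity $O\pth{\log^{4d+4} n}\cdot O\pth{k^4 n\log n} = O\pth{n \log^{8d+5} n}$. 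Taking expectations, splitting on $\mathcal{G}$, and bounding the contribution of $\overline{\mathcal{G}}$ by $n^{-c}$ times the worst-case $O\pth{n^4}$ bound of \lemref{complexity} (choosing $c$ large), the expected complexity is $O\pth{n \log^{8d+5} n}$. For the storage complexity, on $\mathcal{G}$ each face of the candidate diagram carries a candidate set of size $O\pth{\log^d n}$ by (iii), so the storage complexity is at most $O\pth{\log^d n}$ times the number of faces of the candidate diagram; taking expectations, splitting on $\mathcal{G}$, plugging in the bound just obtained, and bounding $\overline{\mathcal{G}}$ by $n^{-c}$ times the worst-case $O\pth{n^5}$ bound of \lemref{complexity} yields $O\pth{n \log^{9d+5} n}$.

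The main obstacle I anticipate is the bookkeeping that mixes a high-probability statement (the size bounds and the containment defining $\mathcal{G}$) with an in-expectation statement (\lemref{proxy:complexity}); this is exactly what forces the use of the crude worst-case fallbacks of \lemref{complexity} and the requirement that $\Prob{\overline{\mathcal{G}}}$ be driven below, say, $n^{-5}$ so that those $n^4$ and $n^5$ terms are absorbed. A lesser point requiring care is the per-cell clipped-arrangement estimate together with the identity $\sum_\VCell t_\VCell = O(\text{proxy diagram complexity})$, and the (routine) observation that the candidate set is constant on each full-dimensional cell of the bisector arrangement of $P_\VCell$.
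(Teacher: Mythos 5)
Your proposal is correct and follows essentially the same route as the paper: fix $k=\Theta(\log^d n)$, use the proxy diagram of \lemref{proxy:complexity} as a scaffold, invoke \lemref{proxy:set:size} and \lemref{candidate:in:proxy} so that within each proxy cell the candidate diagram is induced by the $O(\log^{d+1} n)$ proxy sites and hence has per-cell complexity $O(\log^{4d+4} n)$ via \lemref{complexity}, and multiply by the expected number of cells, with \lemref{candid} giving the extra $\log^d n$ factor for the \storage complexity. The only differences are cosmetic: the paper triangulates the proxy cells instead of tracking boundary terms $t_\VCell$, and it leaves implicit the good-event/worst-case-fallback bookkeeping that you spell out.
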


\begin{proof}
    Fix $k$ to be sufficiently large such that $k=\Theta(\log^d n)$.
    By \lemref{proxy:complexity} the expected complexity of the proxy
    diagram is $O(k^4 n \log n)$.  Triangulating each polygonal cell
    in the diagram does not increase its asymptotic complexity.
    \lemref{proxy:set:size} implies that, %with high probability, 
    the proxy set has size $\Owhp\pth{ k \log n }$ simultaneously for all the
    points in the plane.  Now, \lemref{candidate:in:proxy} implies
    that, with high probability, the proxy set contains the candidate
    set for any point in the plane.
    
    The resulting triangulation has $O(k^4 n \log n)$ faces, and
    inside each face all the sites that might appear in the candidate
    set are all present in the proxy set of this face.
    By \lemref{complexity}, the complexity of an $m$-site candidate
    diagram is $O(m^4)$.  Therefore the complexity of the candidate
    diagram per face is $\Owhp\pth{ (k \log n)^4 }$ %, with high probability
    (clipping the candidate diagram of these sites to the containing
    triangle does not increase the asymptotic complexity).
    Multiplying the number of faces, $O(k^4 n \log n)$, by the
    complexity of the arrangement within each face,
    $O\pth{ (k \log n)^4}$, yields the desired result.
    
    The bound on the \storage complexity follows readily from the
    bound on the size of the candidate set from \lemref{candid}.
    \DCGVer{{\qed}}
\end{proof}

%%%%%%%%%%%%%%%%%%%%%%%%%%%%%%%%%%%%%%%%%%%%%%%%%%%%%%%%%%%%%%% 
%%%%%%%%%%%%%%%%%%%%%%%%%%%%%%%%%%%%%%%%%%%%%%%%%%%%%%%%%%%%%%% 

\section*{Acknowledgments}

The authors would like to thank Pankaj Agarwal, Ken Clarkson, Nirman
Kumar, and Raimund Seidel for useful discussions related to this work.
We are also grateful to the anonymous SoCG reviewers for their helpful
comments.  \DCGVer{%
   Work on this paper was partially supported by NSF AF award
   CCF-1421231, and % Started June 2014
   CCF-1217462.  % Started June 2012
   A preliminary version of the paper appeared in the 31st
   International Symposium on Computational Geometry (SoCG 2015)
   \cite{chr-fpuvp-15}.%
}

\DCGVer{%
   \bibliographystyle{spmpsci}%
}
\NotDCGVer{
   \bibliographystyle{salpha}%
}

\bibliography{beer}%

\appendix

%%%%%%%%%%%%%%%%%%%%%%%%%%%%%%%%%%%%%%%%%%%%%%%%%%%%%%%%%%%%%%%%% 
%%%%%%%%%%%%%%%%%%%%%%%%%%%%%%%%%%%%%%%%%%%%%%%%%%%%%%%%%%%%%%%%% 

\section{An Integral Calculation}
\apndlab{h:p:minima}%

% \begin{figure}[t]\centering
%     \includegraphics[width=.2\linewidth]{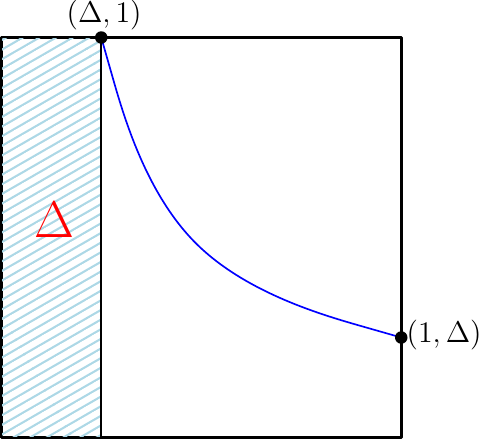}%
%    %
%     \caption{Graph of the function $xy=c$}
%     \figlab{F:grid}
% \end{figure}

\begin{lemma}
    \lemlab{integral}%
    Let $\VolUArea{d}{\Delta}$ be the total measure of the points
    $\pnt = (\pnt_1,\ldots, \pnt_d)$ in the hypercube $[0,1]^d$, such
    that $\pv\pth{\pnt} = \pnt_1 \pnt_2 \cdots \pnt_d \leq \Delta$.
    That is, $\VolUArea{d}{\Delta}$ is the measure of all points in
    hypercube with point volume at most $\Delta$.  Then
    \[
        \Big.%
        \VolUArea{d}{\Delta} = \sum_{i=0}^{d-1} \frac{\Delta}{i!} {
           \ln^{i} \frac{1}{\Delta} }.
    \]
\end{lemma}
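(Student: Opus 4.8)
The plan is to prove the formula by induction on the dimension $d$, peeling off the last coordinate and reducing to a one-dimensional integral.

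For the base case $d=1$, we simply have $\VolUArea{1}{\Delta} = \volX{\brc{\pnt \in [0,1] : \pnt \le \Delta}} = \Delta$ for $0 < \Delta \le 1$, which matches the single term $\tfrac{\Delta}{0!}\ln^{0}\tfrac{1}{\Delta} = \Delta$. For the inductive step, fix $d \ge 2$ and assume the formula holds in dimension $d-1$. Write a point of $[0,1]^d$ as $(\pnt', t)$ with $\pnt' \in [0,1]^{d-1}$ and $t \in [0,1]$; then $\pv(\pnt) \le \Delta$ is equivalent to $\pnt'_1 \cdots \pnt'_{d-1} \le \Delta/t$. By Fubini's theorem,
\[
\VolUArea{d}{\Delta} = \int_0^1 \volX{ \brc{\pnt' \in [0,1]^{d-1} : \pv(\pnt') \le \Delta/t} }\, \dInt t .
\]
Split this integral at $t = \Delta$. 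For $t \le \Delta$ we have $\Delta/t \ge 1$, so the inner set is all of $[0,1]^{d-1}$ and contributes $1$, and this range integrates to $\Delta$. For $t > \Delta$ the inner measure is exactly $\VolUArea{d-1}{\Delta/t}$, giving
\[
\VolUArea{d}{\Delta} = \Delta + \int_\Delta^1 \VolUArea{d-1}{\Delta/t}\, \dInt t .
\]

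Now substitute the inductive formula $\VolUArea{d-1}{\Delta/t} = \sum_{i=0}^{d-2} \tfrac{\Delta/t}{i!}\ln^{i}(t/\Delta)$, interchange the (finite) sum with the integral, and for each $i$ apply the change of variables $u = \ln(t/\Delta)$, so that $\dInt u = \dInt t / t$ and the limits become $0$ and $\ln(1/\Delta)$:
\[
\int_\Delta^1 \frac{1}{t}\ln^{i}\!\pth{\frac{t}{\Delta}}\, \dInt t = \int_0^{\ln(1/\Delta)} u^i\, \dInt u = \frac{1}{i+1}\ln^{i+1}\!\pth{\frac{1}{\Delta}} .
\]
Therefore $\int_\Delta^1 \VolUArea{d-1}{\Delta/t}\, \dInt t = \sum_{i=0}^{d-2} \tfrac{\Delta}{(i+1)!}\ln^{i+1}(1/\Delta) = \sum_{j=1}^{d-1} \tfrac{\Delta}{j!}\ln^{j}(1/\Delta)$, and adding the leading term $\Delta = \tfrac{\Delta}{0!}\ln^{0}(1/\Delta)$ yields $\VolUArea{d}{\Delta} = \sum_{j=0}^{d-1} \tfrac{\Delta}{j!}\ln^{j}(1/\Delta)$, completing the induction.

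There is no real obstacle here; the only points that need care are the case split at $t = \Delta$ (this is precisely where "$\Delta/t \ge 1$ forces the full subcube" takes over) and the index shift in the sum. One should also note that the statement implicitly assumes $0 < \Delta \le 1$: at $\Delta = 1$ every term with $i \ge 1$ vanishes and the right-hand side equals $1 = \volX{[0,1]^d}$, consistent with the base case, and for $\Delta > 1$ the set is the whole cube and the claimed closed form no longer applies.
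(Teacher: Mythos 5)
Your proof is correct and follows essentially the same route as the paper: induction on $d$, peeling off the last coordinate, splitting the integral at $\Delta$ (the region where the whole subcube is counted), and evaluating $\int_\Delta^1 \frac{\Delta}{t}\ln^i\frac{t}{\Delta}\,\dInt t = \frac{\Delta}{i+1}\ln^{i+1}\frac{1}{\Delta}$. Your explicit justification of the split via Fubini and the remark about the range $0 < \Delta \le 1$ are nice touches, but the argument is the same as the paper's.
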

\begin{proof}
    The claim follows by tedious but relatively standard
    calculations. As such, the proof is included for the sake of
    completeness.

    \parpic[r]{%
       \includegraphics[width=.2\linewidth]{figs/integral}%
    }

    \InHsienRadical{\picskip{6}\linenumbers}
    \noindent The case for $d=1$ is trivial.  Consider the $d=2$ case.
    Here the points whose point volume equals $\Delta$ are defined
    by the curve $xy = \Delta$.  This curve intersects the unit square
    at the point $(\Delta,1)$.  As $\VolUArea{d}{\Delta}$ is the total
    volume under this curve in the unit square we have that
    \begin{align*}
        \VolUArea{2}{\Delta}%
        =%
        \Delta + \int_{x=\Delta}^1 \frac{\Delta}{x} \dInt x%
        =% 
        \Delta + \Delta \ln \frac{1}{\Delta}.
    \end{align*}
    In general, we have
    \begin{align*}
        \frac{1}{(d-1)!} \int_{x=\Delta}^1 \frac{\Delta}{x} \ln^{d-1}
        \frac{x}{\Delta} \dInt x%
        =%
        \frac{\Delta}{(d-1)!} \pbrcx{ \frac{1}{d} \ln^{d}
           \frac{x}{\Delta} }_{x=\Delta}^1 %
        =%
        \frac{\Delta}{d!}  \ln^{d} \frac{1}{\Delta}. %
    \end{align*}
    Now assume inductively that
    \begin{align*}
        \VolUArea{d-1}{\Delta}%
        =%
        \sum_{i=0}^{d-2} \frac{1}{i!} \Delta \ln^{i} \frac{1}{\Delta},
    \end{align*}
    then we have
    \begin{align*}
      \VolUArea{d}{\Delta}%
      &=%
        \Delta + \int_{x_d=\Delta}^1
        \VolUArea{d-1}{\frac{\Delta}{x_d}} \dInt x_d%
        =%
        \Delta + \int_{x_d=\Delta}^1 \pth{ \sum_{i=0}^{d-2} \frac{
        \Delta}{i! x_d} \ln^{i} \frac{x_d}{\Delta} }
        \dInt x_d\\
      &=%
        \Delta + \sum_{i=0}^{d-2} \frac{1}{i!} \pth{
        \int_{x_d=\Delta}^1 \frac{\Delta}{x_d} \ln^{i}
        \frac{x_d}{\Delta} \dInt x_d}%
        =%
        \Delta + \sum_{i=1}^{d-1} \frac{\Delta}{i!} { \ln^{i}
        \frac{1}{\Delta} } =%
        \sum_{i=0}^{d-1} \frac{\Delta}{i!} { \ln^{i} \frac{1}{\Delta}
        }. 
        \DCGVer{{\tag*\qed}}
    \end{align*}
\end{proof}%

\end{document}